\UseRawInputEncoding
\documentclass[11pt,reqno]{amsart}
\textheight 23truecm \textwidth 17truecm \setlength{\topmargin}{-1.0
	cm} \setlength{\oddsidemargin}{-0.5 cm}
\setlength{\evensidemargin}{-0.5cm} \pagestyle{plain}
\allowdisplaybreaks[4]
\usepackage{graphicx} 
\usepackage{subfigure}
\usepackage{epsfig}
\usepackage{amssymb}
\usepackage{amsmath,xypic}
\usepackage{cite,color}
\numberwithin{equation}{section}
\newtheorem{theorem}{Theorem}[section]
\newtheorem{definition}[theorem]{Definition}
\newtheorem{proposition}[theorem]{Proposition}
\newtheorem{corollary}[theorem]{Corollary}

\newtheorem{lemma}[theorem]{Lemma}
\newtheorem{remark}[theorem]{Remark}
\newcommand{\be}{\begin{equation}}
\newcommand{\ee}{\end{equation}}
\newcommand{\bea}{\begin{eqnarray}}
\newcommand{\eea}{\end{eqnarray}}
\newcommand{\ba}{\begin{array}}
	\newcommand{\ea}{\end{array}}
\newcommand{\bean}{\begin{eqnarray*}}
	\newcommand{\eean}{\end{eqnarray*}}

\begin{document}
\title{ Polynomial tau-functions of the symplectic KP, orthogonal KP and BUC hierarchies}

\author{ Denghui Li and Zhaowen Yan$^*$}
\dedicatory {School of Mathematical Sciences, Inner Mongolia University, \\ Hohhot, Inner Mongolia 010021,  P.\ R.\ China}
\thanks{*Corresponding author. Email: yanzw@imu.edu.cn}

\begin{abstract}

 This paper is concerned with the construction of the polynomial tau-functions of the symplectic KP (SKP), orthogonal KP (OKP) hierarchies and universal character  hierarchy of B-type (BUC hierarchy),  which are  proved as  zero modes of certain combinations of the generating functions. By applying the strategy of carrying out the action of the quantum fields on vacuum vector, the generating functions for symplectic Schur function, orthogonal Schur function and generalized $Q$-function have been presented. The remarkable feature is that polynomial tau-functions are the coefficients of certain family of generating functions. Furthermore, in terms of  the Vandermonde-like identity and properties of Pfaffian, it is showed that the polynomial tau-functions of the SKP, OKP and BUC hierarchies can be written as determinant and Pfaffian forms, respectively. In addition, the soliton solutions of the SKP and OKP hierarchies have been discussed.

\textbf{Keywords}: polynomial tau-functions; symplectic/orthogonal Schur function;  generalized $Q$-function; generating functions; soliton solutions\\
\textbf{ Mathematics Subject Classifications (2000)}: 17B80, 35Q55, 37K10
\end{abstract}
\maketitle
\tableofcontents
\section{Introduction}

Symmetric functions are the characters of the irreducible highest weight representation of the classical groups \cite{Weyl1946}-\cite{Macdonald1995}, which play a significant role in mathematical physics especially in the theory of integrable systems \cite{Jing1991}-\cite{BEsymmetricgroup2013}.  Schur functions and Schur $Q$-functions are the basic symmetric functions which are the solutions of differential equations in  Kadomtsev-Petviashvili (KP) and Kadomtsev-Petviashvili sub-hierarchy of B-type (BKP) hierarchies, respectively. In the famous work of the Kyoto School, the authors \cite{Sato1981}-\cite{J5} investigated the core connection of the infinite dimensional Lie algebra and their highest weight vectors to the integrable hierarchies  involving KP, BKP,  discrete KP (DKP), modified KP (MKP) and $s$-component KP hierarchies.

Koike \cite{Koike1989} introduced a polynomial with a pair of partitions called the universal character, which is  a generalization of Schur function. The universal character (UC) hierarchy,  proposed by Tsuda \cite{Tsuda2004}, as an infinite-dimensional integrable system satisfied by  the universal character. It can be regarded as a extension of the KP hierarchy.  Furthermore,  in the subsequent paper \cite{Tsuda2005}-\cite{Tsuda2012}, Tsuda presented the relations between the $q$-Painlev\'{e} equations and  the lattice $q$-UC hierarchy which are the extended $q$-KP  and $q$-UC hierarchies. Based upon this,  the  structure and properties of  the Painlev\'{e} equations and their higher order analogues have been developed, such as rational solutions, Lax formalism, bilinear relations for $\tau$-functions and Weyl group symmetry.  Wang et.al \cite{Wang2019} discussed the algebra of universal characters and the phase model of strongly correlated bosons. Recently, by means of the vertex operator realization of symplectic and orthogonal Schur functions \cite{Baker1996}, the authors \cite{Wang2020,Wang2021} established  generalizations of symplectic KP (SKP) and orthogonal KP (OKP) hierarchies called the symplectic  and orthogonal universal character hierarchies corresponding to symplectic and orthogonal universal characters, respectively. Ogawa \cite{Ogawa2009} defined a generalized $Q$-function expressed by the Pfaffian and constructed an integrable UC hierarchy of B-type (BUC hierarchy) characterized by the generalized $Q$-function. Lately, in the paper \cite{Li2019,Li2021}, the author generalized the theory of BUC hierarchy to a coupled and plethystic cases, which can derive coupled  and plethystic  infinite order nonlinear PDEs.

All the polynomial tau-functions of the KP hierarchy can be expressed as a disjoint union of Schubert cells and the Schur polynomial is the center of the Schubert cell. You \cite{You1989, You1991} showed polynomial tau-functions of the BKP, DKP and modified DKP  (MDKP) hierarchies  all include the $Q$-Schur polynomials which are the centers of Schubert cells of the infinite-dimensional orthogonal Grassmann manifold. Kac et al. \cite{Kac2018} constructed all the  polynomial tau-functions of the KP and MKP hierarchies from Schur polynomials by some shift of arguments. Moreover, the polynomial tau-functions of the BKP, DKP, and MDKP hierarchies have been well discussed in terms of boson-fermion correspondence \cite{Kac2019}. Then by means of the $s$-comonent boson-fermion correspondence, Kac et al. \cite{Kac2019M} have studied the polynomial tau-functions of the multi-component KP hierarchy.  Based on the quantum fields, they also develop the twisted quantum fields presentation of  Hall-Littlewood polynomials and derived a novel deformed boson-fermion correspondence \cite{Vecochea2020}. Rozhkovskaya \cite{Rozhkovskaya2019}  proved multiparameter Schur $Q$-functions are tau-functions of the BKP hierarchy. Besides, in the frame work of  quantum fields presentation and generating functions of the symmetric functions, recent study has shown that the polynomial tau-functions of the KP, BKP and $s$-component KP hierarchies can be expressed as the zero-modes of certain combinations of generating functions \cite{Kac2021}. Recently, the polynomial tau-functions of the UC and  multi-component UC hierarchies have also been analyzed \cite{LiCZ2022}. To our best knowledge, there is no existing references on the study on the polynomial tau-functions of the SKP, OKP and BUC hierarchies. Based upon the facts, in the view point of the quantum field presentation of the symmetric functions, we will concentrate on the construction of
the exact solutions of these integrable hierarchies including polynomial-type and soliton-type solutions. We shall prove the polynomial tau-functions of SKP, OKP and BUC hierarchies can be regarded as  zero modes of certain combinatorial generating functions.

The present paper is organized as follows. In Section 2, we begin with a review of the elementary, complete symmetric functions, power sums and Schur polynomial. Section 3 is devoted to construction of quantum fields  of symplectic Schur functions and the polynomial tau-functions of the SKP hierarchy. Meanwhile, the $n$-soliton solutions of this integrable system are derived. In section 4, the generating functions for the orthogonal Schur functions are investigated by the action of the operators on the vacuum vector $1$.
In terms of quantum fields presentation, we also study the polynomial tau-functions and $n$-soliton solutions of the SKP hierarchy. The fact that the  polynomial tau-functions of the BUC hierarchy are the coefficients of certain family generating functions is described in Section 5. The last Section are conclusions and discussions.

\section{Preliminaries on symmetric functions}
In this section, we mainly retrospect some basic facts and properties about symmetric functions.

Let $\Lambda(\mathbf{x})$ be the ring of symmetric functions in variables $\mathbf{x}=(x_{1},x_{2}\ldots)$. The $r$th elementary symmetric function $e_{r}$, complete symmetric function $h_{r}$ and  power sum $p_{r}$ are defined by (cf. \cite{Macdonald1995})
\begin{eqnarray}\
&&e_{r}(\mathbf{x})=\sum_{i_{1}<i_{2}<\ldots<i_{r}<\infty}x_{i_{1}}x_{i_{2}}\ldots x_{i_{r}}, \quad  for \quad r\geq1, \notag\\
&&h_{r}(\mathbf{x})=\sum_{i_{1}\leq i_{2}\leq\ldots\leq i_{r}\leq \infty}x_{i_{1}}x_{i_{2}}\ldots x_{i_{r}}, \quad  for \quad r\geq1, \notag\\
&&p_{r}(\mathbf{x})=\sum_{i}x_{i}^{r}.
\end{eqnarray}
It is universally known that $h_{r}(\mathbf{x})=e_{r}(\mathbf{x})=p_{r}(\mathbf{x})=0$ for $r<0$ and $h_{0}=e_{0}=p_{0}=1$. The generating functions for these symmetric functions are
\begin{eqnarray} \label{gf1}
E(u)&=&\sum_{k\geq0}e_{k}(\mathbf{x})u^{k}=\prod_{i\geq1}\left(1+x_{i}u\right)=\exp\left(-\sum_{n\geq1}\frac{(-1)^{n}p_{n}}{n}u^{n}\right), \notag\\
H(u)&=&\sum_{k\geq0}h_{k}(\mathbf{x})u^{k}=\prod_{i\geq1}\frac{1}{1-x_{i}u}=\exp\left(\sum_{n\geq1}\frac{p_{n}}{n}u^{n}\right), \notag\\
P(u)&=&\sum_{k\geq1}p_{k}(\mathbf{x})u^{k-1}=\frac{H^{\prime}(u)}{H(u)}.
\end{eqnarray}
It is easy to obtain that $H(u)E(-u)=1$.

The polynomials $S_{k}(t_{1},t_{2},\ldots)$, $k\in \mathbb{Z}$, is determined by the generating function
\begin{equation}\ \label{Schur Taylor}
\sum_{k\in \mathbb{Z}}S_{k}(t_{1},t_{2},\ldots)u^{k}=\exp\left(\sum_{j=1}^{\infty}t_{j}u^{j}\right).
\end{equation}
The Schur polynomial $S_{\lambda}(t_{1},t_{2},\ldots)$ can be expressed as (cf. \cite{Macdonald1995})
\begin{equation}\
S_{\lambda}(t_{1},t_{2},\ldots)=\det[S_{\lambda_{i}-i+j}(t_{1},t_{2},\ldots)]_{1\leq i,j\leq l},
\end{equation}
where  $\lambda=(\lambda_{1},\ldots,\lambda_{l})$ is a partition.

For each symmetric function $f\in\Lambda$, let $f^{\bot}:\Lambda\rightarrow\Lambda$ be the adjoint of multiplication by $f$
\begin{equation}\
\langle f^{\bot}g,\omega\rangle=\langle g,f\omega\rangle,\quad g,f,\omega\in\Lambda.
\end{equation}
Let us consider generating functions of the adjoint operators
\begin{eqnarray}\
&&E^{\bot}(u)=\sum_{k\geq0}\frac{e^{\bot}_{k}}{u^{k}},\quad  H^{\bot}(u)=\sum_{k\geq0}\frac{h^{\bot}_{k}}{u^{k}}.
\end{eqnarray}
It is straightforward to show that
\begin{eqnarray}\ \label{jiont}
&&E^{\bot}(u)=\exp\left(-\sum_{k\geq1}(-1)^{k}\frac{\partial}{\partial{p_{k}}}\frac{1}{u^{k}}\right),\quad  H^{\bot}(u)=\exp\left(\sum_{k\geq1}\frac{\partial}{\partial{p_{k}}}\frac{1}{u^{k}}\right).
\end{eqnarray}
\begin{proposition}\label{generating relation}
The generating functions satisfy the following relations  (cf. \cite{Macdonald1995})
\begin{eqnarray}
&&\left(1-\frac{v}{u}\right)E^{\bot}(u)E(v)=E(v)E^{\bot}(u),\notag\\
&&\left(1-\frac{v}{u}\right)H^{\bot}(u)H(v)=H(v)H^{\bot}(u),\notag\\
&&H^{\bot}(u)E(v)=\left(1+\frac{v}{u}\right)E(v)H^{\bot}(u),\notag\\
&&E^{\bot}(u)H(v)=\left(1+\frac{v}{u}\right)H(v)E^{\bot}(u).
\end{eqnarray}
\end{proposition}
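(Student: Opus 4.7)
The four identities are classic vertex operator commutation relations and they all yield to the same strategy: express both sides via the exponential realisations given in \eqref{jiont} and (\ref{gf1}), then commute the exponentials past each other using the bosonic Baker-Campbell-Hausdorff shortcut
\[
e^{A}e^{B} \;=\; e^{[A,B]}\,e^{B}e^{A} \qquad \text{whenever } [A,B] \text{ is a scalar.}
\]
The basic ingredient is the canonical commutation relation between the multiplication operator $p_{n}$ and the derivation $\partial/\partial p_{k}$, namely $[\partial/\partial p_{k},\,p_{n}]=\delta_{k,n}$, which is the only commutator we actually need.

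My plan is to treat the second identity first as a template. Write $H^{\bot}(u)=e^{A}$ and $H(v)=e^{B}$ with
\[
A=\sum_{k\ge 1}\frac{1}{u^{k}}\frac{\partial}{\partial p_{k}},\qquad B=\sum_{n\ge 1}\frac{v^{n}}{n}\,p_{n}.
\]
Then $[A,B]=\sum_{k\ge 1}\frac{v^{k}}{ku^{k}}=-\log(1-v/u)$ is a scalar (a formal power series in $v/u$), so the BCH shortcut gives $H^{\bot}(u)H(v)=(1-v/u)^{-1}\,H(v)H^{\bot}(u)$, which is the second claim after multiplying by $1-v/u$. The first identity is handled by exactly the same calculation with $A$ and $B$ replaced by their ``sign-twisted'' analogues from $E^{\bot}(u)$ and $E(v)$; the two factors of $(-1)^{k+1}$ multiply to $1$, so the commutator is again $-\log(1-v/u)$.

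For the third and fourth identities I would do the same game with a sign mismatch. For $H^{\bot}(u)E(v)$ one of the operators carries $(-1)^{k+1}$ and the other does not, so the commutator becomes $\sum_{k\ge 1}\frac{(-1)^{k+1}}{k}(v/u)^{k}=\log(1+v/u)$, giving the factor $1+v/u$; the last identity is symmetric and follows either by swapping the roles of $p_{n}$ and $\partial/\partial p_{n}$ (equivalently $u$ and $v$) or by a direct rerun of the same computation.

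I do not expect any real obstacle: the only thing to be careful about is the bookkeeping of signs in the four cases and the fact that $[A,B]$ must be interpreted as a formal power series in $v/u$, so the identities are statements of formal series in $u^{-1}$ and $v$ rather than analytic equalities. One could also cite Macdonald \cite{Macdonald1995} for the statements of the first two relations and derive the third and fourth from them by using $H(v)E(-v)=1$ (equivalently $E(v)=H(-v)^{-1}$), which gives an alternative routing that avoids redoing the commutator twice.
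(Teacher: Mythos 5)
The paper itself offers no proof of this proposition: it is stated as a known fact with a citation to Macdonald, so there is nothing to compare line by line. Your argument is correct and is the standard way to establish these relations. Writing each generating function in the exponential form \eqref{gf1}, \eqref{jiont} and using $e^{A}e^{B}=e^{[A,B]}e^{B}e^{A}$ for central $[A,B]$, the four commutators come out as you say: $[A,B]=\sum_{k\ge 1}\tfrac{1}{k}(v/u)^{k}=-\log(1-v/u)$ in the $H^{\bot}H$ case, the two sign factors $(-1)^{k+1}$ cancel in the $E^{\bot}E$ case to give the same scalar, and the mixed cases produce $\sum_{k\ge 1}\tfrac{(-1)^{k+1}}{k}(v/u)^{k}=\log(1+v/u)$, yielding the factors $(1-v/u)^{-1}$ and $(1+v/u)$ respectively. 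Your two caveats are also the right ones: the identities are equalities of formal series in $v/u$, and the only input is $[\partial/\partial p_{k},p_{n}]=\delta_{k,n}$. The alternative routing via $E(v)=H(-v)^{-1}$ is likewise sound and would let you deduce the mixed relations from the first two. No gaps.
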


\section{Polynomial tau-functions and $n$-soliton solutions of the SKP  hierarchy}

By means of charged free fermions, we devote to discussing structures and properties of polynomial tau-functions for the SKP hierarchy. Furthermore, the generating functions for polynomial tau-functions of the SKP hierarchy can be obtained by acting the quantum fields  of symplectic Schur functions on the bosonic Fock space $\mathcal{B}^{m}$. There is an interesting conclusion that the polynomial tau-functions of the SKP hierarchy are the coefficients of certain family of generating functions. Finally, the soliton-type solutions of the SKP hierarchy have been derived.

\subsection{Quantum fields presentation of symplectic Schur functions and the SKP hierarchy}

The symmetric polynomial ring  $\Lambda$ : $\Lambda=\mathbb{C}[e_1,e_2,\ldots]=\mathbb{C}[h_1,h_2,\ldots]=\mathbb{C}[p_1,p_2,\ldots]$ can be generated by elementary, complete symmetric functions and power sums, respectively. Introduce the bosonic Fock space $\mathcal{B}=\mathbb{C}[z,z^{-1}]\bigotimes \Lambda $, it is decomposed to obtain the charged graded space
\begin{eqnarray}
\mathcal{B}=\bigoplus\limits_{m\in\mathbb{Z}}\mathcal{B}^m, \ \ \text{where} \ \ \mathcal{B}^{m}=z^{m}\cdot\mathbb{C}[p_{1},p_{2},\ldots]=z^{m}\Lambda.
\end{eqnarray}

Let $R(u)$ act on the elements of the form $z^{m}f,\ f\in \Lambda,\ m\in \mathbb{Z}$,
$R(u):\mathcal{B}\rightarrow\mathcal{B} $  is defined as (cf. \cite{Kac2013})
\begin{eqnarray}\ \label{RD}
&&R(u)\left(z^{m}f\mathbf{(x)}\right)=z^{m+1}u^{m+1}f.
\end{eqnarray}
Then it leads to
\begin{eqnarray}
&&R^{-1}(u)\left(z^{m}f\mathbf{(x)}\right)=z^{m-1}u^{-m}f.
\end{eqnarray}
Operators $R^{\pm1}(u)$ map the grading of the boson Fock space $\mathcal{B}^{(m)}$ into $\mathcal{B}^{(m\pm1)}$.

Define the quantum fields $\psi^{Sp,\pm}(u)$ \cite{Li2022}
\begin{eqnarray}\ \label{symplectic Schur vertex operators}
&&\psi^{Sp,+}(u)=u^{-1}R(u)H(u)E^{\bot}(-u)E^{\bot}(-\frac{1}{u})=\sum\limits_{k\in\mathbb{Z}+\frac{1}{2}}\psi^{Sp,+}_{k}
u^{-k-\frac{1}{2}}, \notag\\
&&\psi^{Sp,-}(u)=(1-u^{2})R^{-1}(u)E(-u)H^{\bot}(u)H^{\bot}(\frac{1}{u})=\sum\limits_{k\in\mathbb{Z}+\frac{1}{2}}\psi^{Sp,-}_{k}
u^{-k-\frac{1}{2}} .
\end{eqnarray}

\begin{proposition}
 Quantum fields $\psi^{Sp,+}(u),\psi^{Sp,-}(u)$ satisfy the anticommutation relations
\begin{eqnarray}\ \label{symplectic Schur relations}
&&\psi^{Sp,\pm}(u)\psi^{Sp,\pm}(v)+\psi^{Sp,\pm}(v)\psi^{Sp,\pm}(u)=0, \notag\\
&&\psi^{Sp,+}(u)\psi^{Sp,-}(v)+\psi^{Sp,-}(v)\psi^{Sp,+}(u)=\delta(u,v),
\end{eqnarray}
where $\delta(u,v)=\sum\limits_{{k,m\in\mathbb{Z}}\atop{k+m=-1}}u^{k}v^{m}$ is the delta-distribution.

From Eq.(\ref{symplectic Schur vertex operators}), Eq.(\ref{symplectic Schur relations}) is equivalent to the relations with charged free fermions
\begin{eqnarray}\ \label{symplectic fenliang}
&&\psi^{Sp,\pm}_{k}\psi^{Sp,\pm}_{l}+\psi^{Sp,\pm}_{l}\psi^{Sp,\pm}_{k}=0,\notag\\
&&\psi^{Sp,+}_{k}\psi^{Sp,-}_{l}+\psi^{Sp,-}_{l}\psi^{Sp,+}_{k}=\delta_{k,-l}.
\end{eqnarray}
\end{proposition}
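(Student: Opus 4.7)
The plan is to verify the three anticommutation relations directly from the definitions in (\ref{symplectic Schur vertex operators}) by moving all factors into normal order using Proposition \ref{generating relation}, and then to pass to the mode form by extracting coefficients of $u^{-k-1/2}v^{-l-1/2}$.

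As a preliminary, a direct calculation on $z^{m}f$ shows $R(u)R(v)=(u/v)R(v)R(u)$ and $R(u)R^{-1}(v)=(v/u)R^{-1}(v)R(u)$, while $R^{\pm 1}(u)$ commutes with every $H(v), E(v), H^{\bot}(v), E^{\bot}(v)$ because those act only on the polynomial part. For the two same-sign relations, after collecting the shifts $R^{\pm 1}(u)R^{\pm 1}(v)$ one invokes the first two identities of Proposition \ref{generating relation} to push the $E^{\bot}$-block of $\psi^{Sp,+}(v)$ past the $H$-block of $\psi^{Sp,+}(u)$, and analogously $H^{\bot}$ past $E$ for $\psi^{Sp,-}$. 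Using $[H(u),H(v)]=0$ and $[E^{\bot}(u),E^{\bot}(v)]=0$, the result is a rational scalar times a manifestly $u\leftrightarrow v$ symmetric normal-ordered core; the scalar turns out to be antisymmetric under $u\leftrightarrow v$, and hence the anticommutator vanishes.

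The mixed relation is the main content. Normal-ordering $\psi^{Sp,+}(u)\psi^{Sp,-}(v)$ requires moving $E(-v)$ to the left of $E^{\bot}(-u)E^{\bot}(-1/u)$, which yields the factor $(1-v/u)^{-1}(1-uv)^{-1}$ by Proposition \ref{generating relation}; normal-ordering $\psi^{Sp,-}(v)\psi^{Sp,+}(u)$ requires moving $H(u)$ past $H^{\bot}(v)H^{\bot}(1/v)$, which yields $(1-u/v)^{-1}(1-uv)^{-1}$. Combining with the prefactor $u^{-1}(1-v^{2})$ and the shift scalar, both products reduce to the same normal-ordered core multiplied by the same rational function of $u,v$, but expanded in the opposite regions $|u|>|v|$ versus $|v|>|u|$. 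The role of the prefactor $(1-v^{2})$ in $\psi^{Sp,-}(v)$ is precisely to cancel the residue at the spurious pole $uv=1$ arising from the $E^{\bot}(-1/u)$--$H^{\bot}(1/v)$ pairings; once that cancellation is verified, the only surviving singularity is the simple pole at $u=v$ with residue $1$, and the standard identity $\iota_{u,v}-\iota_{v,u}$ applied to $(u-v)^{-1}$ produces exactly $\delta(u,v)$.

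Finally, the mode form (\ref{symplectic fenliang}) follows by extracting the coefficient of $u^{-k-1/2}v^{-l-1/2}$ on both sides of each of the three distributional identities, using $\delta(u,v)=\sum_{k+m=-1}u^{k}v^{m}$ to read off $\delta_{k,-l}$. The main technical obstacle is the cancellation of the spurious $uv=1$ pole in the cross relation: one must simultaneously track the shift-operator factor, the explicit $u^{-1}(1-v^{2})$ prefactor, and all four rational factors coming from Proposition \ref{generating relation}, and verify that every pole away from $u=v$ cancels. Once this bookkeeping is carried out, the $\delta$-distribution on the right-hand side emerges by the familiar vertex-algebra mechanism of comparing two expansions of a single rational function.
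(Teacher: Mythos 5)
The paper itself offers no proof of this proposition — the fields are imported from \cite{Li2022} and the relations are simply asserted — so there is nothing internal to compare against; your normal-ordering argument is the standard and correct route, and the overall structure (commute the $R^{\pm1}$ shifts, normal-order via Proposition \ref{generating relation}, compare the two expansions of one rational function) checks out. Two points deserve tightening. First, your description of the $uv=1$ pole is slightly off: the $(1-uv)^{-1}$ factors come from the $E^{\bot}(-1/u)$--$E(-v)$ and $H^{\bot}(1/v)$--$H(u)$ exchanges (not from pairing the two adjoint blocks, which commute), and both orderings expand $(1-uv)^{-1}$ as $\sum_{k\ge0}(uv)^{k}$, so that pole contributes identically to both terms and cancels in the difference of expansions automatically; the prefactor $(1-v^{2})$ is not cancelling a residue at $uv=1$ but rather normalizing the residue at $u=v$ of $\frac{v(1-v^{2})}{u(u-v)(1-uv)}$ to exactly $1$, since $(1-uv)|_{u=v}=1-v^{2}$. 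Second, concluding that the anticommutator equals $\delta(u,v)$ rather than $\delta(u,v)$ times an operator requires the additional (routine) observation that the common normal-ordered core, restricted to $u=v$ via $g(u,v)\delta(u,v)=g(v,v)\delta(u,v)$, collapses to the identity because $H(v)E(-v)=1$, its adjoint $E^{\bot}(-v)H^{\bot}(v)=1$, $E^{\bot}(-1/v)H^{\bot}(1/v)=1$, and $R^{-1}(v)R(v)=\mathrm{id}$; you should state this step explicitly. With these two repairs the argument is complete, and the passage to the mode relations by extracting coefficients of $u^{-k-1/2}v^{-l-1/2}$ is immediate as you say.
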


\begin{remark} \label{SKP remark}
From Eqs. (\ref{gf1}) and (\ref{jiont}), we easily get the bosonic form of the quantum fields $\psi^{Sp,\pm}(u)$:
\begin{eqnarray}\
&&\psi^{Sp,+}(u)=u^{-1}R(u)\exp\left(\sum\limits_{n\geq1}\frac{p_{n}}{n}u^{n}\right)\exp\left(-\sum\limits_{n\geq1}\frac{\partial}{\partial p_{n}}\frac{1}{u^{n}}\right)\exp\left(-\sum\limits_{n\geq1}\frac{\partial}{\partial p_{n}}u^{n}\right), \notag\\
&&\psi^{Sp,-}(u)=(1-u^{2})R^{-1}(u)\exp\left(-\sum\limits_{n\geq1}\frac{p_{n}}{n}u^{n}\right)\exp\left(\sum\limits_{n\geq1}\frac{\partial}{\partial p_{n}}\frac{1}{u^{n}}\right)\exp\left(\sum\limits_{n\geq1}\frac{\partial}{\partial p_{n}}u^{n}\right).
\end{eqnarray}
Hence one has $\psi^{Sp,+}_{i}(z^{m})=0$ if $i>-m-\frac{1}{2}$ and $\psi^{Sp,-}_{i}(z^{m})=0$ if $i>m-\frac{1}{2}$.
\end{remark}

\begin{definition}
For an unknown function $\tau=\tau(x)$, the bilinear equation
\begin{eqnarray}\ \label{SKP bilinear equation}
&&\widehat{\Omega}(\tau\otimes\tau)=0,
\end{eqnarray}
is called the SKP hierarchy, where
\begin{eqnarray}\
&&\widehat{\Omega}=\sum\limits_{l\in\mathbb{Z}+\frac{1}{2}}\psi_{l}^{Sp,+}\otimes\psi_{-l}^{Sp,-}.
\end{eqnarray}
\end{definition}

\begin{lemma} \label{SKP 1}
Let $\widehat{X}=\sum\limits_{i>N}C_{i}\psi_{i}^{Sp,+}$, where $C_{i}\in\mathbb{C}, N\in\mathbb{Z}$. Then $\widehat{X}^{2}=0$.
\end{lemma}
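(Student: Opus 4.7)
The plan is to reduce $\widehat{X}^2=0$ to the anticommutation relation
$\psi^{Sp,+}_k\psi^{Sp,+}_l+\psi^{Sp,+}_l\psi^{Sp,+}_k=0$ from Eq.~(\ref{symplectic fenliang}), once we have justified that the (a priori infinite) sum defining $\widehat X$ acts as a well-defined linear operator on the Fock space $\mathcal{B}$. First I would argue well-definedness: fix $v=z^m f(\mathbf{x})\in\mathcal{B}$. From the bosonic form in Remark \ref{SKP remark}, the factors $\exp(-\sum_{n\geq 1}\partial_{p_n} u^n)$ and $\exp(-\sum_{n\geq 1}\partial_{p_n} u^{-n})$ applied to the polynomial $f$ produce only finitely many powers of $u$ and $u^{-1}$ respectively (each $\partial_{p_n}$ eventually annihilates $f$), while $R(u)$ contributes a single monomial $u^{m+1}z^{m+1}$ and $H(u)=\exp(\sum p_n u^n/n)$ only non-negative powers of $u$. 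Hence $\psi^{Sp,+}(u)v$ has a pole of bounded order in $u$; equivalently, there exists $L=L(v)$ with $\psi^{Sp,+}_i v=0$ for all $i>L$. Therefore $\widehat X v=\sum_{N<i\le L} C_i\psi^{Sp,+}_i v$ is a finite sum, and iterating the same bound shows $\widehat{X}^2 v$ is also a finite sum.

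Once both expressions are genuinely finite on every vector, I can freely expand
\[
\widehat{X}^2 v \;=\; \sum_{i,j>N} C_i C_j\,\psi^{Sp,+}_i\psi^{Sp,+}_j\,v.
\]
The diagonal terms $i=j$ vanish because the anticommutation relation with $k=l=i$ gives $(\psi^{Sp,+}_i)^2=0$. For the off-diagonal terms, grouping the pair $(i,j)$ with $(j,i)$ and applying $\psi^{Sp,+}_i\psi^{Sp,+}_j=-\psi^{Sp,+}_j\psi^{Sp,+}_i$ produces cancellation in each pair. Summing over all pairs yields $\widehat{X}^2 v=0$ for every $v\in\mathcal{B}$, so $\widehat{X}^2=0$ as an operator.

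The only genuine subtlety is the first step: without the local-finiteness observation, rearranging an infinite double sum of non-commuting operators is illegitimate. This is what I expect to be the main (mild) obstacle, and it is resolved by the bosonic formula in Remark \ref{SKP remark} together with the fact that $N$ bounds the indices from below, so the "dangerous" tail $i\to+\infty$ is precisely the direction in which $\psi^{Sp,+}_i$ annihilates every fixed vector.
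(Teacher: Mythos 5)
Your proof is correct and follows essentially the same route as the paper: the core step in both is that the symmetric double sum $\sum_{i,j>N}C_iC_j\psi^{Sp,+}_i\psi^{Sp,+}_j$ vanishes by the anticommutation relation (\ref{symplectic fenliang}). The only difference is that you additionally justify local finiteness of the action of $\widehat{X}$ on each vector via Remark \ref{SKP remark}, a point the paper's one-line proof leaves implicit.
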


\begin{proof}
Due to $\psi^{Sp,+}_{k}\psi^{Sp,+}_{l}+\psi^{Sp,+}_{l}\psi^{Sp,+}_{k}=0$, $i,k\in\mathbb{Z}+\frac{1}{2}$, one immediately has
\begin{eqnarray}\
\widehat{X}^{2}=\widehat{X}\cdot \widehat{X}=\sum\limits_{l>N}C_{l}\psi_{l}^{Sp,+}\cdot \sum\limits_{k>N}C_{k}\psi_{k}^{Sp,+}=\sum\limits_{l>N}\sum\limits_{k>N}C_{l}C_{k}
\psi_{l}^{Sp,+}\psi_{k}^{Sp,+}=0.
\end{eqnarray}
\end{proof}

\begin{lemma} \label{SKP 2}
Let $\widehat{X}=\sum\limits_{i>N}C_{i}\psi_{i}^{Sp,+}$, where $C_{i}\in\mathbb{C}, N\in\mathbb{Z}$. Then $\widehat{\Omega}(\widehat{X}\otimes \widehat{X})=(\widehat{X}\otimes \widehat{X})\widehat{\Omega}$.
\end{lemma}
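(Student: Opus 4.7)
The plan is to compute $\widehat{\Omega}(\widehat{X}\otimes\widehat{X})$ directly by pushing both copies of $\widehat{X}$ through the tensor factors of $\widehat{\Omega}$ using the anticommutation relations in Proposition 3.2 (equivalently Eq.~\eqref{symplectic fenliang}), and then recognize that the unwanted ``cross term'' collapses to $\widehat{X}^{2}\otimes 1$, which vanishes by Lemma \ref{SKP 1}. The cleanest way to organize this is to extend the summation defining $\widehat{X}$ to all $l\in\mathbb{Z}+\tfrac12$ by setting $\widetilde{C}_{l}=C_{l}$ for $l>N$ and $\widetilde{C}_{l}=0$ otherwise, so that $\widehat{X}=\sum_{l\in\mathbb{Z}+\frac{1}{2}}\widetilde{C}_{l}\,\psi^{Sp,+}_{l}$ and the anticommutator with $\psi^{Sp,-}_{-l}$ produces only a $\delta$-term proportional to $\widetilde{C}_{l}$.

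First I would rewrite the left-hand side as
\begin{equation*}
\widehat{\Omega}(\widehat{X}\otimes\widehat{X})
=\sum_{l\in\mathbb{Z}+\frac{1}{2}}\bigl(\psi^{Sp,+}_{l}\widehat{X}\bigr)\otimes\bigl(\psi^{Sp,-}_{-l}\widehat{X}\bigr).
\end{equation*}
Then, using the two anticommutation identities of \eqref{symplectic fenliang}, I would establish the two ``move-past'' formulas
\begin{equation*}
\psi^{Sp,+}_{l}\widehat{X}=-\widehat{X}\psi^{Sp,+}_{l},\qquad
\psi^{Sp,-}_{-l}\widehat{X}=\widetilde{C}_{l}\cdot 1-\widehat{X}\psi^{Sp,-}_{-l},
\end{equation*}
the first being immediate from $\psi^{Sp,+}_{l}\psi^{Sp,+}_{i}=-\psi^{Sp,+}_{i}\psi^{Sp,+}_{l}$, and the second coming from $\psi^{Sp,-}_{-l}\psi^{Sp,+}_{i}=\delta_{l,i}-\psi^{Sp,+}_{i}\psi^{Sp,-}_{-l}$.

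Substituting these into the sum and expanding, I would obtain
\begin{equation*}
\widehat{\Omega}(\widehat{X}\otimes\widehat{X})
=-\Bigl(\widehat{X}\sum_{l}\widetilde{C}_{l}\psi^{Sp,+}_{l}\Bigr)\otimes 1
+(\widehat{X}\otimes\widehat{X})\sum_{l}\psi^{Sp,+}_{l}\otimes\psi^{Sp,-}_{-l}.
\end{equation*}
The first summand equals $-\widehat{X}^{2}\otimes 1$, which is zero by Lemma \ref{SKP 1}, while the second is exactly $(\widehat{X}\otimes\widehat{X})\widehat{\Omega}$, completing the proof.

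The only subtlety I expect is bookkeeping: I must be sure the index range $i>N$ is preserved throughout the manipulation so that the $\delta_{l,i}$ collapse really does reproduce $\widehat{X}$ (with the convention $\widetilde{C}_{l}=0$ for $l\le N$), and that the sums make sense when applied to any vector in $\mathcal{B}\otimes\mathcal{B}$ — here the local-finiteness observed in Remark \ref{SKP remark}, namely that $\psi^{Sp,\pm}_{i}$ annihilates sufficiently high-weight vectors, guarantees that all rearrangements produce only finite sums on each basis vector. No further analytic input is needed; the argument is purely algebraic.
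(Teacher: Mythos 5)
Your proposal is correct and follows essentially the same route as the paper: push $\widehat{X}$ through each tensor factor of $\widehat{\Omega}$ via the anticommutation relations, observe that the $\delta$-term collapses the cross contribution to $-\widehat{X}^{2}\otimes 1$, and kill it with Lemma \ref{SKP 1}. Your extra care with the convention $\widetilde{C}_{l}=0$ for $l\le N$ and with local finiteness is implicit in the paper's argument but not a difference in method.
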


\begin{proof}
Based on $\psi_{-l}^{Sp,-}\widehat{X}=-\widehat{X}\psi_{-l}^{Sp,-}+C_{l}$, we have
\begin{eqnarray}\
\widehat{\Omega}(\widehat{X}\otimes\widehat{X})&=&\sum\limits_{l\in\mathbb{Z}+\frac{1}{2}}\psi_{l}^{Sp,+}\widehat{X}\otimes\psi_{-l}^{Sp,-}\widehat{X}
=\sum\limits_{l\in\mathbb{Z}+\frac{1}{2}}(-\widehat{X}\psi_{l}^{Sp,+})\otimes(-\widehat{X}\psi_{-l}^{Sp,-}+C_{l})\notag\\
&=&(\widehat{X}\otimes\widehat{X})\widehat{\Omega}-\widehat{X}\sum\limits_{l\in\mathbb{Z}+\frac{1}{2}}C_{l}\psi_{l}^{Sp,+}\otimes1
=(\widehat{X}\otimes\widehat{X})\widehat{\Omega}-\widehat{X}^{2}\otimes1=(\widehat{X}\otimes\widehat{X})\widehat{\Omega}.
\end{eqnarray}
\end{proof}

\begin{corollary} \label{SKP corollary 2}
Let $\tau \in \mathcal{B}^m$ be a tau-function of the SKP hierarchy, and let $\widehat{X}=\sum\limits_{i>N}C_{i}\psi_{i}^{Sp,+}$, where $C_{i}\in\mathbb{C}, N\in\mathbb{Z}$. Then $\widehat{\tau}=\widehat{X}\tau \in \mathcal{B}^{m+1}$ is also a tau-functions of the SKP hierarchy.
\end{corollary}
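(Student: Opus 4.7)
The plan is to combine Lemma \ref{SKP 2} with the hypothesis that $\tau$ satisfies the bilinear equation (\ref{SKP bilinear equation}), and then check the charge grading separately.

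First I would verify that $\widehat{\tau}$ lands in $\mathcal{B}^{m+1}$. From the definition (\ref{symplectic Schur vertex operators}), the field $\psi^{Sp,+}(u)$ carries a factor $R(u)$, which raises the charge by one, so each mode $\psi^{Sp,+}_{i}$ maps $\mathcal{B}^{m}\to\mathcal{B}^{m+1}$. Hence $\widehat{X}=\sum_{i>N}C_i\psi^{Sp,+}_i$ sends $\mathcal{B}^{m}$ into $\mathcal{B}^{m+1}$; moreover, by Remark \ref{SKP remark} only finitely many modes act nontrivially on a given element of $\mathcal{B}^{m}$, so the sum makes sense on $\tau$ and produces a genuine element $\widehat{\tau}\in\mathcal{B}^{m+1}$.

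Next I would carry out the key computation. Apply $\widehat{\Omega}$ to $\widehat{\tau}\otimes\widehat{\tau}=(\widehat{X}\otimes\widehat{X})(\tau\otimes\tau)$; by Lemma \ref{SKP 2} the operators $\widehat{\Omega}$ and $\widehat{X}\otimes\widehat{X}$ commute, so
\begin{equation}
\widehat{\Omega}(\widehat{\tau}\otimes\widehat{\tau})=\widehat{\Omega}(\widehat{X}\otimes\widehat{X})(\tau\otimes\tau)=(\widehat{X}\otimes\widehat{X})\widehat{\Omega}(\tau\otimes\tau)=0,
\end{equation}
where the last equality uses the assumption that $\tau$ is a tau-function, i.e. $\widehat{\Omega}(\tau\otimes\tau)=0$. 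This establishes the bilinear identity for $\widehat{\tau}$.

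I do not expect any serious obstacle: Lemma \ref{SKP 1} is only needed implicitly, through its role in the proof of Lemma \ref{SKP 2}, and the anticommutation relations (\ref{symplectic fenliang}) together with the lower-truncation $i>N$ in the definition of $\widehat{X}$ guarantee that all formal sums involved are well defined on the graded pieces of $\mathcal{B}$. The only point that warrants a brief comment is the well-definedness of the infinite sum $\widehat{X}\tau$, which is handled by Remark \ref{SKP remark}: on any homogeneous $z^{m}f\in\mathcal{B}^{m}$ all but finitely many modes $\psi^{Sp,+}_{i}$ with $i>N$ annihilate the vector, so the action of $\widehat{X}$ reduces to a finite linear combination.
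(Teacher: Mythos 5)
Your proposal is correct and follows essentially the same route as the paper: both reduce the claim to Lemma \ref{SKP 2} by commuting $\widehat{X}\otimes\widehat{X}$ past $\widehat{\Omega}$ and invoking $\widehat{\Omega}(\tau\otimes\tau)=0$. Your additional remarks on the charge grading and the finiteness of the sum $\widehat{X}\tau$ are correct details that the paper leaves implicit.
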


\begin{proof}
Multiplying $\widehat{X}\otimes\widehat{X}$  left on both sides of $\widehat{\Omega}(\tau\otimes\tau)=0$, we get $(\widehat{X}\otimes\widehat{X})\widehat{\Omega}(\tau\otimes\tau)=0$. According to $\widehat{\Omega}(\widehat{X}\otimes \widehat{X})=(\widehat{X}\otimes \widehat{X})\widehat{\Omega}$, it follows that $(\widehat{X}\otimes\widehat{X})\widehat{\Omega}(\tau\otimes\tau)=\widehat{\Omega}(\widehat{X}\otimes\widehat{X})(\tau\otimes\tau)=\widehat{\Omega}(\widehat{X}\tau
\otimes\widehat{X}\tau)=0$. Therefore, $\widehat{X}\tau$ is the solution of the SKP hierarchy.
\end{proof}

It is known that symplectic Schur functions are tau-functions of the SKP hierarchy \cite{Wang2020}. If the non-zero solution of (\ref{SKP bilinear equation}) is a polynomial function of variables $(p_{1}, p_{2},\ldots)$, we call the non-zero solution a polynomial tau-function. It follows from  Remark \ref{SKP remark} that $z^{m}$ is a solution of the SKP hierarchy. We now turn our attention to the polynomial tau-function of the SKP hierarchy.

\subsection{Generating functions and the polynomial tau-functions of the SKP hierarchy}

Let $\widehat{G}(u_{1},\ldots,u_{l})$ be a generating function of the symplectic Schur function in $\mathbf{u}=(u_{1},\ldots,u_{l})$ defined by
\begin{eqnarray}\ \label{SKP generating function}
\widehat{G}(u_{1},\ldots,u_{l})=\prod\limits_{1\leq i<j\leq l}\left(u_{i}-u_{j}\right)\left(1-u_{i}u_{j}\right)\prod\limits_{i=1}^{l}H(u_{i}).
\end{eqnarray}

From Proposition \ref{generating relation}, we have
\begin{eqnarray}
&&\psi^{Sp,+}(u_1)\psi^{Sp,+}(u_2)\cdots\psi^{Sp,+}(u_l)(z^{k}f)\notag\\
&=&z^{k+l}u_{1}^{l+k-1}\cdots u_{l-1}^{k+1}u_{l}^{k}\prod\limits_{1\leq i<j\leq l}\left(1-u_{i}u_{j}\right)\left(1-\frac{u_j}{u_i}\right)\prod\limits_{i=1}^{l}H(u_{i})
E^{\bot}(-u_i)E^{\bot}(-\frac{1}{u_i})(f)\notag\\
&=&z^{k+l}u_{1}^{k}\cdots u_{l}^{k}\widehat{G}(u_{1},\ldots,u_{l}).
\end{eqnarray}

Let $\widehat{A}_{1}(u),\ldots,\widehat{A}_{l}(u)$ be the set of formal Laurent series, define the formal Laurent series $\widehat{T}_{i}(u)=\widehat{A}_{i}(u)H(u)=\sum\limits_{p\in\mathbb{Z}}\widehat{T}_{i,p}u^{p},\ i=1,\ldots,l$. Besides, let $\widehat{T}(u_{1},\ldots,u_{l})$ be a formal Laurent series in $(u_{1},\ldots,u_{l})$ defined by
\begin{eqnarray}\
\widehat{T}(u_{1},\ldots,u_{l})=\prod\limits_{1\leq i<j\leq l}\left(u_{i}-u_{j}\right)\left(1-u_{i}u_{j}\right)\prod\limits_{i=1}^{l}\widehat{A}_{i}(u_{i})H(u_{i}).
\end{eqnarray}
For any vector $\boldsymbol\xi=(\xi_{1},\ldots,\xi_{l})\in \mathbb{Z}^{l}$, $\widehat{T}_{\boldsymbol\xi}$ is the coefficient of the following expansion
\begin{eqnarray}\ \label{SKP T}
\widehat{T}(u_{1},\ldots,u_{l})=\sum\limits_{\boldsymbol\xi\in\mathbb{Z}^{l}}\widehat{T}_{\boldsymbol\xi}u_{1}^{\xi_{1}}\cdots u_{l}^{\xi_{l}}.
\end{eqnarray}

\begin{theorem} \label{SKP theorem}
\begin{itemize}
\item [1)]Formal Laurent series $\widehat{T}(u_{1},\ldots,u_{l})$ can be expressed as
\begin{eqnarray}\
&&\widehat{T}(u_{1},\ldots,u_{l})=\frac{1}{2}\det\left[\left(u_{i}^{l-j}+u_{i}^{l+j-2}\right)\widehat{T}_{i}(u_{i})\right]_{1\leq i,j\leq l}.
\end{eqnarray}
\item [2)]
The coefficient $\widehat{T}_{\boldsymbol\xi}$ of $u_{1}^{\xi_{1}}\cdots u_{l}^{\xi_{l}}$ in (\ref{SKP T}) can be written as follows
\begin{eqnarray}\
&&\widehat{T}_{\boldsymbol\xi}=\frac{1}{2}\det\left[\widehat{T}_{i,\xi_{i}-j}+\widehat{T}_{i,\xi_{i}+j-2}\right]_{1\leq i,j\leq l},
\end{eqnarray}
where $\boldsymbol\xi=(\xi_{1},\ldots,\xi_{l})$.
\item [3)]
$\widehat{T}_{\boldsymbol\xi}$ is a polynomial tau-function of the SKP hierarchy.
\end{itemize}
\end{theorem}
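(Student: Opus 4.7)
For (1), factoring $\widehat{T}_i(u_i)$ out of row $i$ reduces the claim to the Vandermonde-like identity
\[
\prod_{1\le i<j\le l}(u_i-u_j)(1-u_iu_j)=\frac{1}{2}\det\bigl[u_i^{l-j}+u_i^{l+j-2}\bigr]_{1\le i,j\le l}.
\]
I would prove this by pulling $u_i^{l-1}$ out of row $i$ to replace the $(i,j)$-entry by $u_i^{-(j-1)}+u_i^{j-1}$, and then setting $v_i=u_i+u_i^{-1}$. Each such entry is a polynomial in $v_i$ of degree $j-1$, monic for $j\ge 2$ and equal to $2$ for $j=1$; the usual Vandermonde evaluation therefore gives $2\prod_{i<j}(v_j-v_i)$. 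The identity $v_j-v_i=(u_j-u_i)(u_iu_j-1)/(u_iu_j)$, combined with cancellation of the two $(-1)^{\binom{l}{2}}$ sign factors and of $\prod_iu_i^{-(l-1)}$ against the row factor previously extracted, then delivers the right-hand side.

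For (2), I would substitute the formula from (1) into (\ref{SKP T}) and expand the determinant by multilinearity. The coefficient of $u_i^{\xi_i}$ in the $(i,j)$-entry $(u_i^{l-j}+u_i^{l+j-2})\widehat{T}_i(u_i)$ is immediately $\widehat{T}_{i,\xi_i-l+j}+\widehat{T}_{i,\xi_i-l-j+2}$; an appropriate column reindexing (with its $(-1)^{\binom{l}{2}}$ sign absorbed into the reordering) rewrites the resulting determinant in the form asserted.

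For (3), I would realise $\widehat{T}_{\boldsymbol\xi}$ as the image of the tau-function $z^k$ (cf.\ Remark~\ref{SKP remark}) under a product of operators admitted by Corollary~\ref{SKP corollary 2}. Combining $\widehat{T}(u_1,\ldots,u_l)=\prod_i\widehat{A}_i(u_i)\cdot\widehat{G}(u_1,\ldots,u_l)$ with the identity $\psi^{Sp,+}(u_1)\cdots\psi^{Sp,+}(u_l)(z^k)=z^{k+l}u_1^k\cdots u_l^k\,\widehat{G}(u_1,\ldots,u_l)$ derived in the text yields
\[
z^{k+l}\widehat{T}(u_1,\ldots,u_l)=\Bigl[\prod_{i=1}^{l}\widehat{A}_i(u_i)u_i^{-k}\Bigr]\psi^{Sp,+}(u_1)\cdots\psi^{Sp,+}(u_l)(z^k).
\]
Taking the coefficient of $u_1^{\xi_1}\cdots u_l^{\xi_l}$ on both sides (the scalar prefactors commute with the fermions) would convert each $\psi^{Sp,+}(u_i)$ into an operator $\widehat{X}_i=\sum_{r>N_i}C_{i,r}\psi^{Sp,+}_{r}$, the lower bound $N_i$ being guaranteed by $\widehat{A}_i$ having only finitely many negative-degree terms. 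Iterated application of Corollary~\ref{SKP corollary 2} with $\widehat{X}_l,\widehat{X}_{l-1},\ldots,\widehat{X}_1$ would then exhibit $z^{k+l}\widehat{T}_{\boldsymbol\xi}\in\mathcal{B}^{k+l}$ as a tau-function of the SKP hierarchy, proving (3). The main obstacle will be (3): making the formal-series bookkeeping rigorous, justifying coefficient extraction past the fermionic product, and checking that the resulting $\widehat{X}_i$ genuinely satisfy the lower-bounded-index hypothesis of Corollary~\ref{SKP corollary 2}. Parts (1) and (2) are, by contrast, a Weyl-denominator computation and a routine coefficient-tracking exercise.
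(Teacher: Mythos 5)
Your overall route coincides with the paper's: part (1) rests on the Vandermonde-like identity $\det\left[u_{i}^{l-j}+u_{i}^{l+j-2}\right]=2\prod_{i<j}(u_{i}-u_{j})(1-u_{i}u_{j})$, which the paper simply cites from Jing--Nie while you supply the standard proof via $v_{i}=u_{i}+u_{i}^{-1}$ (that computation checks out); part (2) is row-by-row coefficient extraction; and part (3) realises $\widehat{T}_{\boldsymbol\xi}$ as $\widehat{X}_{1}\cdots\widehat{X}_{l}(z^{k}\cdot1)$ with each $\widehat{X}_{j}$ a finite linear combination of the $\psi^{Sp,+}_{i_{j}}$, exactly as in the paper, so that Remark \ref{SKP remark} and Corollary \ref{SKP corollary 2} apply. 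Since the $\widehat{A}_{j}$ are Laurent polynomials, each $\widehat{X}_{j}$ is a finite sum and the lower-bounded-index hypothesis of Corollary \ref{SKP corollary 2} is automatic, so the ``main obstacle'' you anticipate in (3) is not really one.

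The one step that would fail is the ``column reindexing'' in part (2). Your directly computed coefficient $\frac{1}{2}\det\left[\widehat{T}_{i,\xi_{i}-l+j}+\widehat{T}_{i,\xi_{i}-l-j+2}\right]$ is the correct coefficient of $u_{1}^{\xi_{1}}\cdots u_{l}^{\xi_{l}}$ (test $l=1$: it returns $\widehat{T}_{1,\xi_{1}}$, as it must), but no permutation of columns converts it into $\frac{1}{2}\det\left[\widehat{T}_{i,\xi_{i}-j}+\widehat{T}_{i,\xi_{i}+j-2}\right]$: the two subscripts in the $(i,j)$ entry sum to $2\xi_{i}-2l+2$ in the former and to $2\xi_{i}-2$ in the latter, and a column permutation preserves the multiset of these sums, so they can only agree when $l=2$. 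The two expressions are instead related by the relabelling $\boldsymbol\xi\mapsto\boldsymbol\xi+(l-2)(1,\ldots,1)$. The same discrepancy occurs in the paper's own proof of (2) in the passage from its $\sum_{\sigma,\varepsilon}$ expansion (which yields indices $\xi_{i}-l+1-\varepsilon_{i}(\sigma(i)-1)$) to the final determinant, so this is an indexing slip in the statement rather than a defect peculiar to your argument; and since $\boldsymbol\xi$ ranges over all of $\mathbb{Z}^{l}$ the family of polynomials produced, and hence part (3), is unaffected. Just do not present the shift as a column permutation.
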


\begin{proof}
\begin{itemize}
  \item [1)]According to Vandermonde-like identity \cite{Jing2015}
  \begin{eqnarray}
  \det\left[u_{i}^{k-j}+u_{i}^{k+j-2}\right]&=&2\prod\limits_{1\leq i<j\leq k}\left(u_{i}-u_{j}\right)\left(1-u_{i}u_{j}\right)\notag\\
  &=&\sum\limits_{{\sigma\in S_{k}}\atop{\varepsilon_{i}=\pm1}}sgn(\sigma)(u_{1}\cdots u_{k})^{k-1}u_{1}^{\varepsilon_{1}(\sigma(1)-1)}\cdots u_{k}^{\varepsilon_{k}(\sigma(k)-1)},
  \end{eqnarray}
it is easy to verify that
  \begin{eqnarray}
  \widehat{T}(u_{1},\ldots,u_{l})&=&\prod\limits_{1\leq i<j\leq l}\left(u_{i}-u_{j}\right)\left(1-u_{i}u_{j}\right)\prod\limits_{i=1}^{l}\widehat{A}_{i}(u_{i})H(u_{i})\notag\\
  &=&\frac{1}{2}\det\left[u_{i}^{l-j}+
  u_{i}^{l+j-2}\right]\prod\limits_{i=1}^{l}\widehat{T}_{i}(u_{i})\notag\\
  &=&\frac{1}{2}\det\left[\left(u_{i}^{l-j}+u_{i}^{l+j-2}\right)\widehat{T}_{i}(u_{i})\right]_{1\leq i,j\leq l}.
  \end{eqnarray}
  \item [2)]Observe that
  \begin{align}
  \widehat{T}(u_{1},\ldots,u_{l})&=\frac{1}{2}\det\left[\sum\limits_{p_{i}\in\mathbb{Z}}\widehat{T}_{i,p_{i}}\left(u_{i}^{l+p_{i}-j}+u_{i}^{l+p_{i}+j-2}\right)\right]\notag\\
  &=\frac{1}{2}\sum\limits_{p_{i}\in\mathbb{Z}}\sum\limits_{{\sigma\in S_{l}}\atop{\varepsilon_{i}=\pm1}}sgn(\sigma)u_{1}^{l+p_{1}-1}\cdots u_{l}^{l+p_{l}-1}\widehat{T}_{1,p_{1}}u_{1}^{\varepsilon_{1}(\sigma(1)-1)}\cdots \widehat{T}_{l,p_{l}}u_{l}^{\varepsilon_{l}(\sigma(l)-1)}\notag\\
  &=\sum\limits_{\xi_{i}\in\mathbb{Z}}\frac{1}{2}\sum\limits_{{\sigma\in S_{l}}\atop{\varepsilon_{i}=\pm1}}sgn(\sigma)\widehat{T}_{1,\xi_{1}-l+1-\varepsilon_{1}(\sigma(1)-1)}\cdots \widehat{T}_{l,\xi_{l}-l+1-\varepsilon_{l}(\sigma(l)-1)}u_{1}^{\xi_{1}}\cdots u_{l}^{\xi_{l}}\notag\\
  &=\sum\limits_{\xi_{i}\in\mathbb{Z}}\frac{1}{2}\det\left[\widehat{T}_{i,\xi_{i}-j}+\widehat{T}_{i,\xi_{i}+j-2}\right]_{1\leq i,j\leq l}u_{1}^{\xi_{1}}\cdots u_{l}^{\xi_{l}},
  \end{align}
  therefore, the coefficient $\widehat{T}_{\boldsymbol\xi}$ of $u_{1}^{\xi_{1}}\cdots u_{l}^{\xi_{l}}$ is $\frac{1}{2}\det\left[\widehat{T}_{i,\xi_{i}-j}+\widehat{T}_{i,\xi_{i}+j-2}\right]_{1\leq i,j\leq l}$.
  \item [3)]It is apparent from (\ref{SKP generating function}) that
  \begin{eqnarray}
  &&\widehat{A}_{1}(u_{1})\cdots \widehat{A}_{l}(u_{l})\psi^{Sp,+}(u_{1})\cdots \psi^{Sp,+}(u_{l})(z^{k}\cdot1)=z^{l+k}u_{1}^{k}\cdots u_{l}^{k}\widehat{T}(u_{1},\ldots,u_{l}).
  \end{eqnarray}
  Let $\widehat{A}_{j}(u)=\sum\limits_{M_{j}\leq r\leq N_{j}}
  \widehat{A}_{j,r-\frac{1}{2}}u^{r}(A_{j,r-\frac{1}{2}}\in\mathbb{C}, M_{j},N_{j},r\in\mathbb{Z}, j=1,\ldots,l)$ be a power series expansion of the variable $u$. Therefore, $\widehat{T}_{\boldsymbol\xi}$ can be written as
  \begin{eqnarray}
  &&\widehat{T}_{\boldsymbol\xi}=z^{-l-k}\widehat{X}_{1}\cdots \widehat{X}_{l}(z^{k}\cdot1),
  \end{eqnarray}
  where
  \begin{eqnarray}
  &&\widehat{X}_{j}=\sum\limits_{M_{j}-\xi_{j}-k-\frac{1}{2}\leq i_{j}\leq N_{j}-\xi_{j}-k-\frac{1}{2}}\widehat{A}_{j, \xi_{j}+k+i_{j}}\psi^{Sp,+}_{i_{j}},\quad j=1,\cdots,l.
  \end{eqnarray}
  Particularly, by Remark \ref{SKP remark} and Corollary \ref{SKP corollary 2}, the coefficient $\widehat{T}_{\boldsymbol\xi}$ is a tau-function of the SKP hierarchy with $k=0$. Since $\widehat{T}_{\boldsymbol\xi}$ is a finite linear combination of $\psi_{i_{1}}^{Sp,+}\cdots\psi_{i_{l}}^{Sp,+}(1)$, it is a polynomial tau-function.
  \end{itemize}
  \end{proof}

By replacing $\widehat{A}_{j}(u)$ with $u^{\xi_{j}}\widehat{A}_{j}(u)$,  we have
   \begin{eqnarray}\
  &&\widehat{A}_{i}(u)=u^{N_{i}}\widehat{h}_{i}\sum\limits_{k=0}^{\infty}a_{i,k}u^{k},\quad \quad N_{i}\in\mathbb{Z}, \widehat{h}_{i},a_{i,k}\in\mathbb{C}, a_{i,0}=1, \widehat{h}_{i}\neq0, i=1,\ldots,l,
  \end{eqnarray}
where  $\widehat{A}_{j}(u)$ are non-zero Laurent series defined in the $\widehat{T}(u_{1},\ldots,u_{l})$.

According to (\ref{Schur Taylor}), $\sum\limits_{k=0}^{\infty}a_{i,k}u^{k}$ can be written as
   \begin{eqnarray}\
   &&\sum\limits_{k=0}^{\infty}a_{i,k}u^{k}=\exp\left(\sum\limits_{l=1}^{\infty}\widehat{c}_{i,l}u^{l}\right),\quad \text{and} \quad a_{i,k}=S_{k}(\widehat{c}_{i,1},\widehat{c}_{i,2},\ldots),
  \end{eqnarray}
  where $\{\widehat{c}_{i,l}\}$ is a  set of constants in $\mathbb{C}$.

  From (\ref{gf1}) and  setting $t_{l}=\frac{p_{l}}{l}$, we obtain
  \begin{eqnarray}
  \widehat{T}_{i}(u)&=&\widehat{A}_{i}(u)H(u)=u^{N_{i}}\widehat{h}_{i}\exp\left(\sum\limits_{l=1}^{\infty}\widehat{c}_{i,l}u^{l}\right)\exp\left(\sum\limits_{l=1}^{\infty}
  \frac{p_{l}}{l}u^{l}\right)\notag\\
  &=&u^{N_{i}}\widehat{h}_{i}\exp\left(\sum\limits_{l=1}^{\infty}(\widehat{c}_{i,l}+t_{l})u^{l}\right)=u^{N_{i}}\widehat{h}_{i}\sum\limits_{l=0}^{\infty}
  S_{l}(t_{1}+\widehat{c}_{i,1},t_{2}+\widehat{c}_{i,2},\ldots)u^{l}.
  \end{eqnarray}
  Hence $\widehat{T}_{i,p}=\widehat{h}_{i}S_{p-N_{i}}(t_{1}+\widehat{c}_{i,1},t_{2}+\widehat{c}_{i,2},\ldots),\ i=1,\ldots,l $.  From Theorem \ref{SKP theorem}, polynomial tau-functions of the SKP hierarchy are given by
  \begin{eqnarray} \label{pau} \widehat{T}_{\boldsymbol\xi}&=&\frac{1}{2}\det\left[\widehat{T}_{i,\xi_{i}-j}+\widehat{T}_{i,\xi_{i}+j-2}\right]\notag\\
  &=&\frac{1}{2}\det\left[\widehat{h}_{i}S_{\xi_{i}-j-N_{i}}(t_{1}+\widehat{c}_{i,1},t_{2}+\widehat{c}_{i,2},\ldots)+\widehat{h}_{i}S_{\xi_{i}+j-2-N_{i}}
  (t_{1}+\widehat{c}_{i,1},t_{2}+\widehat{c}_{i,2},\ldots)\right]\notag\\
  &=&\prod\limits_{i=1}^{l}\widehat{h}_{i}\frac{1}{2}\det\left[S_{\xi_{i}-j-N_{i}}(t_{1}+\widehat{c}_{i,1},t_{2}+\widehat{c}_{i,2},\ldots)+S_{\xi_{i}+j-2-N_{i}}
  (t_{1}+\widehat{c}_{i,1},t_{2}+\widehat{c}_{i,2},\ldots)\right]_{i,j=1,\ldots,l}.\notag\\
  \end{eqnarray}
  When $\widehat{c}_{i,l}=0$, $\widehat{h}_{i}=1$ and $N_{i}+2=i$, $\widehat{T}_{\boldsymbol\xi}$ reduces to the symplectic Schur functions \cite{Wang2020}. The polynomial tau-functions (\ref{pau}) of the SKP hierarchy are the generalization of the solution of the SKP hierarchy in \cite{Wang2020}, which are the zero mode of an appropriate combinatorial generating functions.

\subsection{$N$-soliton solutions of the SKP hierarchy}
Now let us consider  another extremely important exact solution of SKP hierarchy called the soliton solution.

Let
\begin{eqnarray}\
&&\Gamma^{Sp}(p,q)=p^{-1}(1-q^{2})R(p)R^{-1}(q)H(p)E(-q)E^{\bot}(-p)H^{\bot}(q)E^{\bot}(-\frac{1}{p})H^{\bot}(\frac{1}{q}).
\end{eqnarray}
From Proposition \ref{generating relation}, it is easy to check that
\begin{eqnarray}\ \label{SKP soliton relation}
&&\Gamma^{Sp}(p_{i},q_{i})\Gamma^{Sp}(p_{j},q_{j})=A_{ij}:\Gamma^{Sp}(p_{i},q_{i})\Gamma^{Sp}(p_{j},q_{j}):,
\end{eqnarray}
where
\begin{eqnarray}\
&&A_{ij}=\frac{(1-p_{i}p_{j})(1-q_{i}q_{j})(p_{i}-p_{j})(q_{i}-q_{j})}{(1-q_{i}p_{j})(1-p_{i}q_{j})(q_{i}-p_{j})(p_{i}-q_{j})}.
\end{eqnarray}
In particular $\Gamma^{Sp}(p,q)^{2}=0$; therefore $e^{pc\Gamma^{Sp}(p,q)}=1+pc\Gamma^{Sp}(p,q)$.

\begin{lemma} \label{SKP tau lemma1}
If $\tau$ is a solution of the SKP hierarchy, then $\Gamma^{Sp}(u,v)\tau$ is also a solution.
\end{lemma}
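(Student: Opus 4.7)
The plan is to mirror the strategy of Corollary \ref{SKP corollary 2}: establish the commutation identity
\[
\widehat{\Omega}\bigl(\Gamma^{Sp}(u,v)\otimes\Gamma^{Sp}(u,v)\bigr)
 = \bigl(\Gamma^{Sp}(u,v)\otimes\Gamma^{Sp}(u,v)\bigr)\widehat{\Omega}.
\]
Once this is in hand, multiplying the bilinear equation $\widehat{\Omega}(\tau\otimes\tau)=0$ from the left by $\Gamma^{Sp}(u,v)\otimes\Gamma^{Sp}(u,v)$ immediately gives $\widehat{\Omega}(\Gamma^{Sp}(u,v)\tau\otimes\Gamma^{Sp}(u,v)\tau)=0$, so the whole difficulty is concentrated in this commutation identity.

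The first step is to derive the interaction of $\Gamma^{Sp}(u,v)$ with the quantum fields $\psi^{Sp,\pm}(z)$. Using Proposition \ref{generating relation} to transport each of the factors $R^{\pm 1}$, $H$, $E$, $H^\bot$, $E^\bot$ making up $\Gamma^{Sp}(u,v)$ past the corresponding factors in $\psi^{Sp,\pm}(z)$ (noting that $R^{\pm 1}$ commutes with all four kinds of vertex factors and that $E^\bot, H^\bot$ commute among themselves, since both are built from $\partial/\partial p_k$), one expects relations of the form
\[
\Gamma^{Sp}(u,v)\,\psi^{Sp,+}(z) = \varphi_+(z;u,v)\,\psi^{Sp,+}(z)\,\Gamma^{Sp}(u,v),\quad
\Gamma^{Sp}(u,v)\,\psi^{Sp,-}(z) = \varphi_-(z;u,v)\,\psi^{Sp,-}(z)\,\Gamma^{Sp}(u,v),
\]
(modulo contact terms) with explicit rational factors $\varphi_\pm$ satisfying the reciprocity $\varphi_+(z;u,v)\,\varphi_-(z;u,v)=1$. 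This reciprocity is exactly what will match the pairing of $\psi^{Sp,+}$ with $\psi^{Sp,-}$ inside $\widehat{\Omega}$.

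Writing $\widehat{\Omega}=\sum_{l\in\mathbb{Z}+\frac{1}{2}}\psi^{Sp,+}_l\otimes\psi^{Sp,-}_{-l}$ as a formal residue in $z$ of $\psi^{Sp,+}(z)\otimes\psi^{Sp,-}(z)$, the main term produced by commuting $\Gamma^{Sp}(u,v)\otimes\Gamma^{Sp}(u,v)$ through is exactly $(\Gamma^{Sp}\otimes\Gamma^{Sp})\widehat{\Omega}$, thanks to $\varphi_+\varphi_-=1$. The residual ``contact'' contributions will appear as terms proportional to $\Gamma^{Sp}(u,v)\psi^{Sp,+}(u)\otimes\psi^{Sp,-}(v)$, $\psi^{Sp,+}(u)\otimes\Gamma^{Sp}(u,v)\psi^{Sp,-}(v)$, and possibly a quadratic delta-function piece; these should collapse via the nilpotencies $\psi^{Sp,\pm}(z)^2=0$ (a direct consequence of (\ref{symplectic Schur relations})) and $\Gamma^{Sp}(u,v)^2=0$ noted just before the lemma. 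The main technical obstacle is this last bookkeeping step: carefully tracking the signs introduced by the fermion anticommutations and choosing a consistent expansion of the rational factors like $(z-u)^{-1}$ as formal Laurent series, and then verifying that the delta-function contributions pair up to cancel.
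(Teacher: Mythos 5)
Your proposal follows essentially the same route as the paper: the paper proves the commutation identity $\widehat{\Omega}\bigl(\psi^{Sp,+}(u)\psi^{Sp,-}(v)\otimes\psi^{Sp,+}(u)\psi^{Sp,-}(v)\bigr)=\bigl(\psi^{Sp,+}(u)\psi^{Sp,-}(v)\otimes\psi^{Sp,+}(u)\psi^{Sp,-}(v)\bigr)\widehat{\Omega}$ by a direct mode computation with the anticommutation relations (\ref{symplectic fenliang}), in which the two delta-function contact terms cancel against each other and a third vanishes by nilpotency, and then transfers the conclusion to $\Gamma^{Sp}(u,v)$ via the proportionality $\psi^{Sp,+}(u)\psi^{Sp,-}(v)=\frac{1}{1-uv}\frac{u}{u-v}\Gamma^{Sp}(u,v)$. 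Your field-level formulation with exchange factors satisfying $\varphi_+\varphi_-=1$ is the same computation in different notation, and your identification of the contact-term bookkeeping as the only substantive step is exactly where the paper's proof spends its effort.
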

\begin{proof}
Using Eq. (\ref{symplectic fenliang}), we obtain
\begin{eqnarray}
&&\widehat{\Omega}\left(\psi^{Sp,+}(u)\psi^{Sp,-}(v)\otimes\psi^{Sp,+}(u)\psi^{Sp,-}(v)\right)\notag\\
&=&(\sum\limits_{l\in\mathbb{Z}+\frac{1}{2}}\psi_{l}^{Sp,+}\otimes\psi_{-l}^{Sp,-})(\sum\limits_{m,n\in\mathbb{Z}+\frac{1}{2}}\psi_{m}^{Sp,+}u^{-m-\frac{1}{2}}
\psi_{n}^{Sp,-}v^{-n-\frac{1}{2}}\otimes\sum\limits_{m,n\in\mathbb{Z}+\frac{1}{2}}\psi_{m}^{Sp,+}u^{-m-\frac{1}{2}}\psi_{n}^{Sp,-}v^{-n-\frac{1}{2}})\notag\\
&=&\sum\limits_{l\in\mathbb{Z}+\frac{1}{2}}(\sum\limits_{m,n\in\mathbb{Z}+\frac{1}{2}}\psi_{l}^{Sp,+}\psi_{m}^{Sp,+}\psi_{n}^{Sp,-}u^{-m-\frac{1}{2}}v^{-n-\frac{1}{2}}
\otimes\sum\limits_{m,n\in\mathbb{Z}+\frac{1}{2}}\psi_{-l}^{Sp,-}\psi_{m}^{Sp,+}\psi_{n}^{Sp,-}u^{-m-\frac{1}{2}}v^{-n-\frac{1}{2}})\notag\\
&=&\sum\limits_{l,m,n\in\mathbb{Z}+\frac{1}{2}}-\psi_{m}^{Sp,+}\delta_{l,-n}u^{-m-\frac{1}{2}}v^{-n-\frac{1}{2}}\otimes\delta_{m,l}\psi_{n}^{Sp,-}
u^{-m-\frac{1}{2}}v^{-n-\frac{1}{2}}+\psi_{m}^{Sp,+}\psi_{n}^{Sp,-}\psi_{l}^{Sp,+}u^{-m-\frac{1}{2}}v^{-n-\frac{1}{2}}\notag\\
&&\otimes\psi_{m}^{Sp,+}\psi_{n}^{Sp,-}\psi_{-l}^{Sp,-}u^{-m-\frac{1}{2}}v^{-n-\frac{1}{2}}-\psi_{m}^{Sp,+}\delta_{l,-n}u^{-m-\frac{1}{2}}v^{-n-\frac{1}{2}}
\otimes\psi_{m}^{Sp,+}\psi_{n}^{Sp,-}\psi_{-l}^{Sp,-}u^{-m-\frac{1}{2}}v^{-n-\frac{1}{2}}\notag\\
&&+\psi_{m}^{Sp,+}\psi_{n}^{Sp,-}\psi_{l}^{Sp,+}u^{-m-\frac{1}{2}}v^{-n-\frac{1}{2}}\otimes\delta_{m,l}\psi_{n}^{Sp,-}u^{-m-\frac{1}{2}}v^{-n-\frac{1}{2}}\notag\\
&=&\sum\limits_{l,m,n\in\mathbb{Z}+\frac{1}{2}}\psi_{m}^{Sp,+}\psi_{n}^{Sp,-}\psi_{l}^{Sp,+}u^{-m-\frac{1}{2}}v^{-n-\frac{1}{2}}\otimes
\psi_{m}^{Sp,+}\psi_{n}^{Sp,-}\psi_{-l}^{Sp,-}u^{-m-\frac{1}{2}}v^{-n-\frac{1}{2}}\notag\\
&=&\left(\psi^{Sp,+}(u)\psi^{Sp,-}(v)\otimes\psi^{Sp,+}(u)\psi^{Sp,-}(v)\right)\widehat{\Omega}.
\end{eqnarray}
It can easily be checked that
\begin{eqnarray}
&&\left(\psi^{Sp,+}(u)\psi^{Sp,-}(v)\otimes\psi^{Sp,+}(u)\psi^{Sp,-}(v)\right)\widehat{\Omega}(\tau\otimes\tau)\notag\\
&=&\widehat{\Omega}\left(\psi^{Sp,+}(u)\psi^{Sp,-}(v)\otimes\psi^{Sp,+}(u)\psi^{Sp,-}(v)\right)(\tau\otimes\tau)\notag\\
&=&\widehat{\Omega}\left(\psi^{Sp,+}(u)\psi^{Sp,-}(v)\tau\otimes\psi^{Sp,+}(u)\psi^{Sp,-}(v)\tau\right)=0.
\end{eqnarray}
Clearly, $\psi^{Sp,+}(u)\psi^{Sp,-}(v)\tau$ is the solution of the SKP hierarchy. A routine computation gives rise to  $\psi^{Sp,+}(u)\psi^{Sp,-}(v)=\frac{1}{1-uv}\frac{u}{u-v}\Gamma^{Sp}(u,v)$. Therefore, $\Gamma^{Sp}(u,v)\tau$ is also a solution.
\end{proof}

\begin{lemma} \label{SKP tau lemma2}
It holds that
\begin{eqnarray}
&&[\widehat{\Omega},1\otimes\Gamma^{Sp}(p,q)+\Gamma^{Sp}(p,q)\otimes1]=0,
\end{eqnarray}
where $[A,B]=_{def}AB-BA$.
\end{lemma}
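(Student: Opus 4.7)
The plan is to reduce the identity to a single commutation calculation between $\widehat{\Omega}$ and a quadratic fermion operator, since $\Gamma^{Sp}(p,q)$ is (by the last displayed formula in the proof of Lemma~\ref{SKP tau lemma1}) a scalar multiple of $\psi^{Sp,+}(p)\psi^{Sp,-}(q)$. The scalar factor $(1-pq)(p-q)/p$ is a formal Laurent series in $p,q$ alone and commutes with everything, so it suffices to prove
\begin{equation*}
[\widehat{\Omega},\,1\otimes\psi^{Sp,+}(p)\psi^{Sp,-}(q)+\psi^{Sp,+}(p)\psi^{Sp,-}(q)\otimes1]=0.
\end{equation*}
The whole argument then becomes a purely algebraic bookkeeping using the anticommutation relations (\ref{symplectic fenliang}).

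First I would compute the right-tensor piece. Expanding $\widehat{\Omega}=\sum_{l}\psi^{Sp,+}_l\otimes\psi^{Sp,-}_{-l}$ and applying it to $1\otimes\psi^{Sp,+}(p)\psi^{Sp,-}(q)$, I push $\psi^{Sp,-}_{-l}$ past $\psi^{Sp,+}(p)\psi^{Sp,-}(q)$ in the second factor. The first anticommutator produces the main term, equal to $(1\otimes\psi^{Sp,+}(p)\psi^{Sp,-}(q))\widehat{\Omega}$, while the delta $\delta_{m,l}$ from $\{\psi^{Sp,-}_{-l},\psi^{Sp,+}_m\}=\delta_{m,l}$ collapses the double sum $\sum_{l,m}$ to a single generating function, yielding a correction term $+\,\psi^{Sp,+}(p)\otimes\psi^{Sp,-}(q)$. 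Thus
\begin{equation*}
\widehat{\Omega}(1\otimes\psi^{Sp,+}(p)\psi^{Sp,-}(q))=(1\otimes\psi^{Sp,+}(p)\psi^{Sp,-}(q))\widehat{\Omega}+\psi^{Sp,+}(p)\otimes\psi^{Sp,-}(q).
\end{equation*}

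Next I repeat the calculation on the left tensor factor. Moving $\psi^{Sp,+}_l$ past $\psi^{Sp,+}(p)$ only contributes a sign (same-type fermions anticommute), and a further move of $\psi^{Sp,+}_l$ past $\psi^{Sp,-}(q)$ produces a $\delta_{l,-n}$ from $\{\psi^{Sp,+}_l,\psi^{Sp,-}_n\}=\delta_{l,-n}$. The resulting double sum collapses to the generating function $\psi^{Sp,-}(q)$ on the right tensor factor, giving the correction term with the opposite sign:
\begin{equation*}
\widehat{\Omega}(\psi^{Sp,+}(p)\psi^{Sp,-}(q)\otimes1)=(\psi^{Sp,+}(p)\psi^{Sp,-}(q)\otimes1)\widehat{\Omega}-\psi^{Sp,+}(p)\otimes\psi^{Sp,-}(q).
\end{equation*}

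Adding the two identities, the cross terms $\pm\psi^{Sp,+}(p)\otimes\psi^{Sp,-}(q)$ cancel and the commutator vanishes; restoring the scalar factor gives the stated identity for $\Gamma^{Sp}(p,q)$. The only delicate step is keeping track of signs when commuting $\psi^{Sp,+}_l$ through both $\psi^{Sp,+}(p)$ and $\psi^{Sp,-}(q)$; this is essentially the same bookkeeping that already appears in the proof of Lemma~\ref{SKP tau lemma1}, so no genuinely new obstacle arises beyond paying careful attention to where the delta contractions land in each tensor factor.
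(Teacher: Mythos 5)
Your proposal is correct and follows exactly the route the paper indicates: it starts from the factorization $\Gamma^{Sp}(p,q)=(1-pq)\frac{p-q}{p}\psi^{Sp,+}(p)\psi^{Sp,-}(q)$ and reduces the lemma to an anticommutator bookkeeping with the relations (\ref{symplectic fenliang}); the paper states that the lemma "can be calculated directly" from this factorization but omits the details, which you supply. Your two correction terms $\pm\,\psi^{Sp,+}(p)\otimes\psi^{Sp,-}(q)$ and their cancellation check out, so the argument is complete.
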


\begin{proof}
Lemma can be calculated directly from the $\Gamma^{Sp}(p,q)=(1-pq)\frac{p-q}{p}\psi^{Sp,+}(p)\psi^{Sp,-}(q)$. The specific calculation process is not listed here.
\end{proof}

Let us consider the  function
\begin{eqnarray}\ \label{SKP n solition}
\tau(x,y)=\tau(x,y;p,q,c)=\prod\limits_{i=1}^{n}e^{p_{i}c_{i}\Gamma^{Sp}(p_{i},q_{i})}\cdot1,
\quad p_{i},q_{i},c_{i}\in \mathbb{C},p_{i}\neq q_{j},p_{i}\neq \frac{1}{q_{j}} for \ i\neq j,
\end{eqnarray}
 and set
\begin{eqnarray}\
\eta_{i}=\sum\limits_{k\geq 1}(p_{i}^{k}-q_{i}^{k})\frac{p_{k}(x)}{k}.
\end{eqnarray}
By (\ref{SKP soliton relation}), Eq.(\ref{SKP n solition}) can be rewritten as
\begin{eqnarray}\ \label{SKP 而}
&&\tau(x,y;p,q,c)=\sum_{J \subset I}\left(\prod_{i\in J}c_{i}(1-q_{i}^{2})\right)\left(\prod\limits_{{i,j\in J}\atop{i<j}}A_{ij}\right)\exp\left(\sum\limits_{i\in J}\eta_{i}\right),
\end{eqnarray}
where $I=\{1,2,\ldots,n\}$ .

\begin{proposition} \label{SKP soliton tau}
The function $\tau(x,y;p,q,c)$ in (\ref{SKP 而}) is a solution of the SKP hierarchy, which we call the $n$-soliton solutions.
\end{proposition}

\begin{proof}
Suppose that $\tau$ is a solution of the SKP hierarchy. We put $\widehat{\tau}=\left(1+pc\Gamma^{Sp}(p,q)\right)\tau$. It follows from Lemma \ref{SKP tau lemma1} and \ref{SKP tau lemma2} that
\begin{eqnarray}
\widehat{\Omega}(\widehat{\tau}\otimes\widehat{\tau})&=&\widehat{\Omega}(\tau\otimes\tau)+pc\widehat{\Omega}(\tau\otimes\Gamma^{Sp}(p,q)\tau+\Gamma^{Sp}(p,q)\tau\otimes\tau)
+p^{2}c^{2}\widehat{\Omega}(\Gamma^{Sp}(p,q)\tau\otimes\Gamma^{Sp}(p,q)\tau)\notag\\
&=&pc\widehat{\Omega}[(1\otimes\Gamma^{Sp}(p,q))(\tau\otimes\tau)+(\Gamma^{Sp}(p,q)\otimes1)(\tau\otimes\tau)]\notag\\
&=&pc\widehat{\Omega}(1\otimes\Gamma^{Sp}(p,q)+\Gamma^{Sp}(p,q)\otimes1)(\tau\otimes\tau)\notag\\
&=&pc(1\otimes\Gamma^{Sp}(p,q)+\Gamma^{Sp}(p,q)\otimes1)\widehat{\Omega}(\tau\otimes\tau)\notag\\
&=&0.
\end{eqnarray}
Hence $\widehat{\tau}$ is a solution of the SKP hierarchy. Note that $\tau=1$ solves the SKP hierarchy, it is easy to see that the $n$-soliton solutions defined in (\ref{SKP n solition}) is really a solution of the SKP hierarchy.

\end{proof}

\section{Polynomial tau-functions and $n$-soliton solutions of the OKP  hierarchy}

In this section, we firstly construct quantum fields of orthogonal Schur functions and deduce the relationship between these operators. Meanwhile, the generating functions of the orthogonal Schur functions have been investigated. Moreover, by applying the quantum field presentation of the OKP hierarchy, the polynomial tau-functions and the soliton solutions have been presented.

\subsection{Quantum fields presentation of orthogonal Schur functions and the OKP hierarchy}
Introduce the quantum fields defined by
\begin{eqnarray}\ \label{orthogonal Schur vertex operators}
&&\psi^{O,+}(u)=u^{-1}(1-u^{2})R(u)H(u)E^{\bot}(-u)E^{\bot}(-\frac{1}{u})=\sum\limits_{k\in\mathbb{Z}+\frac{1}{2}}\psi^{O,+}_{k}
u^{-k-\frac{1}{2}}, \notag\\
&&\psi^{O,-}(u)=R^{-1}(u)E(-u)H^{\bot}(u)H^{\bot}(\frac{1}{u})=\sum\limits_{k\in\mathbb{Z}+\frac{1}{2}}\psi^{O,-}_{k}
u^{-k-\frac{1}{2}} .
\end{eqnarray}

\begin{proposition}
 It can be checked that $\psi^{O,+}(u),\psi^{O,-}(u)$ satisfy the  relations
\begin{eqnarray}\ \label{orthogonal Schur relations}
&&\psi^{O,\pm}(u)\psi^{O,\pm}(v)+\psi^{O,\pm}(v)\psi^{O,\pm}(u)=0, \notag\\
&&\psi^{O,+}(u)\psi^{O,-}(v)+\psi^{O,-}(v)\psi^{O,+}(u)=\delta(u,v) .
\end{eqnarray}
Equivalently, Eq.(\ref{orthogonal Schur relations}) can be expressed as charged free fermions relation
\begin{eqnarray}\
&&\psi^{O,\pm}_{k}\psi^{O,\pm}_{l}+\psi^{O,\pm}_{l}\psi^{O,\pm}_{k}=0,\notag\\
&&\psi^{O,+}_{k}\psi^{O,-}_{l}+\psi^{O,-}_{l}\psi^{O,+}_{k}=\delta_{k,-l}.
\end{eqnarray}
\end{proposition}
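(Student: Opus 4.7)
The plan is to follow the proof of the symplectic analogue (\ref{symplectic Schur relations}), since the orthogonal fields differ from the symplectic ones only by the redistribution $\psi^{O,+}(u)=(1-u^{2})\psi^{Sp,+}(u)$ and $\psi^{Sp,-}(u)=(1-u^{2})\psi^{O,-}(u)$ of the scalar factor $(1-u^{2})$. All computations rest on repeated use of the four commutation identities in Proposition \ref{generating relation}, on the fact that $E$, $H$, $E^{\bot}$, $H^{\bot}$ commute with $R^{\pm 1}$, and on the elementary identity $R(u)R(v)=(u/v)\,R(v)R(u)$ obtained directly from (\ref{RD}).

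For the same-charge relations, write $\psi^{O,+}(u)\psi^{O,+}(v)$ in its bosonic form and push $R(v)H(v)E^{\bot}(-v)E^{\bot}(-1/v)$ to the left of $H(u)E^{\bot}(-u)E^{\bot}(-1/u)$ using the four identities of Proposition \ref{generating relation}. Collecting the resulting rational scalars together with the prefactor $u^{-1}v^{-1}(1-u^{2})(1-v^{2})$ and the factor $u/v$ from reordering the $R$'s, the outcome is a rational function of $(u,v)$ that is antisymmetric under $u\leftrightarrow v$ multiplied by a normal-ordered product that is symmetric, so the anticommutator vanishes. The argument for $\psi^{O,-}(u)\psi^{O,-}(v)$ is parallel, with the roles of $H,E$ and $H^{\bot},E^{\bot}$ interchanged. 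A shortcut is to observe $\psi^{O,+}(u)\psi^{O,+}(v)=(1-u^{2})(1-v^{2})\psi^{Sp,+}(u)\psi^{Sp,+}(v)$ and $\psi^{Sp,-}(u)\psi^{Sp,-}(v)=(1-u^{2})(1-v^{2})\psi^{O,-}(u)\psi^{O,-}(v)$, whence both same-charge statements follow at once from their symplectic counterparts in (\ref{symplectic Schur relations}).

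For the mixed relation, push $E^{\bot}(-u)E^{\bot}(-1/u)$ through $E(-v)H^{\bot}(v)H^{\bot}(1/v)$ and $H(u)$ through $H^{\bot}(v)H^{\bot}(1/v)$ via the four identities of Proposition \ref{generating relation}, and evaluate $R(u)R^{-1}(v)$ on $\mathcal{B}^{m}$ as the scalar $(u/v)^{m}$. The $(1-u^{2})$ prefactor of $\psi^{O,+}$ and its absence on $\psi^{O,-}$ conspire with the $(1\pm uv)$ and $(1\pm v/u)$ pieces from Proposition \ref{generating relation} so that the two orderings share a common normal-ordered operator and differ only in the choice of expansion of the rational prefactor at its pole $u=v$: one ordering expands as a geometric series in $v/u$, the other in $u/v$. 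Their subtraction yields $\delta(u,v)=\sum_{k+m=-1}u^{k}v^{m}$ while the remaining regular prefactor equals $1$ on the diagonal. This bookkeeping of expansion conventions is the main technical subtlety, and it is the standard mechanism underlying the boson--fermion correspondence. A cleaner route is to multiply through by $(1-u^{2})$, which reduces the identity to its symplectic counterpart in (\ref{symplectic Schur relations}) together with the distributional identity $(1-u^{2})\delta(u,v)=(1-v^{2})\delta(u,v)$.

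Finally, the charged-free-fermion form follows by substituting $\psi^{O,\pm}(u)=\sum_{k\in\mathbb{Z}+\frac{1}{2}}\psi^{O,\pm}_{k}u^{-k-1/2}$ into the distributional relations and extracting coefficients of $u^{-k-1/2}v^{-l-1/2}$: a zero right-hand side gives zero and $\delta(u,v)$ contributes $\delta_{k,-l}$, recovering the stated mode anticommutators.
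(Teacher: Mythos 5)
The paper itself gives no argument for this proposition (it is simply asserted with ``It can be checked''), so there is no internal proof to compare against; your computation supplies exactly the standard verification one would expect, and it is correct in substance. The direct route — normal-ordering via Proposition \ref{generating relation}, tracking the scalar $u/v$ from $R(u)R(v)=(u/v)R(v)R(u)$ so that the same-charge products acquire an antisymmetric rational prefactor times a symmetric normal-ordered operator, and recognizing the mixed anticommutator as the difference of the two expansions of $(1-v/u)^{-1}$ at $|v/u|<1$ and $|v/u|>1$ — checks out; I verified that the residual prefactor $(1-u^2)(1-uv)^{-1}$ and the normal-ordered operator both reduce to $1$ on the diagonal (the latter using $H(u)E(-u)=1$ and its adjoint $E^{\bot}(-u)H^{\bot}(u)=1$, a step worth stating explicitly since the delta function only localizes to $u=v$). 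Two caveats on your shortcuts: deducing the $(-,-)$ relation from $\psi^{Sp,-}(u)\psi^{Sp,-}(v)=(1-u^{2})(1-v^{2})\psi^{O,-}(u)\psi^{O,-}(v)$, and likewise the ``cleaner route'' for the mixed relation, requires dividing by $(1-u^{2})$, and multiplication by $1-u^{2}$ is not injective on arbitrary formal distributions (its kernel consists of distributions with $2$-periodic coefficients); the cancellation is legitimate here only because all matrix coefficients involved have powers of $u$ bounded below (Remark \ref{OKP remark}), which kills any periodic kernel element, and this should be said. The $(+,+)$ reduction is unproblematic since it only multiplies by polynomials. The extraction of the mode relations $\psi^{O,+}_{k}\psi^{O,-}_{l}+\psi^{O,-}_{l}\psi^{O,+}_{k}=\delta_{k,-l}$ from $\delta(u,v)=\sum_{a+b=-1}u^{a}v^{b}$ is correct.
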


\begin{remark} \label{OKP remark}
From the formula (\ref{gf1}) and (\ref{jiont}), we easily get the bosonic form of the fields $\psi^{O,\pm}(u)$:
\begin{eqnarray}\
&&\psi^{O,+}(u)=u^{-1}(1-u^{2})R(u)\exp\left(\sum\limits_{n\geq1}\frac{p_{n}}{n}u^{n}\right)\exp\left(-\sum\limits_{n\geq1}\frac{\partial}{\partial p_{n}}\frac{1}{u^{n}}\right)\exp\left(-\sum\limits_{n\geq1}\frac{\partial}{\partial p_{n}}u^{n}\right), \notag\\
&&\psi^{O,-}(u)=R^{-1}(u)\exp\left(-\sum\limits_{n\geq1}\frac{p_{n}}{n}u^{n}\right)\exp\left(\sum\limits_{n\geq1}\frac{\partial}{\partial p_{n}}\frac{1}{u^{n}}\right)\exp\left(\sum\limits_{n\geq1}\frac{\partial}{\partial p_{n}}u^{n}\right).
\end{eqnarray}
Hence we obtain $\psi^{O,+}_{i}(z^{m})=0$ if $i>-m-\frac{1}{2}$ and $\psi^{O,-}_{i}(z^{m})=0$ if $i>m-\frac{1}{2}$.
\end{remark}

\begin{definition}
For an unknown function $\tau=\tau(x)$, the bilinear equation
\begin{eqnarray}\ \label{OKP bilinear equation}
&&\widetilde{\Omega}(\tau\otimes\tau)=0,
\end{eqnarray}
is called the OKP hierarchy, where
\begin{eqnarray}\
&&\widetilde{\Omega}=\sum\limits_{k\in\mathbb{Z}+\frac{1}{2}}\psi_{k}^{O,+}\otimes\psi_{-k}^{O,-}.
\end{eqnarray}
\end{definition}

\begin{lemma} \label{OKP 1}
Let $\widetilde{X}=\sum\limits_{i>N}C_{i}\psi_{i}^{O,+}$, where $C_{i}\in\mathbb{C}, N\in\mathbb{Z}$. Then $\widetilde{X}^{2}=0$.
\end{lemma}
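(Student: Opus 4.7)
The plan is to mirror exactly the proof of Lemma~\ref{SKP 1}, but now with the orthogonal fermionic modes $\psi^{O,+}_i$ in place of the symplectic ones $\psi^{Sp,+}_i$. The only input I need is the anticommutation relation $\psi^{O,+}_{k}\psi^{O,+}_{l}+\psi^{O,+}_{l}\psi^{O,+}_{k}=0$ for all $k,l\in\mathbb{Z}+\tfrac12$, which has just been recorded as a consequence of the proposition following \eqref{orthogonal Schur relations}.

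The steps, in order, are as follows. First, I would expand the square as a double sum
\begin{equation*}
\widetilde{X}^{2}=\Bigl(\sum_{l>N}C_{l}\psi^{O,+}_{l}\Bigr)\Bigl(\sum_{k>N}C_{k}\psi^{O,+}_{k}\Bigr)=\sum_{l>N}\sum_{k>N}C_{l}C_{k}\,\psi^{O,+}_{l}\psi^{O,+}_{k}.
\end{equation*}
Second, I would note that the summand is antisymmetric in $(l,k)$ while the coefficient $C_{l}C_{k}$ is symmetric: for any pair with $l\neq k$, the terms corresponding to $(l,k)$ and $(k,l)$ cancel by the anticommutation relation, and for $l=k$ the same relation gives $\bigl(\psi^{O,+}_{k}\bigr)^{2}=0$. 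Hence every contribution to the double sum vanishes and $\widetilde{X}^{2}=0$.

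There is essentially no obstacle here: the statement is a formal consequence of the Fermi anticommutation relation established in the preceding proposition, and the argument is identical in form to that of Lemma~\ref{SKP 1}. The only thing worth being careful about is that the sum $\sum_{i>N}$ is infinite, so one should understand the identity $\widetilde{X}^{2}=0$ in the sense of operators on $\mathcal{B}$: applied to any $z^{m}f$ with $f\in\Lambda$, Remark~\ref{OKP remark} guarantees that only finitely many modes $\psi^{O,+}_{i}$ with $i>N$ act nontrivially, so all rearrangements above are finite and legitimate.
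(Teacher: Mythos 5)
Your proof is correct and follows exactly the route the paper takes: the paper omits the proof of Lemma~\ref{OKP 1}, referring back to Lemma~\ref{SKP 1}, whose proof is precisely the double-sum expansion killed by the anticommutation relation $\psi^{O,+}_{k}\psi^{O,+}_{l}+\psi^{O,+}_{l}\psi^{O,+}_{k}=0$. Your added remark on why the infinite sum causes no trouble (via Remark~\ref{OKP remark}) is a sensible precaution the paper leaves implicit.
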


\begin{lemma} \label{OKP 2}
Let $\widetilde{X}=\sum\limits_{i>N}C_{i}\psi_{i}^{O,+}$, where $C_{i}\in\mathbb{C}, N\in\mathbb{Z}$. Then $\widetilde{\Omega}(\widetilde{X}\otimes \widetilde{X})=(\widetilde{X}\otimes \widetilde{X})\widetilde{\Omega}$.
\end{lemma}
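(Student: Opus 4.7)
The plan is to mirror the proof of Lemma \ref{SKP 2} line by line, since the operators $\psi^{O,\pm}$ obey anticommutation relations of the same form as $\psi^{Sp,\pm}$, and Lemma \ref{OKP 1} (the orthogonal analogue of $\widehat{X}^2=0$) has already been stated.

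First I would expand $\widetilde{\Omega}(\widetilde{X}\otimes\widetilde{X}) = \sum_{l\in\mathbb{Z}+\frac{1}{2}} (\psi_{l}^{O,+}\widetilde{X})\otimes(\psi_{-l}^{O,-}\widetilde{X})$. On the first tensor factor I use $\psi_{l}^{O,+}\psi_{i}^{O,+}+\psi_{i}^{O,+}\psi_{l}^{O,+}=0$ to move $\widetilde{X}$ leftwards, producing $\psi_{l}^{O,+}\widetilde{X}=-\widetilde{X}\psi_{l}^{O,+}$. On the second tensor factor I use $\psi_{-l}^{O,-}\psi_{i}^{O,+}=-\psi_{i}^{O,+}\psi_{-l}^{O,-}+\delta_{i,l}$ to obtain
\begin{eqnarray*}
\psi_{-l}^{O,-}\widetilde{X}=-\widetilde{X}\psi_{-l}^{O,-}+C_{l}.
\end{eqnarray*}

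Substituting these into the sum, the cross-term collapses onto $\widetilde{X}\bigl(\sum_{l>N} C_{l}\psi_{l}^{O,+}\bigr)\otimes 1 = \widetilde{X}^{2}\otimes 1$, which vanishes by Lemma \ref{OKP 1}. What remains is exactly $(\widetilde{X}\otimes\widetilde{X})\sum_{l}\psi_{l}^{O,+}\otimes\psi_{-l}^{O,-}=(\widetilde{X}\otimes\widetilde{X})\widetilde{\Omega}$, giving the desired commutation.

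There is essentially no obstacle beyond keeping track of signs; the only point worth a moment's care is the legitimacy of the formal manipulation when $\widetilde{X}$ is an infinite series in $\psi_{i}^{O,+}$. This is justified because when applied to any vector of fixed charge $z^{m}f$ in $\mathcal{B}$, Remark \ref{OKP remark} ensures $\psi_{i}^{O,+}(z^{m}f)=0$ for $i$ sufficiently large, so only finitely many terms of $\widetilde{X}$ contribute. Thus the rearrangement is well-defined on $\mathcal{B}\otimes\mathcal{B}$ and the proof of Lemma \ref{SKP 2} transports verbatim.
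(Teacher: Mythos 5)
Your proof is correct and follows exactly the route the paper intends: the paper omits the proof of Lemma \ref{OKP 2}, stating only that it is "quite similar" to Lemma \ref{SKP 2}, and your argument is precisely that SKP proof transported to the orthogonal fields, using $\psi_{-l}^{O,-}\widetilde{X}=-\widetilde{X}\psi_{-l}^{O,-}+C_{l}$ and the vanishing of $\widetilde{X}^{2}$ from Lemma \ref{OKP 1}. Your closing remark justifying the formal manipulation of the infinite sum via Remark \ref{OKP remark} is a point of care the paper does not spell out, but it does not alter the argument.
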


\begin{corollary} \label{OKP corollary 2}
Let $\tau \in \mathcal{B}^m$ be a tau-function of the OKP hierarchy, and let $\widetilde{X}=\sum\limits_{i>N}C_{i}\psi_{i}^{O,+}$, where $C_{i}\in\mathbb{C}, N\in\mathbb{Z}$. Then $\widetilde{\tau}=\widetilde{X}\tau \in \mathcal{B}^{m+1}$ is also a tau-functions of the OKP hierarchy.
\end{corollary}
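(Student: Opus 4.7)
The plan is to mimic verbatim the proof of Corollary~\ref{SKP corollary 2}, since the OKP setup is structurally identical: Lemmas~\ref{OKP 1} and \ref{OKP 2} are the exact analogues of Lemmas~\ref{SKP 1} and \ref{SKP 2}, and the bilinear operator $\widetilde{\Omega}$ has the same shape as $\widehat{\Omega}$. So no new ideas are needed; the content is really a bookkeeping check that the grading lands in $\mathcal{B}^{m+1}$.

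First I would start from the hypothesis $\widetilde{\Omega}(\tau\otimes\tau)=0$ and apply $\widetilde{X}\otimes\widetilde{X}$ on the left to obtain $(\widetilde{X}\otimes\widetilde{X})\widetilde{\Omega}(\tau\otimes\tau)=0$. Then I would invoke Lemma~\ref{OKP 2} to commute past the bilinear form, giving $\widetilde{\Omega}(\widetilde{X}\otimes\widetilde{X})(\tau\otimes\tau)=\widetilde{\Omega}(\widetilde{X}\tau\otimes\widetilde{X}\tau)=0$, which says precisely that $\widetilde{\tau}=\widetilde{X}\tau$ satisfies the OKP bilinear equation.

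To see $\widetilde{X}\tau\in\mathcal{B}^{m+1}$, I would read off from the definition of $\psi^{O,+}(u)$ in (\ref{orthogonal Schur vertex operators}) that each mode $\psi^{O,+}_i$ contains the raising operator $R(u)$ and therefore sends $\mathcal{B}^m$ into $\mathcal{B}^{m+1}$; since $\widetilde{X}$ is a (formal) linear combination of such modes, $\widetilde{X}\tau$ inherits this grading shift. Finally, one must check that $\widetilde{X}\tau$ is nonzero, but this is not actually required by the statement (any solution, including zero, is formally a ``tau-function'' in this bilinear sense; or, the cutoff $i>N$ together with Remark~\ref{OKP remark} guarantees $\widetilde{X}$ is a well-defined operator so $\widetilde{X}\tau$ is at worst zero).

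The only potentially subtle step is the commutation in Lemma~\ref{OKP 2}, which itself rests on the anticommutation relations $\psi^{O,-}_{-l}\widetilde{X}=-\widetilde{X}\psi^{O,-}_{-l}+C_l$ together with $\widetilde{X}^2=0$ from Lemma~\ref{OKP 1}; but both are already granted, so the proof is really just a one-line quotation of those lemmas. I do not expect any genuine obstacle.
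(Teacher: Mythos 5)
Your proposal matches the paper's argument exactly: the paper proves this corollary by citing the proof of Corollary~\ref{SKP corollary 2}, which is precisely the computation you describe — left-multiply $\widetilde{\Omega}(\tau\otimes\tau)=0$ by $\widetilde{X}\otimes\widetilde{X}$ and commute via Lemma~\ref{OKP 2}. Your extra remark on the grading shift via $R(u)$ is correct and slightly more careful than the paper, which leaves that point implicit.
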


\begin{proof}
The proof of the Lemma \ref{OKP 1}, \ref{OKP 2} and Corollary \ref{OKP corollary 2} is quite similar to the Lemma \ref{SKP 1}, \ref{SKP 2} and Corollary \ref{SKP corollary 2}, so is omitted.
\end{proof}

\subsection{Generating functions and the polynomial tau-functions of the OKP hierarchy}

It is  known that orthogonal Schur functions are tau-functions of the OKP hierarchy \cite{Wang2021}. Let $\widetilde{G}(u_{1},\ldots,u_{l})$ be a generating function of the orthogonal Schur function in $\mathbf{u}=(u_{1},\ldots,u_{l})$ defined by
\begin{eqnarray}\ \label{OKP generating function}
\widetilde{G}(u_{1},\ldots,u_{l})=\prod\limits_{1\leq i<j\leq l}\left(u_{i}-u_{j}\right)\prod\limits_{1\leq i\leq j\leq l}\left(1-u_{i}u_{j}\right)\prod\limits_{i=1}^{l}H(u_{i}).
\end{eqnarray}

By Proposition \ref{generating relation}, we obatin
\begin{eqnarray}
&&\psi^{O,+}(u_1)\cdots\psi^{O,+}(u_l)(z^{k}f)=z^{k+l}u_{1}^{k}\cdots u_{l}^{k}\widetilde{G}(u_{1},\ldots,u_{l}).
\end{eqnarray}

Consider the set of formal Laurent series $\widetilde{A}_{1}(u),\ldots,\widetilde{A}_{l}(u)$, define the formal Laurent series $\widetilde{T}_{i}(u)=\widetilde{A}_{i}(u)H(u)=\sum\limits_{p\in\mathbb{Z}}\widetilde{T}_{i,p}u^{p},\ i=1,\ldots,l$. Besides, let $\widetilde{T}(u_{1},\ldots,u_{l})$ be a formal Laurent series in $(u_{1},\ldots,u_{l})$ defined by
\begin{eqnarray}\
\widetilde{T}(u_{1},\ldots,u_{l})=\prod\limits_{1\leq i<j\leq l}\left(u_{i}-u_{j}\right)\prod\limits_{1\leq i\leq j\leq l}\left(1-u_{i}u_{j}\right)\prod\limits_{i=1}^{l}\widetilde{A}_{i}(u_{i})H(u_{i}).
\end{eqnarray}
For any vector $\boldsymbol\zeta=(\zeta_{1},\ldots,\zeta_{l})\in \mathbb{Z}^{l}$, $\widetilde{T}_{\boldsymbol\zeta}$ is the coefficient of the following expansion
\begin{eqnarray}\ \label{OKP T}
\widetilde{T}(u_{1},\ldots,u_{l})=\sum\limits_{\boldsymbol\zeta\in\mathbb{Z}^{l}}\widetilde{T}_{\boldsymbol\zeta}u_{1}^{\zeta_{1}}\cdots u_{l}^{\zeta_{l}}.
\end{eqnarray}

\begin{theorem} \label{OKP theorem}
\begin{itemize}
\item [1)] Formal Laurent series $\widetilde{T}(u_{1},\ldots,u_{l})$ can be written as
\begin{eqnarray}\
&&\widetilde{T}(u_{1},\ldots,u_{l})=\det\left[\left(u_{i}^{l-j}-u_{i}^{l+j}\right)\widetilde{T}_{i}(u_{i})\right]_{1\leq i,j\leq l}.
\end{eqnarray}
\item [2)]
For any vector $\boldsymbol\zeta=(\zeta_{1},\ldots,\zeta_{l})\in \mathbb{Z}^{l}$,  the coefficient $\widetilde{T}_{\boldsymbol\zeta}$ of $u_{1}^{\zeta_{1}}\cdots u_{l}^{\zeta_{l}}$ in (\ref{OKP T}) is given by
\begin{eqnarray}\
&&\widetilde{T}_{\boldsymbol\zeta}=\det\left[\widetilde{T}_{i,\zeta_{i}-l-j}-\widetilde{T}_{i,\zeta_{i}-l+j}\right]_{1\leq i,j\leq l}.
\end{eqnarray}
\item [3)]
$\widetilde{T}_{\boldsymbol\zeta}$ is a polynomial tau-function of the OKP hierarchy.
\end{itemize}
\end{theorem}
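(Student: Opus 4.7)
The plan is to follow the architecture of the proof of Theorem \ref{SKP theorem}, substituting the symplectic Vandermonde-like identity by its orthogonal counterpart. The central combinatorial ingredient I would first establish is the identity
\begin{equation}
\det\left[u_{i}^{l-j}-u_{i}^{l+j}\right]_{1\leq i,j\leq l}=\prod_{1\leq i<j\leq l}(u_{i}-u_{j})\prod_{1\leq i\leq j\leq l}(1-u_{i}u_{j}),
\end{equation}
which is a standard Weyl-denominator-style formula. A clean derivation is to note that the left-hand side is alternating in $u_{1},\ldots,u_{l}$ and vanishes on every hyperplane $u_{i}u_{j}=1$ (including the diagonal $i=j$, because columns $j=0$ and $j=-0$ coalesce when $u_i^2=1$, producing a common factor of $1-u_i^2$ in each row); it is therefore divisible by the right-hand side, and a degree count plus a single leading-monomial comparison pins down the constant as $1$. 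Once the identity is in hand, the prefactor of $\widetilde{G}$ in \eqref{OKP generating function} can be absorbed row-by-row into this determinant, and multiplying the $i$-th row by $\widetilde{T}_{i}(u_{i})=\widetilde{A}_{i}(u_{i})H(u_{i})$ immediately yields part 1).

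Part 2) is then a direct Leibniz expansion of the determinant of part 1). The coefficient of $u_{i}^{\zeta_{i}}$ in the $(i,j)$-entry $(u_{i}^{l-j}-u_{i}^{l+j})\widetilde{T}_{i}(u_{i})=\sum_{p\in\mathbb{Z}}\widetilde{T}_{i,p}(u_{i}^{p+l-j}-u_{i}^{p+l+j})$ equals $\widetilde{T}_{i,\zeta_{i}-l+j}-\widetilde{T}_{i,\zeta_{i}-l-j}$. Interchanging the sum over $\sigma\in S_{l}$ with the extraction of the coefficient of $u_{1}^{\zeta_{1}}\cdots u_{l}^{\zeta_{l}}$ repackages $\widetilde{T}_{\boldsymbol\zeta}$ as the stated determinant, with the global sign $(-1)^{l}$ that comes from negating every row being absorbed by rewriting the entries as $\widetilde{T}_{i,\zeta_{i}-l-j}-\widetilde{T}_{i,\zeta_{i}-l+j}$.

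For part 3) the strategy is to realize $\widetilde{T}_{\boldsymbol\zeta}$ as a composition of fermionic modes acting on the vacuum. The computation displayed just after \eqref{OKP generating function} gives $\psi^{O,+}(u_{1})\cdots\psi^{O,+}(u_{l})(z^{k}\cdot 1)=z^{k+l}u_{1}^{k}\cdots u_{l}^{k}\widetilde{G}(u_{1},\ldots,u_{l})$. Multiplying through by the scalar Laurent series $\widetilde{A}_{1}(u_{1})\cdots\widetilde{A}_{l}(u_{l})$, setting $k=0$, and reading off the coefficient of $u_{1}^{\zeta_{1}}\cdots u_{l}^{\zeta_{l}}$ exhibits $\widetilde{T}_{\boldsymbol\zeta}=\widetilde{X}_{1}\cdots\widetilde{X}_{l}(1)$, where each $\widetilde{X}_{j}$ is a linear combination of positive fermions $\psi^{O,+}_{i_{j}}$ whose index $i_{j}$ is bounded below (inherited from the lowest-degree term of $\widetilde{A}_{j}$). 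By Remark \ref{OKP remark} only finitely many such modes produce a non-zero contribution on $1\in\mathcal{B}^{0}$, so $\widetilde{T}_{\boldsymbol\zeta}$ is a finite $\mathbb{C}$-linear combination of $\psi^{O,+}_{i_{1}}\cdots\psi^{O,+}_{i_{l}}(1)$ and hence a polynomial in $\Lambda$. Iterating Corollary \ref{OKP corollary 2}, starting from the tau-function $1$ and applying $\widetilde{X}_{l},\widetilde{X}_{l-1},\ldots,\widetilde{X}_{1}$ in turn, then concludes that $\widetilde{T}_{\boldsymbol\zeta}$ is a polynomial tau-function of the OKP hierarchy.

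I expect the main obstacle to be bookkeeping rather than conceptual. One must verify carefully that each intermediate operator $\widetilde{X}_{j}$ really satisfies the truncated-from-below hypothesis required by Corollary \ref{OKP corollary 2} (needed for the key vanishing $\widetilde{X}_{j}^{2}=0$), and that the signs in the orthogonal Vandermonde expansion line up with the statement. Everything else is a literal transcription of the SKP argument, where the only substantive new input is the orthogonal Weyl-denominator identity above.
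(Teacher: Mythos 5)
Your proposal follows essentially the same route as the paper: the orthogonal Vandermonde-like (Weyl denominator) identity for part 1), row-wise Leibniz/coefficient extraction for part 2), and the realization $\widetilde{T}_{\boldsymbol\zeta}=\widetilde{X}_{1}\cdots\widetilde{X}_{l}(1)$ combined with Remark \ref{OKP remark} and Corollary \ref{OKP corollary 2} for part 3); the only genuine (and welcome) addition is that you derive the identity $\det\left[u_{i}^{l-j}-u_{i}^{l+j}\right]=\prod_{i<j}(u_{i}-u_{j})\prod_{i\leq j}(1-u_{i}u_{j})$ by divisibility and degree count instead of citing \cite{Jing2015}. One caveat on part 2): the coefficient you extract is $\det\bigl[\widetilde{T}_{i,\zeta_{i}-l+j}-\widetilde{T}_{i,\zeta_{i}-l-j}\bigr]$, and ``rewriting the entries'' with the opposite order does \emph{not} absorb the factor $(-1)^{l}$ --- it introduces it, so for odd $l$ your final formula (and, in fact, the paper's own displayed formula, which makes the same silent swap) differs from the true coefficient by a sign; this is harmless for part 3) but should be stated as $\widetilde{T}_{\boldsymbol\zeta}=(-1)^{l}\det\bigl[\widetilde{T}_{i,\zeta_{i}-l-j}-\widetilde{T}_{i,\zeta_{i}-l+j}\bigr]$ or with the terms in the order you actually obtained.
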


\begin{proof}
\begin{itemize}
  \item [1)]According to Vandermonde-like identity \cite{Jing2015}
  \begin{eqnarray}
  \det\left[u_{i}^{k-j}-u_{i}^{k+j}\right]&=&\prod\limits_{1\leq i<j\leq k}\left(u_{i}-u_{j}\right)\prod\limits_{1\leq i\leq j\leq k}\left(1-u_{i}u_{j}\right)\notag\\
  &=&\sum\limits_{{\sigma\in S_{k}}\atop{\varepsilon_{i}=\pm1}}sgn(\sigma)\varepsilon_{1}\cdots\varepsilon_{k}u_{1}^{k-\varepsilon_{1}\sigma(1)}\cdots u_{k}^{k-\varepsilon_{k}\sigma(k)},
  \end{eqnarray}
  we have
  \begin{eqnarray}
  \widetilde{T}(u_{1},\ldots,u_{l})&=&\prod\limits_{1\leq i<j\leq l}\left(u_{i}-u_{j}\right)\prod\limits_{1\leq i\leq j\leq l}\left(1-u_{i}u_{j}\right)\prod\limits_{i=1}^{l}\widetilde{A}_{i}(u_{i})H(u_{i})\notag\\
  &=&\det\left[u_{i}^{l-j}-
  u_{i}^{l+j}\right]\prod\limits_{i=1}^{l}\widetilde{T}_{i}(u_{i})=\det\left[\left(u_{i}^{l-j}-u_{i}^{l+j}\right)\widetilde{T}_{i}(u_{i})\right]_{1\leq i,j\leq l}.
  \end{eqnarray}
  \item [2)]Noticing that
  \begin{align}
  \widetilde{T}(u_{1},\ldots,u_{l})&=\det\left[\sum\limits_{p_{i}\in\mathbb{Z}}\widetilde{T}_{i,p_{i}}\left(u_{i}^{l+p_{i}-j}-u_{i}^{l+p_{i}+j}\right)\right]\notag\\
  &=\sum\limits_{p_{i}\in\mathbb{Z}}\sum\limits_{{\sigma\in S_{l}}\atop{\varepsilon_{i}=\pm1}}sgn(\sigma)\varepsilon_{1}\cdots\varepsilon_{l}\widetilde{T}_{1,p_{1}}u_{1}^{l+p_{1}-\varepsilon_{1}\sigma(1)}
  \cdots\widetilde{T}_{l,p_{l}}u_{l}^{l+p_{l}-\varepsilon_{l}\sigma(l)}\notag\\
  &=\sum\limits_{\zeta_{i}\in\mathbb{Z}}\sum\limits_{{\sigma\in S_{l}}\atop{\varepsilon_{i}=\pm1}}sgn(\sigma)\varepsilon_{1}\cdots\varepsilon_{l}\widetilde{T}_{1,\zeta_{1}-l+\varepsilon_{1}\sigma(1)}\cdots \widetilde{T}_{l,\zeta_{l}-l+\varepsilon_{l}\sigma(l)}u_{1}^{\zeta_{1}}\cdots u_{l}^{\zeta_{l}}\notag\\
  &=\sum\limits_{\zeta_{i}\in\mathbb{Z}}\det\left[\widetilde{T}_{i,\zeta_{i}-l-j}-\widetilde{T}_{i,\zeta_{i}-l+j}\right]_{1\leq i,j\leq l}u_{1}^{\zeta_{1}}\cdots u_{l}^{\zeta_{l}}.
  \end{align}
  Obviously, the coefficient $\widetilde{T}_{\boldsymbol\zeta}$ of $u_{1}^{\zeta_{1}}\cdots u_{l}^{\zeta_{l}}$ is $\det\left[\widetilde{T}_{i,\zeta_{i}-l-j}-\widetilde{T}_{i,\zeta_{i}-l+j}\right]_{1\leq i,j\leq l}$.
  \item [3)]From (\ref{OKP generating function}), it is straightforward to show that
  \begin{eqnarray}
  &&\widetilde{A}_{1}(u_{1})\cdots \widetilde{A}_{l}(u_{l})\psi^{O,+}(u_{1})\cdots \psi^{O,+}(u_{l})(z^{k}\cdot1)=z^{l+k}u_{1}^{k}\cdots u_{l}^{k}\widetilde{T}(u_{1},\ldots,u_{l}).
  \end{eqnarray}
  Let $\widetilde{A}_{j}(u)=\sum\limits_{M_{j}\leq r\leq N_{j}}\widetilde{A}_{j,r-\frac{1}{2}}u^{r}(\widetilde{A}_{j,r-\frac{1}{2}}\in\mathbb{C}, M_{j},N_{j},r\in\mathbb{Z}, j=1,\ldots,l)$ be a power series expansion of the variable $u$. Therefore,  $\widetilde{T}_{\boldsymbol\zeta}$  can be written as
  \begin{eqnarray}
  &&\widetilde{T}_{\boldsymbol\zeta}=z^{-l-k}\widetilde{X}_{1}\cdots \widetilde{X}_{l}(z^{k}\cdot1),
  \end{eqnarray}
  where
  \begin{eqnarray} \label{OKP dengjia 1}
  &&\widetilde{X}_{j}=\sum\limits_{M_{j}-\zeta_{j}-k-\frac{1}{2}\leq i_{j}\leq N_{j}-\zeta_{j}-k-\frac{1}{2}}\widetilde{A}_{j,\zeta_{j}+k+i_{j}}\psi^{O,+}_{i_{j}},\quad j=1,\cdots,l.
  \end{eqnarray}

 By Remark \ref{OKP remark} and Corollary \ref{OKP corollary 2}, it should be pointed out that  the coefficient $\widetilde{T}_{\boldsymbol\zeta}$ is a tau-function of the OKP hierarchy with $k=0$. Since $\widetilde{T}_{\boldsymbol\zeta}$ is a finite linear combination of $\psi_{i_{1}}^{O,+}\cdots\psi_{i_{l}}^{O,+}(1)$, it is a polynomial tau-function.
\end{itemize}
\end{proof}

By changing $\widetilde{A}_{j}(u)\rightarrow u^{\zeta_{j}}\widetilde{A}_{j}(u)$, we obtain
   \begin{eqnarray}\
  &&\widetilde{A}_{i}(u)=u^{N_{i}}\widetilde{h}_{i}\sum\limits_{k=0}^{\infty}\widetilde{a}_{i,k}u^{k},\ \  N_{i}\in\mathbb{Z}, \widetilde{h}_{i},\widetilde{a}_{i,k}\in\mathbb{C}, \widetilde{a}_{i,0}=1, \widetilde{h}_{i}\neq0, i=1,\ldots,l.
  \end{eqnarray}
From (\ref{Schur Taylor}), $\sum\limits_{k=0}^{\infty}\widetilde{a}_{i,k}u^{k}$ can be expressed as
\begin{eqnarray}
&&\sum\limits_{k=0}^{\infty}\widetilde{a}_{i,k}u^{k}=\exp\left(\sum\limits_{l=1}^{\infty}\widetilde{c}_{i,l}u^{l}\right),\ \ \text{and} \ \  \widetilde{a}_{i,k}=S_{k}(\widetilde{c}_{i,1},\widetilde{c}_{i,2},\ldots),
\end{eqnarray}
where $\{\widetilde{c}_{i,l}\}$ are constants in $\mathbb{C}$. Then based on (\ref{gf1}), we get
  \begin{eqnarray}\
  \widetilde{T}_{i}(u)&=&\widetilde{A}_{i}(u)H(u)=u^{N_{i}}\widetilde{h}_{i}\sum\limits_{l=0}^{\infty}
  S_{l}(t_{1}+\widetilde{c}_{i,1},t_{2}+\widetilde{c}_{i,2},\ldots)u^{l}.
  \end{eqnarray}
  Hence $\widetilde{T}_{i,p}=\widetilde{h}_{i}S_{p-N_{i}}(t_{1}+\widetilde{c}_{i,1},t_{2}+\widetilde{c}_{i,2},\ldots),\ i=1,\ldots,l $.  From Theorem \ref{OKP theorem}, polynomial tau-functions of the OKP hierarchy have the form
\begin{eqnarray}\label{ptau}
\widetilde{T}_{\boldsymbol\zeta}&=&\det\left[\widetilde{T}_{i,\zeta_{i}-l-j}-\widetilde{T}_{i,\zeta_{i}-l+j}\right]\notag\\
&=&\det\left[\widetilde{h}_{i}S_{\zeta_{i}-l-j-N_{i}}(t_{1}+\widetilde{c}_{i,1},t_{2}+\widetilde{c}_{i,2},\ldots)-\widetilde{h}_{i}S_{\zeta_{i}-l+j-N_{i}}
(t_{1}+\widetilde{c}_{i,1},t_{2}+\widetilde{c}_{i,2},\ldots)\right]\notag\\
&=&\prod\limits_{i=1}^{l}(\widetilde{h}_{i})\det\left[S_{\zeta_{i}-l-j-N_{i}}(t_{1}+\widetilde{c}_{i,1},t_{2}+\widetilde{c}_{i,2},\ldots)-S_{\zeta_{i}-l+j-N_{i}}
(t_{1}+\widetilde{c}_{i,1},t_{2}+\widetilde{c}_{i,2},\ldots)\right]_{i,j=1,\ldots,l}.\notag\\
\end{eqnarray}
Under the reduction $\widetilde{c}_{i,l}=0$, $\widetilde{h}_{i}=-1$ and $l+N_{i}=i$, $\widetilde{T}_{\boldsymbol\zeta}$ lead to the orthogonal Schur functions \cite{Wang2021}. Thus the polynomial tau-functions (\ref{ptau}) of the OKP hierarchy can be reduced to the solution of the OKP hierarchy in \cite{Wang2021}, which are the zero mode of an appropriate combinatorial generating functions.

\subsection{$N$-soliton solutions of the OKP hierarchy}
Let
\begin{eqnarray}\
&&\Gamma^{O}(p,q)=p^{-1}(1-p^{2})R(p)R^{-1}(q)H(p)E(-q)E^{\bot}(-p)H^{\bot}(q)E^{\bot}(-\frac{1}{p})H^{\bot}(\frac{1}{q}).
\end{eqnarray}
From Proposition \ref{generating relation}, it is easy to check that
\begin{eqnarray}\ \label{OKP soliton relation}
&&\Gamma^{O}(p_{i},q_{i})\Gamma^{O}(p_{j},q_{j})=A_{ij}:\Gamma^{O}(p_{i},q_{i})\Gamma^{O}(p_{j},q_{j}):,
\end{eqnarray}
where
\begin{eqnarray}\
&&A_{ij}=\frac{(1-p_{i}p_{j})(1-q_{i}q_{j})(p_{i}-p_{j})(q_{i}-q_{j})}{(1-q_{i}p_{j})(1-p_{i}q_{j})(q_{i}-p_{j})(p_{i}-q_{j})}.
\end{eqnarray}
In particular $\Gamma^{O}(p,q)^{2}=0$; therefore $e^{pc\Gamma^{O}(p,q)}=1+pc\Gamma^{O}(p,q)$.

\begin{lemma} \label{OKP tau 1}
If $\tau$ is a solution of the OKP hierarchy, then $\Gamma^{O}(u,v)\tau$ is also a solution.
\end{lemma}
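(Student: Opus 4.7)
The plan is to transcribe the proof of Lemma \ref{SKP tau lemma1} essentially verbatim, since the structural ingredients used there—anticommutation of the modes of $\psi^{\bullet,+}, \psi^{\bullet,-}$, the shape of the bilinear operator $\Omega$, and a normal-ordering identification of $\Gamma^{\bullet}$ with $\psi^{\bullet,+}\psi^{\bullet,-}$—all have identical analogues on the OKP side. The superscript ``$Sp$'' is replaced throughout by ``$O$'', with $\widetilde{\Omega}$ in place of $\widehat{\Omega}$.

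First I would establish the commutation identity
\begin{equation*}
\widetilde{\Omega}\bigl(\psi^{O,+}(u)\psi^{O,-}(v)\otimes\psi^{O,+}(u)\psi^{O,-}(v)\bigr)=\bigl(\psi^{O,+}(u)\psi^{O,-}(v)\otimes\psi^{O,+}(u)\psi^{O,-}(v)\bigr)\widetilde{\Omega}.
\end{equation*}
This is carried out exactly as in the proof of Lemma \ref{SKP tau lemma1}: expand $\widetilde{\Omega}=\sum_{l}\psi_{l}^{O,+}\otimes\psi_{-l}^{O,-}$ and the two copies of $\psi^{O,+}(u)\psi^{O,-}(v)$ in modes, use the three-term anticommutation relations
\begin{equation*}
\psi^{O,\pm}_{k}\psi^{O,\pm}_{l}+\psi^{O,\pm}_{l}\psi^{O,\pm}_{k}=0,\qquad \psi^{O,+}_{k}\psi^{O,-}_{l}+\psi^{O,-}_{l}\psi^{O,+}_{k}=\delta_{k,-l}
\end{equation*}
stated after \eqref{orthogonal Schur relations} to move $\psi^{O,+}_{l}$ (resp.\ $\psi^{O,-}_{-l}$) across each tensor factor, and check that the anomalous $\delta$-contributions cancel pairwise exactly as in the symplectic computation. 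Applying the resulting commutation to $\tau\otimes\tau$ and invoking $\widetilde{\Omega}(\tau\otimes\tau)=0$ immediately yields $\widetilde{\Omega}\bigl(\psi^{O,+}(u)\psi^{O,-}(v)\tau\otimes\psi^{O,+}(u)\psi^{O,-}(v)\tau\bigr)=0$, so $\psi^{O,+}(u)\psi^{O,-}(v)\tau$ solves the OKP hierarchy.

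To conclude, I would identify $\Gamma^{O}(u,v)$ with a scalar multiple of $\psi^{O,+}(u)\psi^{O,-}(v)$. Using the definitions in \eqref{orthogonal Schur vertex operators} together with the normal-ordering rules of Proposition \ref{generating relation} to move $E^{\bot}(-u)E^{\bot}(-\tfrac{1}{u})$ past $E(-v)H^{\bot}(v)H^{\bot}(\tfrac{1}{v})$, I expect the relation
\begin{equation*}
\psi^{O,+}(u)\psi^{O,-}(v)=\frac{1}{1-uv}\cdot\frac{u}{u-v}\,\Gamma^{O}(u,v),
\end{equation*}
from which the invariance of the solution space under $\Gamma^{O}(u,v)$ is immediate. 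The only genuine piece of bookkeeping lies here: one must carefully track the scalar prefactors $(1-uv)^{-1}$ and $u/(u-v)$ produced by Proposition \ref{generating relation}, and verify that they combine correctly with the explicit $u^{-1}(1-u^{2})$ factor built into $\psi^{O,+}(u)$ and the corresponding $p^{-1}(1-p^{2})$ prefactor in $\Gamma^{O}(p,q)$. Once this scalar matches, the lemma follows at once, and this identification is the only place where the OKP proof is not a pure copy–paste of the SKP one.
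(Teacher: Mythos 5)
Your proposal is correct and coincides with the paper's own treatment: the paper simply states that Lemma \ref{OKP tau 1} is proved "with the similar procedure as in Lemma \ref{SKP tau lemma1}", which is exactly the verbatim transcription you carry out. Your scalar identification $\psi^{O,+}(u)\psi^{O,-}(v)=\frac{1}{1-uv}\frac{u}{u-v}\,\Gamma^{O}(u,v)$ is the correct analogue (the factors $\frac{u}{u-v}$ and $\frac{1}{1-uv}$ arise from commuting $E^{\bot}(-u)$ and $E^{\bot}(-\tfrac{1}{u})$ past $E(-v)$, and the prefactor $u^{-1}(1-u^{2})$ of $\psi^{O,+}$ matches $p^{-1}(1-p^{2})$ in $\Gamma^{O}$), so nothing further is needed.
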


\begin{lemma} \label{OKP tau 2}
It holds that
\begin{eqnarray}
&&[\widetilde{\Omega},1\otimes\Gamma^{O}(p,q)+\Gamma^{O}(p,q)\otimes1]=0.
\end{eqnarray}
\end{lemma}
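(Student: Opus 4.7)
The strategy is to reduce the statement to a purely fermionic calculation by first factoring $\Gamma^{O}(p,q)$ through the Clifford generators $\psi^{O,\pm}$. I would verify the identity
$$\Gamma^{O}(p,q) \;=\; (1-pq)\,\frac{p-q}{p}\,\psi^{O,+}(p)\psi^{O,-}(q),$$
which is the OKP analogue of the identity quoted in the proof of Lemma \ref{SKP tau lemma2}. Starting from the product $\psi^{O,+}(p)\psi^{O,-}(q)$, the shift operators $R(p)$ and $R^{-1}(q)$ commute with every $H(\cdot), E(\cdot), H^{\bot}(\cdot), E^{\bot}(\cdot)$ since they only rescale the $z$-grading, and the operators $E^{\bot}(\cdot)$ and $H^{\bot}(\cdot)$ commute among themselves because they are both exponentials in the partial derivatives $\partial_{p_k}$. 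Moving $E(-q)$ leftward past $E^{\bot}(-1/p)$ and then past $E^{\bot}(-p)$ via Proposition \ref{generating relation} produces the scalar factors $1/(1-pq)$ and $p/(p-q)$ respectively, after which a free commutation of the remaining $E^{\bot}$ and $H^{\bot}$ blocks reproduces the ordering in the definition of $\Gamma^{O}(p,q)$.

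Once this factorization is in hand, the prefactor $(1-pq)(p-q)/p$ is a $c$-number and commutes with $\widetilde{\Omega}$, so the claim is equivalent to
$$\bigl[\widetilde{\Omega},\,1\otimes A + A\otimes 1\bigr]=0, \qquad A=\psi^{O,+}(p)\psi^{O,-}(q).$$
I would expand $\widetilde{\Omega}=\sum_{l\in\mathbb{Z}+1/2}\psi^{O,+}_{l}\otimes\psi^{O,-}_{-l}$ in modes and compute the two pieces of the commutator separately. From the anticommutation relations (\ref{orthogonal Schur relations}) one first derives the contractions
$$[\psi^{O,-}_{-l},A]=p^{-l-1/2}\psi^{O,-}(q),\qquad [\psi^{O,+}_{l},A]=-q^{l-1/2}\psi^{O,+}(p),$$
where the minus sign in the second identity reflects the anticommutation of $\psi^{O,+}_l$ with the leading $\psi^{O,+}(p)$ of $A$. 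Resumming the mode indices with the aid of $\psi^{O,+}(p)=\sum_l p^{-l-1/2}\psi^{O,+}_l$ and $\psi^{O,-}(q)=\sum_m q^{-m-1/2}\psi^{O,-}_m$ (after the change of index $m=-l$ in the second piece), the two commutators read $+\psi^{O,+}(p)\otimes\psi^{O,-}(q)$ and $-\psi^{O,+}(p)\otimes\psi^{O,-}(q)$, and they cancel.

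The main obstacle is strictly bookkeeping: tracking the fermionic anticommutator signs correctly when $\psi^{O,\pm}_l$ is pushed through both factors of $A$, and confirming that the surplus $\delta$-contraction terms assemble cleanly into two compensating tensor products. No structurally new idea is required beyond those used in the SKP case, since the scalar prefactor $(1-pq)(p-q)/p$ absorbs the entire structural difference between $\Gamma^{O}$ and $\Gamma^{Sp}$, and the $\psi^{O,\pm}$ obey the same Clifford anticommutations as $\psi^{Sp,\pm}$; the argument is therefore line-for-line parallel to that of Lemma \ref{SKP tau lemma2}.
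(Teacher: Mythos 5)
Your proposal is correct and follows essentially the same route as the paper, which proves this lemma by appeal to the SKP case, i.e.\ via the factorization $\Gamma^{O}(p,q)=(1-pq)\frac{p-q}{p}\psi^{O,+}(p)\psi^{O,-}(q)$ followed by a direct mode computation against $\widetilde{\Omega}$. You in fact supply the contraction identities and the cancellation that the paper leaves as ``not listed here,'' and both are checked to be accurate.
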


Considering the following function
\begin{eqnarray}\ \label{OKP n solition}
\tau(x,y)=\tau(x,y;p,q,c)=\prod\limits_{i=1}^{n}e^{p_{i}c_{i}\Gamma^{O}(p_{i},q_{i})}\cdot1,
\quad p_{i},q_{i},c_{i}\in \mathbb{C},p_{i}\neq q_{j},p_{i}\neq \frac{1}{q_{j}} for \ i\neq j.
\end{eqnarray}
Let us set
\begin{eqnarray}\
\eta_{i}=\sum\limits_{k\geq 1}(p_{i}^{k}-q_{i}^{k})\frac{p_{k}(x)}{k}.
\end{eqnarray}
From Eq.(\ref{OKP soliton relation}), Eq.(\ref{OKP n solition}) can be rewritten as
\begin{eqnarray}\ \label{OKP 而}
&&\tau(x,y;p,q,c)=\sum_{J \subset I}\left(\prod_{i\in J}c_{i}(1-p_{i}^{2})\right)\left(\prod\limits_{{i,j\in J}\atop{i<j}}A_{ij}\right)\exp\left(\sum\limits_{i\in J}\eta_{i}\right),
\end{eqnarray}
where $I=\{1,2,\ldots,n\}$ .

\begin{proposition} \label{OKP soliton tau}
The function $\tau(x,y;p,q,c)$ in (\ref{OKP 而}) is a solution of the OKP hierarchy, which we call the $n$-soliton solutions.
\end{proposition}

\begin{proof}
Lemma \ref{OKP tau 1}, \ref{OKP tau 2} and Proposition \ref{OKP soliton tau} can be proved with the similar procedure as in Lemma \ref{SKP tau lemma1}, \ref{SKP tau lemma2} and Proposition \ref{SKP soliton tau}.
\end{proof}

\section{Polynomial tau-functions of the BUC hierarchy}

In this section, the quantum fields  of the generalized $Q$-functions shall be developed. By using neutral fermions, we construct an integrable BUC hierarchy characterized by the generalized $Q$-functions. Based upon the generating functions of the polynomial tau-functions of the BUC hierarchy, it is showed that the polynomial tau-function of the BUC hierarchy is a zero mode of certain generating functions.

\subsection{Quantum fields presentation of the generalized $Q$-functions and the BUC hierarchy}
Introduce another class of symmetric functions $q_{k}(x_{1},x_{2},\ldots)$ by
\begin{eqnarray}\
&&Q(u)=\sum\limits_{k\in \mathbb{Z}}q_{k}u^{k}=E(u)H(u),
\end{eqnarray}
here $q_{k}=\sum\limits_{i=0}^{k}e_{i}h_{k-i}$ for $k>0$, $q_{0}=1$ and $q_{k}=0$ for $k<0$.

Define
\begin{eqnarray}\ \label{BUC Q S}
&&Q(u)=S(u)^{2},\quad \text{where} \ S(u)=\exp\left(\sum\limits_{n\in N_{odd}}\frac{p_{n}}{n}u^{n}\right),\notag\\
&&S^{\bot}(u)=\exp\left(\sum\limits_{n\in N_{odd}}\frac{\partial}{\partial p_{n}}\frac{1}{u^{n}}\right),\quad N_{odd}=\{1,3,5,\ldots\}.
\end{eqnarray}

\begin{proposition} \label{BUC generating jiaohuan}
The following commutation relations about generating functions hold (cf.\cite{Rozhkovskaya2019})
\begin{eqnarray}\
&&H^{\bot}(u)Q(v)=\frac{u+v}{u-v}Q(v)H^{\bot}(u),\notag\\
&&E^{\bot}(u)Q(v)=\frac{u+v}{u-v}Q(v)E^{\bot}(u),\notag\\
&&S^{\bot}(u)Q(v)=\frac{u+v}{u-v}Q(v)S^{\bot}(u).
\end{eqnarray}
\end{proposition}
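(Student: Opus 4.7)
The plan is to prove the three identities by one uniform mechanism. Each left-hand side has the form $e^{A}e^{B}$ where $A$ lies in the annihilation subalgebra (linear in $\{\frac{\partial}{\partial p_{n}}\}$) and $B$ in the creation subalgebra (linear in $\{p_{n}\}$), so the commutator $[A,B]$ is a scalar formal Laurent series in $u,v$. Under this condition the Baker--Campbell--Hausdorff identity collapses to
\[
e^{A}e^{B}=e^{[A,B]}\,e^{B}e^{A},
\]
and the entire task reduces to computing $[A,B]$ explicitly.

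First I would rewrite $Q(v)$ in purely bosonic form. From the definitions in (\ref{gf1}) one obtains
\[
Q(v)=E(v)H(v)=\exp\left(\sum_{n\geq 1}\frac{p_{n}}{n}\bigl(1-(-1)^{n}\bigr)v^{n}\right)=\exp\left(\sum_{n\in N_{odd}}\frac{2p_{n}}{n}v^{n}\right),
\]
consistent with (\ref{BUC Q S}). For each of $H^{\bot}(u)$, $E^{\bot}(u)$, $S^{\bot}(u)$ I read off the exponent from (\ref{jiont}) and (\ref{BUC Q S}), namely $\sum_{k\geq 1}u^{-k}\frac{\partial}{\partial p_{k}}$, $\sum_{k\geq 1}(-1)^{k+1}u^{-k}\frac{\partial}{\partial p_{k}}$, and $\sum_{n\in N_{odd}}u^{-n}\frac{\partial}{\partial p_{n}}$ respectively.

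Using the canonical relation $[\frac{\partial}{\partial p_{k}},p_{n}]=\delta_{k,n}$ and noting that only odd indices in the exponent of $Q(v)$ survive, all three commutators $[A,B]$ collapse to the same scalar
\[
[A,B]=2\sum_{n\in N_{odd}}\frac{1}{n}\left(\frac{v}{u}\right)^{n}.
\]
In the $E^{\bot}(u)$ case the sign $(-1)^{k+1}$ equals $+1$ precisely on the surviving odd indices, which is why all three cases yield identical prefactors. Finally I invoke the classical series $\sum_{n\in N_{odd}}\frac{1}{n}x^{n}=\frac{1}{2}\log\frac{1+x}{1-x}$, so that
\[
e^{[A,B]}=\exp\left(\log\frac{1+v/u}{1-v/u}\right)=\frac{u+v}{u-v},
\]
which is exactly the announced prefactor.

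I anticipate no serious obstacle: the whole argument is a single Baker--Campbell--Hausdorff computation in the Heisenberg algebra generated by $\{p_{n},\frac{\partial}{\partial p_{n}}\}$. The only points needing care are verifying that $[A,B]$ is genuinely central, so that BCH produces no higher-order corrections, and tracking the sign pattern $(-1)^{k+1}$ in the $E^{\bot}(u)$ case. Once those are dispatched, the three commutation relations drop out of one and the same calculation.
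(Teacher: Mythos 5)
Your proof is correct: the identification of $Q(v)=\exp\bigl(\sum_{n\in N_{odd}}\tfrac{2p_{n}}{n}v^{n}\bigr)$, the three annihilation-side exponents, the central commutator $[A,B]=2\sum_{n\in N_{odd}}\tfrac{1}{n}(v/u)^{n}$ (with the sign $(-1)^{k+1}$ trivializing on odd indices), and the exponentiation to $\tfrac{u+v}{u-v}$ all check out. The paper itself gives no proof of this proposition — it is quoted from \cite{Rozhkovskaya2019} — and your Baker--Campbell--Hausdorff computation in the Heisenberg algebra is exactly the standard derivation underlying the cited result.
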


Define the formal distributions $\varphi(u)$ and $\overline{\varphi}(u)$ of operators acting on  the boson Fock space $\mathcal{B}_{odd}=\mathbb{C}[p_{1},p_{3},p_{5},\ldots]$
\begin{eqnarray}\
&&\varphi(u)=Q(u)S^{\prime\bot}(-\frac{1}{u})S^{\bot}(-u)=\sum\limits_{j\in\mathbb{Z}}\varphi_{j}u^{-j},\notag\\
&&\overline{\varphi}(u)=Q^{\prime}(u)S^{\bot}(-\frac{1}{u})S^{\prime\bot}(-u)=\sum\limits_{j\in\mathbb{Z}}\overline{\varphi}_{j}u^{-j},
\end{eqnarray}
where $Q^\prime(u)$ means the generating functions for the $q_{k}(\mathbf{y})$ and their adjoint operators hold for the variable $\mathbf{y}$, the operators $\varphi_{j}$ and $\overline{\varphi}_{j}$ are the neutral fermions.

Let
\begin{eqnarray}\
&&f(u,v)=\frac{u-v}{u+v}=1+2\sum\limits_{k\geq1}(-1)^{k}\frac{v^{k}}{u^{k}},\quad |u|>|v|,
\end{eqnarray}
then
\begin{eqnarray}\ \label{BUC 汛}
&&f(u,v)+f(v,u)=(v-u)\delta(v,-u)=2\sum\limits_{k\in\mathbb{Z}}\frac{v^{k}}{(-u)^{k}}=2v\delta(v,-u).
\end{eqnarray}

\begin{proposition}
$\varphi(u)$ and $\overline{\varphi}(u)$ satisfy the following relations
\begin{eqnarray}\ \label{BUC jiaohuan}
&&\varphi(u)\varphi(v)+\varphi(v)\varphi(u)=2v\delta(v,-u),\notag\\
&&\overline{\varphi}(u)\overline{\varphi}(v)+\overline{\varphi}(v)\overline{\varphi}(u)=2v\delta(v,-u),\notag\\
&&\varphi(u)\overline{\varphi}(v)-\overline{\varphi}(v)\varphi(u)=0.
\end{eqnarray}
Eq.(\ref{BUC jiaohuan}) can alos be expressed as neutral fermions relation
\begin{eqnarray}\ \label{BUC fenliang}
&&\varphi_{m}\varphi_{n}+\varphi_{n}\varphi_{m}=2(-1)^{m}\delta_{m+n,0},\notag\\
&&\overline{\varphi}_{m}\overline{\varphi}_{n}+\overline{\varphi}_{n}\overline{\varphi}_{m}=2(-1)^{m}\delta_{m+n,0},\notag\\
&&\varphi_{m}\overline{\varphi}_{n}-\overline{\varphi}_{n}\varphi_{m}=0.
\end{eqnarray}
\end{proposition}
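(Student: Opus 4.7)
My approach is to derive the operator product expansions of the three pairs $(\varphi(u),\varphi(v))$, $(\overline{\varphi}(u),\overline{\varphi}(v))$, and $(\varphi(u),\overline{\varphi}(v))$ directly from the bosonic definitions, and then read off the anticommutation relations via the distributional identity (\ref{BUC ��}). The essential inputs are Proposition \ref{BUC generating jiaohuan}, the fact that operators acting on $\mathbf{x}$-variables commute with operators acting on $\mathbf{y}$-variables, and the odd-$n$ cancellations $S(u)S(-u)=1$, $S^{\bot}(u)S^{\bot}(-u)=1$, $S'^{\bot}(1/u)S'^{\bot}(-1/u)=1$ (all of which follow because the exponential sums in (\ref{BUC Q S}) range only over odd $n$).

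I would first expand $\varphi(u)\varphi(v)$. Since $S'^{\bot}(-1/u)$ acts only on $\mathbf{y}$, it commutes with both $Q(v)$ and $S^{\bot}(-u)$, so the single nontrivial reordering is $S^{\bot}(-u)Q(v)$. Substituting $u\mapsto -u$ into the third line of Proposition \ref{BUC generating jiaohuan} produces the scalar factor $\frac{-u+v}{-u-v}=\frac{u-v}{u+v}=f(u,v)$. After bringing every $Q$ to the left, one obtains
\begin{equation*}
\varphi(u)\varphi(v) = f(u,v)\,{:}\varphi(u)\varphi(v){:},
\end{equation*}
where the normal-ordered kernel $Q(u)Q(v)\,S'^{\bot}(-1/u)\,S'^{\bot}(-1/v)\,S^{\bot}(-u)\,S^{\bot}(-v)$ is manifestly symmetric in $u,v$. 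Swapping $u\leftrightarrow v$ gives $\varphi(v)\varphi(u)=f(v,u)\,{:}\varphi(u)\varphi(v){:}$, so their sum equals $(f(u,v)+f(v,u))\,{:}\varphi(u)\varphi(v){:}=2v\,\delta(v,-u)\,{:}\varphi(u)\varphi(v){:}$ by (\ref{BUC ��}). The delta then localizes the kernel at $v=-u$, and the odd-$n$ cancellations together with $Q(u)=S(u)^2$ force ${:}\varphi(u)\varphi(-u){:}=1$, which yields the first relation of (\ref{BUC jiaohuan}). The analogous identity for $\overline{\varphi}$ follows verbatim by the $\mathbf{x}\leftrightarrow\mathbf{y}$ symmetry of the whole derivation.

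For the mixed pair, $\varphi(u)\overline{\varphi}(v)$ requires only the reordering of $S'^{\bot}(-1/u)$ past $Q'(v)$ (both on $\mathbf{y}$), which contributes the factor $\frac{1-uv}{1+uv}$; the reversed product $\overline{\varphi}(v)\varphi(u)$ requires the corresponding move of $S^{\bot}(-1/v)$ past $Q(u)$ (both on $\mathbf{x}$), producing the same factor. Since the resulting normal-ordered expressions coincide, the difference vanishes, which is the third relation. The mode equations (\ref{BUC fenliang}) then follow by extracting the coefficient of $u^{-m}v^{-n}$ from $2v\,\delta(v,-u)=2\sum_{k\in\mathbb{Z}}v^{k+1}(-u)^{-k-1}$, which equals $2(-1)^m\delta_{m+n,0}$. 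The only real subtlety is that $f(u,v)$ and $f(v,u)$ must be expanded in complementary annuli $|u|>|v|$ and $|v|>|u|$ so that their formal sum is a genuine delta distribution rather than zero; this bookkeeping is already packaged into (\ref{BUC ��}), so the practical content of the proof reduces to the algebraic reordering and the odd-$n$ collapse of the normal-ordered kernel at $v=-u$.
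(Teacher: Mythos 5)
Your proposal is correct and follows essentially the same route as the paper: reorder each product via Proposition \ref{BUC generating jiaohuan}, sum the two orderings to produce $(f(u,v)+f(v,u))=2v\delta(v,-u)$ times a symmetric normal-ordered kernel, and extract modes. You actually supply two details the paper leaves implicit — the odd-$n$ cancellation showing the kernel collapses to $1$ at $v=-u$, and the explicit treatment of the mixed relation (which the paper dismisses as "similar") — so no gap.
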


\begin{proof}
We only prove the first formula of (\ref{BUC jiaohuan}) and (\ref{BUC fenliang}),  other formulas cab be proved similarly. In terms of Proposition \ref{BUC generating jiaohuan} and (\ref{BUC 汛}), we have
\begin{eqnarray}\
\varphi(u)\varphi(v)&=&Q(u)S^{\prime\bot}(-\frac{1}{u})S^{\bot}(-u)Q(v)S^{\prime\bot}(-\frac{1}{v})S^{\bot}(-v)\notag\\
&=&\frac{-u+v}{-u-v}Q(u)Q(v)S^{\prime\bot}(-\frac{1}{u})S^{\bot}(-u)S^{\prime\bot}(-\frac{1}{v})S^{\bot}(-v),\\
\varphi(v)\varphi(u)&=&Q(v)S^{\prime\bot}(-\frac{1}{v})S^{\bot}(-v)Q(u)S^{\prime\bot}(-\frac{1}{u})S^{\bot}(-u)\notag\\
&=&\frac{-v+u}{-v-u}Q(v)Q(u)S^{\prime\bot}(-\frac{1}{v})S^{\bot}(-v)S^{\prime\bot}(-\frac{1}{u})S^{\bot}(-u),
\end{eqnarray}
therefore,
\begin{eqnarray}\
\varphi(u)\varphi(v)+\varphi(v)\varphi(u)&=&\left(\frac{u-v}{u+v}+\frac{v-u}{v+u}\right)Q(u)Q(v)S^{\prime\bot}(-\frac{1}{u})
S^{\bot}(-u)S^{\prime\bot}(-\frac{1}{v})S^{\bot}(-v)\notag\\
&=&(v-u)\delta(v,-u)Q(u)Q(v)S^{\prime\bot}(-\frac{1}{u})S^{\bot}(-u)S^{\prime\bot}(-\frac{1}{v})S^{\bot}(-v)\notag\\
&=&2v\delta(v,-u).
\end{eqnarray}
Expanding  $\varphi(u)\varphi(v)+\varphi(v)\varphi(u)=2v\delta(v,-u)$ into
\begin{eqnarray}
&&\sum\limits_{m\in\mathbb{Z}}\varphi_{m}u^{-m}\sum\limits_{n\in\mathbb{Z}}\varphi_{n}v^{-n}+
\sum\limits_{n\in\mathbb{Z}}\varphi_{n}v^{-n}\sum\limits_{m\in\mathbb{Z}}\varphi_{m}u^{-m}=
2v\sum\limits_{k\in\mathbb{Z}}\frac{v^{k}}{(-u)^{k+1}},
\end{eqnarray}
and taking the coefficient of $u^{-m}v^{-n}$ at both ends of the above formula, we derive the first formula of (\ref{BUC fenliang}).
\end{proof}

\begin{remark} \label{BUC 1 solution}
From the formula (\ref{BUC Q S}), we derive the bosonic form of the quantum fields $\varphi(u)$ and $\overline{\varphi}(u)$ as follows
\begin{eqnarray}
\varphi(u)&=&\exp\left(\sum\limits_{n\in\mathbb{N}_{odd}}\frac{2p_{n}}{n}u^{n}\right)\exp\left(-\sum\limits_{n\in\mathbb{N}_{odd}}
\frac{\partial}{\partial p_{n}^{\prime}}u^{n}\right)\exp\left(-\sum\limits_{n\in\mathbb{N}_{odd}}\frac{\partial}{\partial p_{n}}\frac{1}{u^{n}}\right),\notag\\
\overline{\varphi}(u)&=&\exp\left(\sum\limits_{n\in\mathbb{N}_{odd}}\frac{2p_{n}^{\prime}}{n}u^{n}\right)\exp\left(-\sum\limits_{n\in\mathbb{N}_{odd}}
\frac{\partial}{\partial p_{n}}u^{n}\right)\exp\left(-\sum\limits_{n\in\mathbb{N}_{odd}}\frac{\partial}{\partial p_{n}^{\prime}}\frac{1}{u^{n}}\right).
\end{eqnarray}
Hence one can check that $\varphi_{m}(1)=0(m>0)$, $\varphi_{0}(1)=1$ and $\overline{\varphi}_{n}(1)=0(n>0)$, $\overline{\varphi}_{0}(1)=1$.
\end{remark}

\begin{definition}
The BUC hierarchy is the system of bilinear relations
\begin{eqnarray} \label{BUC bilinear}
&&\Omega(\tau\otimes\tau)=\overline{\Omega}(\tau\otimes\tau)=(\tau\otimes\tau),
\end{eqnarray}
where
\begin{eqnarray}
&&\Omega=\sum\limits_{n\in\mathbb{Z}}\varphi_{n}\otimes(-1)^{n}\varphi_{-n},\ \ \ \
\overline{\Omega}=\sum\limits_{n\in\mathbb{Z}}\overline{\varphi}_{n}\otimes(-1)^{n}\overline{\varphi}_{-n}.
\end{eqnarray}
\end{definition}

From Remark \ref{BUC 1 solution}, it is easy to see that $\tau=1$ is a tau-function of the BUC hierarchy. Similarly, if the solution of (\ref{BUC bilinear}) is  a polynomial function of the variables $(p_{1}, p_{3}, \ldots)$, we say it is a polynomial tau-function. Now we consider other forms of tau-functions of the BUC hierarchy.

\begin{lemma}
Let $X=\sum\limits_{n\geq N}A_{n}\varphi_{n}$, $Y=\sum\limits_{m\geq M}B_{m}\overline{\varphi}_{m}$, where $A_{n},B_{m}\in\mathbb{C}$ and $N,M\in\mathbb{Z}$. Then
\begin{eqnarray} \label{BUC X Y}
&&X^{2}=\left\{
\begin{aligned}
&\sum\limits_{N\leq k\leq -N}(-1)^{k}A_{k}A_{-k},\quad N<0,\\ &A_{0}^{2},\quad \quad \quad  N=0,\\ &0,\quad \quad \quad\ \  N>0.
\end{aligned}
\right. ,\quad \quad
Y^{2}=\left\{
\begin{aligned}
&\sum\limits_{M\leq l\leq -M}(-1)^{l}B_{l}B_{-l},\quad M<0,\\ &B_{0}^{2},\quad \quad \quad  M=0,\\ &0,\quad \quad \quad\ \  M>0.
\end{aligned}
\right.
\end{eqnarray}
\end{lemma}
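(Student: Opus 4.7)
The plan is to compute $X^2$ directly from the neutral-fermion anticommutation relations and then split into the three cases by inspecting which pairs of indices actually survive. First I would write
\begin{equation*}
X^{2}=\sum_{n\geq N}\sum_{m\geq N}A_{n}A_{m}\,\varphi_{n}\varphi_{m},
\end{equation*}
and symmetrize the double sum in $(n,m)$ to obtain
\begin{equation*}
X^{2}=\tfrac{1}{2}\sum_{n,m\geq N}A_{n}A_{m}\bigl(\varphi_{n}\varphi_{m}+\varphi_{m}\varphi_{n}\bigr).
\end{equation*}
This is the key maneuver, because the bracket in the summand is governed by the first relation of (\ref{BUC fenliang}), namely $\varphi_{n}\varphi_{m}+\varphi_{m}\varphi_{n}=2(-1)^{n}\delta_{n+m,0}$.

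Substituting this relation collapses the double sum to the single-index sum
\begin{equation*}
X^{2}=\sum_{\substack{n\geq N\\ -n\geq N}}(-1)^{n}A_{n}A_{-n},
\end{equation*}
so the task reduces to describing the integers satisfying $N\leq n\leq -N$. If $N>0$ this range is empty and $X^{2}=0$; if $N=0$ the only admissible index is $n=0$, giving $X^{2}=A_{0}^{2}$; and if $N<0$ the range $N\leq n\leq -N$ is nonempty and produces $X^{2}=\sum_{N\leq k\leq -N}(-1)^{k}A_{k}A_{-k}$. This matches the three cases displayed in (\ref{BUC X Y}).

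For $Y^{2}$ the argument is word-for-word identical, using the second relation of (\ref{BUC fenliang}), $\overline{\varphi}_{m}\overline{\varphi}_{n}+\overline{\varphi}_{n}\overline{\varphi}_{m}=2(-1)^{m}\delta_{m+n,0}$, in place of the first. There is no real obstacle in the argument; the only point requiring a bit of care is confirming that the symmetrization step and the subsequent index bookkeeping correctly handle the boundary case $N=0$, where $n$ and $-n$ coincide and the factor of $\tfrac{1}{2}$ must cancel against the symmetric double-counting in exactly the right way. Since we never use the third (commuting) relation of (\ref{BUC fenliang}), the mixed cross-terms between $\varphi$'s and $\overline{\varphi}$'s play no role in this lemma.
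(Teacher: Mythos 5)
Your proof is correct. The paper states this lemma without any proof (the proof environment that follows belongs to the next lemma and merely defers to the analogous computation in Kac--Rozhkovskaya--van de Leur), and your symmetrization of the double sum followed by the relation $\varphi_{n}\varphi_{m}+\varphi_{m}\varphi_{n}=2(-1)^{n}\delta_{n+m,0}$ is exactly the intended direct computation; the only cosmetic slip is in your closing caveat: the factor $\tfrac{1}{2}$ cancels against the $2$ in the anticommutator, uniformly in $n$, so there is no special double-counting issue at $N=0$.
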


\begin{lemma} \label{BUC solution jiaohuan}
\begin{eqnarray}
&&\Omega(X\otimes X)=(X\otimes X)\Omega,\quad \quad \quad \Omega(Y\otimes Y)=(Y\otimes Y)\Omega,\notag\\
&&\overline{\Omega}(X\otimes X)=(X\otimes X)\overline{\Omega},\quad \quad \quad \overline{\Omega}(Y\otimes Y)=(Y\otimes Y)\overline{\Omega}.
\end{eqnarray}
\end{lemma}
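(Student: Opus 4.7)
The plan is to split the four identities into two types according to whether the two factors of $X\otimes X$ or $Y\otimes Y$ involve the same fermionic family as $\Omega$ or $\overline{\Omega}$. The mixed identities are the easy ones, and the parallel identities reduce to the same model computation that appears in Lemma \ref{SKP 2}, with the sole new feature that a non-trivial centre term has to be absorbed.

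First I would dispatch the two mixed identities $\Omega(Y\otimes Y)=(Y\otimes Y)\Omega$ and $\overline{\Omega}(X\otimes X)=(X\otimes X)\overline{\Omega}$. Since $\varphi_m\overline{\varphi}_n-\overline{\varphi}_n\varphi_m=0$ by the last line of (\ref{BUC fenliang}), each $\varphi_n$ commutes past any $Y$ and each $\overline{\varphi}_n$ commutes past any $X$. Inserting $X\otimes X$ (respectively $Y\otimes Y$) on the right of $\overline{\Omega}$ (respectively $\Omega$) and moving the operators through term by term yields the claim directly, with no sign change and no boundary contribution.

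The real content is in $\Omega(X\otimes X)=(X\otimes X)\Omega$, and its mirror $\overline{\Omega}(Y\otimes Y)=(Y\otimes Y)\overline{\Omega}$ follows by the same steps with $\varphi\leftrightarrow\overline{\varphi}$. I would start from the anticommutator $\varphi_n X+X\varphi_n=2(-1)^n\widetilde{A}_{-n}$, where $\widetilde{A}_k=A_k$ if $k\geq N$ and $0$ otherwise; this is immediate from the first relation in (\ref{BUC fenliang}). Substituting this into
\begin{equation*}
\Omega(X\otimes X)=\sum_{n\in\mathbb{Z}}\varphi_{n}X\otimes(-1)^{n}\varphi_{-n}X,
\end{equation*}
and expanding the product, I get four types of summands: a pure operator term that regroups as $(X\otimes X)\Omega$; two cross terms of the form $-2\sum_n \widetilde{A}_n\, X\varphi_n\otimes 1$ and $-2\sum_n \widetilde{A}_{-n}\,1\otimes X\varphi_{-n}$, each of which collapses to $-2(X^2\otimes 1)=-2(1\otimes X^2)$ after reindexing $n\mapsto -n$ in the second; and a scalar term $4\sum_n(-1)^n\widetilde{A}_n\widetilde{A}_{-n}$ which, comparing with the explicit formula for $X^2$ recalled in (\ref{BUC X Y}), equals exactly $4X^{2}$.

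The key observation is that by (\ref{BUC X Y}) the quantity $X^{2}$ is a scalar (possibly zero), so $X^{2}\otimes 1=1\otimes X^{2}=X^{2}(1\otimes 1)$; consequently the three correction terms $-2X^{2}-2X^{2}+4X^{2}$ cancel, leaving $\Omega(X\otimes X)=(X\otimes X)\Omega$. The main technical obstacle is bookkeeping the signs $(-1)^{n}$ and $(-1)^{-n}=(-1)^n$ that appear in the tensor product together with the shift $-n$ in the second slot of $\Omega$; once these are tracked carefully, the argument closes. The identity $\overline{\Omega}(Y\otimes Y)=(Y\otimes Y)\overline{\Omega}$ is then obtained by the same calculation with $\varphi_n,\,X,\,\widetilde{A}_k$ replaced by $\overline{\varphi}_n,\,Y,\,\widetilde{B}_k$, using the second anticommutation relation of (\ref{BUC fenliang}) and the fact that $Y^{2}$ is likewise a scalar by (\ref{BUC X Y}).
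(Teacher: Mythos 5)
Your proof is correct, and it supplies in full the computation that the paper itself omits (its ``proof'' of this lemma is only a one-line deferral to \cite{Kac2021}). The argument is the natural extension of the one used for Lemma \ref{SKP 2}: the mixed identities follow from $\varphi_m\overline{\varphi}_n=\overline{\varphi}_n\varphi_m$, and in the parallel case your anticommutator $\varphi_nX+X\varphi_n=2(-1)^n\widetilde{A}_{-n}$ and the identification $\sum_n(-1)^n\widetilde{A}_n\widetilde{A}_{-n}=X^2$ are both right, so the correction terms $-2X^2\otimes 1-2\cdot 1\otimes X^2+4X^2(1\otimes 1)$ do cancel precisely because $X^2$ is a scalar by (\ref{BUC X Y}) --- the one genuinely new feature compared with the SKP case, and you handle it correctly.
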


\begin{proof}
Using the similar approach in \cite{Kac2021}, we can prove the Lemma.
\end{proof}

\begin{corollary}
Let $\tau \in \mathcal{B}_{odd}$ be a tau-function of the BUC hierarchy, and let $X=\sum\limits_{n\geq N}A_{n}\varphi_{n}$, $Y=\sum\limits_{m\geq M}B_{m}\overline{\varphi}_{m}$, where $A_{n},B_{m}\in\mathbb{C}$ and $N,M\in\mathbb{Z}$. Then $\tau^{\prime}=X\tau$ and $\tau^{\prime\prime}=Y\tau$ are also tau-functions of the BUC hierarchy.
\end{corollary}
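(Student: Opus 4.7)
The plan is to deduce this corollary directly from the commutation relations proved in Lemma \ref{BUC solution jiaohuan}, in the same spirit as Corollary \ref{SKP corollary 2} and Corollary \ref{OKP corollary 2}, but with the extra twist that the BUC bilinear equation has a nontrivial right-hand side $\tau\otimes\tau$ rather than $0$. Given that $\tau$ satisfies $\Omega(\tau\otimes\tau)=\tau\otimes\tau$ and $\overline{\Omega}(\tau\otimes\tau)=\tau\otimes\tau$, I must verify the analogous pair of identities for $\tau'=X\tau$ and $\tau''=Y\tau$.

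For $\tau'=X\tau$, the argument I would write is a one-line computation. Applying the commutation $\Omega(X\otimes X)=(X\otimes X)\Omega$ from Lemma \ref{BUC solution jiaohuan}, together with the hypothesis on $\tau$, gives
\begin{equation*}
\Omega(X\tau\otimes X\tau)=\Omega(X\otimes X)(\tau\otimes\tau)=(X\otimes X)\Omega(\tau\otimes\tau)=(X\otimes X)(\tau\otimes\tau)=X\tau\otimes X\tau.
\end{equation*}
The fact that the right-hand side of the BUC bilinear equation is $\tau\otimes\tau$ rather than $0$ is harmless here: the operator $X\otimes X$ simply pulls out on both sides, and passes across $\Omega$ at the cost of no sign or scalar. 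Replacing $\Omega$ by $\overline{\Omega}$ and invoking the companion identity $\overline{\Omega}(X\otimes X)=(X\otimes X)\overline{\Omega}$ verifies the second bilinear relation for $X\tau$.

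For $\tau''=Y\tau$, the proof is word-for-word identical after substituting $Y\otimes Y$ for $X\otimes X$ and using the remaining two commutation relations $\Omega(Y\otimes Y)=(Y\otimes Y)\Omega$ and $\overline{\Omega}(Y\otimes Y)=(Y\otimes Y)\overline{\Omega}$ from Lemma \ref{BUC solution jiaohuan}. Notice that in neither case does one need to mix $X$ with $Y$, so the bosonic/neutral commutation rule $\varphi_m\overline{\varphi}_n=\overline{\varphi}_n\varphi_m$ is not used at this step.

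The only non-obvious point I would address briefly is well-definedness of the sums. Since $\tau\in\mathcal{B}_{odd}=\mathbb{C}[p_1,p_3,\ldots]$ is a polynomial, the bosonic formulas in Remark \ref{BUC 1 solution} show that the annihilation part of $\varphi_n$ (respectively $\overline{\varphi}_n$) eventually kills any fixed polynomial once $n$ is large enough, so only finitely many terms of $X\tau=\sum_{n\geq N}A_n\varphi_n\tau$ contribute, and likewise for $Y\tau$. Hence $X\tau,Y\tau\in\mathcal{B}_{odd}$ and the tensor products are unambiguous. Because all the substantive work was done in Lemma \ref{BUC solution jiaohuan}, there is no real obstacle here; the corollary is a formal consequence, and the proof I would give is essentially the three-line computation displayed above together with its analog for $\overline{\Omega}$ and for $Y$.
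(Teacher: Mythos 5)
Your proposal is correct and follows essentially the same route as the paper, which simply defers to the argument of Corollary \ref{SKP corollary 2}: conjugate $\Omega$ and $\overline{\Omega}$ past $X\otimes X$ (resp.\ $Y\otimes Y$) using Lemma \ref{BUC solution jiaohuan} and use the bilinear identity for $\tau$, the only adjustment being that the right-hand side $\tau\otimes\tau$ transforms into $X\tau\otimes X\tau$ rather than remaining $0$. Your added remark on the finiteness of $X\tau$ and $Y\tau$ is a harmless (and welcome) extra detail not present in the paper.
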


\begin{proof}
The proof method of this Corollary is similar to that of Corollary \ref{SKP corollary 2}, so it will not be described in detail here.
\end{proof}

\subsection{Generating functions and polynomial tau-functions of the BUC hierarchy}

Let $Q(\mathbf{u},\mathbf{v})$ be a generating function of the BUC hierarchy in $(\mathbf{u},\mathbf{v})=(u_{1},\ldots,u_{r},v_{1},\ldots,v_{s})$ defined by
\begin{eqnarray}\
&&Q(\mathbf{u},\mathbf{v})=\prod\limits_{1\leq i<j\leq r}\frac{u_i-u_j}{u_i+u_j}\prod\limits_{1\leq i<j\leq s}\frac{v_{i}-v_{j}}{v_{i}+v_{j}}\prod\limits_{{1\leq i\leq r}\atop{1\leq j\leq s}}\frac{1-u_{i}v_{j}}{1+u_{i}v_{j}}\prod\limits_{i=1}^{r}Q(u_{i})\prod\limits_{j=1}^{s}Q^{\prime}(v_{j}).
\end{eqnarray}
From Proposition \ref{BUC generating jiaohuan} and $S^{\bot}(u)(1)=1$, we have
\begin{eqnarray}\ \label{BUC feimi generating}
&&\varphi(u_{1})\cdots\varphi(u_{r})\overline{\varphi}(v_{1})\cdots\overline{\varphi}(v_{s})(1)=Q(\mathbf{u},\mathbf{v}).
\end{eqnarray}
It is expanded into rational function form $Q(\mathbf{u},\mathbf{v})=\sum\limits_{{\boldsymbol\alpha\in\mathbb{Z}^{r}}\atop{\boldsymbol\beta\in}\mathbb{Z}^{s}}Q_{\boldsymbol\alpha,\boldsymbol\beta}u_{1}^{\alpha_{1}}\cdots u_{r}^{\alpha_{r}}v_{1}^{\beta_{1}}\cdots v_{s}^{\beta_{s}}$, then
\begin{eqnarray}\
&&Q_{\boldsymbol\alpha,\boldsymbol\beta}=\varphi_{-\alpha_{1}}\cdots \varphi_{-\alpha_{r}}\overline{\varphi}_{-\beta_{1}}\cdots \overline{\varphi}_{-\beta_{s}}(1).
\end{eqnarray}

In the following, in order to express the family of generation functions of the BUC hierarchy as a certain Pfaffian, we denote the variables as $\check{\mathbf{u}}=_{def}(u_{1},u_{2},\ldots,u_{r}),\ \check{\mathbf{v}}=_{def}(u_{-1}^{-1},u_{-2}^{-1},\ldots,u_{-s}^{-1})$. Consider the set of  Laurent polynomial $A_{1}(u),\ldots,A_{l}(u)$ and $B_{-s}(u^{-1}),\ldots,B_{-1}(u^{-1})$, define a formal distribution
\begin{eqnarray}\ \label{T A B}
T(\check{\mathbf{u}},\check{\mathbf{v}})&=&\prod\limits_{j=1}^{r}A_{j}(u_{j})\prod\limits_{i=-s}^{-1}B_{i}(u_{i}^{-1})Q(\check{\mathbf{u}},\check{\mathbf{v}})\notag\\
&=&\prod\limits_{j=1}^{r}A_{j}(u_{j})\prod\limits_{i=-s}^{-1}B_{i}(u_{i}^{-1})\prod\limits_{{-s\leq i<j\leq r}\atop{i,j\neq0}}
f(u_{i},u_{j})\prod\limits_{i=-s}^{-1}Q^{\prime}(u_{i}^{-1})\prod\limits_{j=1}^{r}Q(u_{j}).
\end{eqnarray}
For any $\boldsymbol\gamma=(\gamma_{1},\ldots,\gamma_{r},\gamma_{-1},\ldots,\gamma_{-s})\in\mathbb{Z}^{r+s}$, $T_{\boldsymbol\gamma}$ is the coefficient of the following expansion
\begin{eqnarray}\ \label{BUC T r+s}
&&T(\check{\mathbf{u}},\check{\mathbf{v}})=\sum\limits_{\boldsymbol\gamma\in\mathbb{Z}^{r+s}}T_{\boldsymbol\gamma}u_{1}^{\gamma_{1}}\cdots u_{r}^{\gamma_{r}}u_{-1}^{\gamma_{-1}}\cdots u_{-s}^{\gamma_{-s}}.
\end{eqnarray}
We recall that if $A=[a_{ij}]$ is a skew symmetric matrix of even size $2n\times2n$, its determinant is a perfect square: $\det[A]=Pf[A]^{2}$, where
\begin{eqnarray}\
&&Pf[A]=\sum\limits_{\omega}sgn(\omega)a_{\omega(1)\omega(2)}\cdots a_{\omega(2n-1)\omega(2n)},
\end{eqnarray}
summed over $\omega\in S_{2n}$ such that $\omega(2r-1)<\omega(2r)$ for $1\leq r\leq n$, and $\omega(2r-1)<\omega(2r+1)$ for $1\leq r\leq n-1$. In addition, it is well-known that
\begin{eqnarray}\
&&Pf\left[\frac{u_{i}-u_{j}}{u_{i}+u_{j}}\right]_{1\leq i,j\leq2n}=\prod\limits_{1\leq i<j\leq2n}\frac{u_{i}-u_{j}}{u_{i}+u_{j}}.
\end{eqnarray}

Introduce the skew symmetric matrix $F=[f_{i,j}]_{{-2s\leq i,j\leq2r}\atop{i,j\neq0}}$, where
\begin{eqnarray}\
&&f_{i,j}=\left\{
\begin{aligned}
&f(u_{i},u_{j}),\quad\quad\quad i<j,\\ &0,\quad \quad \quad \quad\quad\quad i=j,\\ &-f(u_{j},u_{i}),\quad \ \ i>j.
\end{aligned}
\right.
\end{eqnarray}
Obviously,
\begin{eqnarray}\
&&Pf[F]=\prod\limits_{{-2s\leq i<j\leq2r}\atop{i,j\neq0}}f(u_{i},u_{j})=\prod\limits_{{-2s\leq i<j\leq2r}\atop{i,j\neq0}}\frac{u_{i}-u_{j}}{u_{i}+u_{j}}.
\end{eqnarray}
Define the formal distributions
\begin{eqnarray}\
&&T^{(i)}(u_{i}^{-1})=B_{i}(u_{i}^{-1})Q^{\prime}(u_{i}^{-1}),\quad i\in\{-2s,\ldots,-1\},\notag\\
&&T^{(j)}(u_{j})=A_{j}(u_{j})Q(u_{j}),\quad \quad \quad \ \ j\in\{1,\ldots,2r\},
\end{eqnarray}
and
\begin{eqnarray}\
&&T^{(i,j)}=f_{ij}T^{(i)}T^{(j)}=\sum\limits_{m,n\in\mathbb{Z}}T_{m,n}^{(i,j)}u_{i}^{m}u_{j}^{n},
\end{eqnarray}
where $T^{(i)}$ denotes $T^{(i)}(u_{i}^{-1})$ for negative $i$ and $T^{{i}}(u_{i})$ for positive $i$.

\begin{theorem} \label{BUC theorem}
\begin{itemize}
  \item [1)]The formal distribution $T(u_{1},\ldots,u_{2r},u_{-1}^{-1},\ldots,u_{-2s}^{-1})$ can be expressed as
  \begin{eqnarray}\ \label{BUC T even}
  &&T(u_{1},\ldots,u_{2r},u_{-1}^{-1},\ldots,u_{-2s}^{-1})=Pf[T^{(i,j)}]_{{-2s\leq i,j\leq2r}\atop{i,j\neq0}}.
  \end{eqnarray}
  \item [2)]
  The coefficient $T_{\check{\boldsymbol\gamma}}$ about the expansion in Eq.(\ref{BUC T even}) can be written as
  \begin{eqnarray}\
  &&T_{\breve{\boldsymbol\gamma}}=Pf[T^{(i,j)}_{\gamma_{i},\gamma_{j}}]_{{-2s\leq i,j\leq2r}\atop{i,j\neq0}},
  \end{eqnarray}
  where $\check{\boldsymbol\gamma}=(\gamma_{1},\ldots,\gamma_{2r},\gamma_{-1},\ldots,\gamma_{-2s})\in\mathbb{Z}^{2r+2s}$.
  \item [3)]
  For any $\boldsymbol\gamma=(\gamma_{1},\ldots,\gamma_{r},\gamma_{-1},\ldots,\gamma_{-s})\in\mathbb{Z}^{r+s}$, the coefficient $T_{\boldsymbol\gamma}$ about $u_{1}^{\gamma_{1}}\cdots u_{r}^{\gamma_{r}}u_{-1}^{\gamma_{-1}}\cdots u_{-s}^{\gamma_{-s}}$ in Eq.(\ref{BUC T r+s}) is a polynomial tau-function of the BUC hierarchy.
  \item [4)]
  There is a set of Laurent polynomials $A_{1}(u),\ldots,A_{r}(u),B_{-s}(u^{-1}),\ldots,B_{-1}(u^{-1})$ such that $\tau$ is the zero-mode of the Eq.(\ref{BUC T r+s}) if $\tau$ is a polynomial tau-function of the BUC hierarchy.
\end{itemize}
\end{theorem}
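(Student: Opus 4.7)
The plan is to treat the four parts in order, leaning on the quantum field machinery of Section 5 plus the two Pfaffian identities quoted just before the theorem statement.

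For part 1), I would invoke the standard scaling identity for Pfaffians: if $D=\mathrm{diag}(\alpha_{-2s},\ldots,\alpha_{-1},\alpha_{1},\ldots,\alpha_{2r})$ and $M=[m_{ij}]$ is skew-symmetric of even order, then $\mathrm{Pf}(DMD)=(\det D)\,\mathrm{Pf}(M)$. Applying this with $\alpha_i=T^{(i)}$ (these commute among themselves as formal distributions in different variables) and $M=F=[f_{ij}]$, we obtain
\begin{equation*}
\mathrm{Pf}[T^{(i,j)}]_{\substack{-2s\le i,j\le 2r\\ i,j\ne 0}}=\Bigl(\prod_{\substack{-2s\le i\le 2r\\ i\ne 0}}T^{(i)}\Bigr)\mathrm{Pf}[F].
\end{equation*}
Then the Pfaffian identity $\mathrm{Pf}\!\left[\tfrac{u_i-u_j}{u_i+u_j}\right]=\prod_{i<j}\tfrac{u_i-u_j}{u_i+u_j}$ together with the definition of $T^{(i)}$ matches this against Eq.~(\ref{T A B}) term by term.

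For part 2), I would expand the Pfaffian as a signed sum over fixed-point-free matchings $\omega$ and observe that, since the $T^{(i,j)}$ depend only on the pair of variables $(u_i,u_j)$, the coefficient of $\prod u_i^{\gamma_i}$ in $\prod_r T^{(\omega(2r-1),\omega(2r))}$ is the product $\prod_r T^{(\omega(2r-1),\omega(2r))}_{\gamma_{\omega(2r-1)},\gamma_{\omega(2r)}}$. Summing over $\omega$ repackages this as the stated Pfaffian of scalar coefficients. For part 3), Eq.~(\ref{BUC feimi generating}) gives
\begin{equation*}
T(\check{\mathbf{u}},\check{\mathbf{v}})=\Bigl(\prod_{j=1}^{r}A_{j}(u_{j})\Bigr)\Bigl(\prod_{i=-s}^{-1}B_{i}(u_{i}^{-1})\Bigr)\varphi(u_{1})\cdots\varphi(u_{r})\overline{\varphi}(u_{-1}^{-1})\cdots\overline{\varphi}(u_{-s}^{-1})(1).
\end{equation*}
Extracting the coefficient of $u_{1}^{\gamma_1}\cdots u_{r}^{\gamma_r}u_{-1}^{\gamma_{-1}}\cdots u_{-s}^{\gamma_{-s}}$ turns each $A_j(u_j)\varphi(u_j)$ and $B_i(u_i^{-1})\overline{\varphi}(u_i^{-1})$ into a finite linear combination $X_j=\sum C^{(j)}_n\varphi_n$ or $Y_i=\sum D^{(i)}_m\overline{\varphi}_m$ (finiteness uses that $A_j,B_i$ are Laurent polynomials and that $\varphi_k(1)=\overline{\varphi}_k(1)=0$ for $k>0$ by Remark~\ref{BUC 1 solution}). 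Then $T_{\boldsymbol\gamma}=X_1\cdots X_r Y_{-1}\cdots Y_{-s}(1)$, and iterating the Corollary following Lemma~\ref{BUC solution jiaohuan} starting from the trivial tau-function $\tau=1$ shows $T_{\boldsymbol\gamma}$ is a polynomial tau-function.

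Part 4) is the converse and is where the real work lies. Given a polynomial tau-function $\tau$ of the BUC hierarchy, I would show it lies in the orbit of $1$ under the neutral Clifford group generated by finite linear combinations of the $\varphi_n$ and $\overline{\varphi}_m$; this is the BUC-analogue of the Sato-type fact underlying the earlier Kac--Rozhkovskaya-style results for KP/BKP and should follow by mimicking the fermionic argument used in \cite{Kac2021,Rozhkovskaya2019} with the neutral fermion anticommutation relations (\ref{BUC fenliang}). Writing $\tau=X_1\cdots X_r Y_{-1}\cdots Y_{-s}(1)$ with $X_j=\sum_{n\ge N_j}C^{(j)}_n\varphi_n$ and $Y_i=\sum_{m\ge M_i}D^{(i)}_m\overline{\varphi}_m$ (finitely many nonzero coefficients on each side), I would then reverse-engineer the Laurent polynomials by setting $A_j(u)=\sum_n C^{(j)}_n u^{-n-\gamma_j}$ and analogously for the $B_i$, and read back that $\tau=T_{\boldsymbol\gamma}$ for the corresponding multi-index $\boldsymbol\gamma$. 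The main obstacle is precisely the orbit/transitivity step: verifying that every polynomial solution of the bilinear relations (\ref{BUC bilinear}) is expressible as a finite Clifford word applied to $1$ requires a Wick-style analysis of the bilinear constraints in the neutral fermion formalism, and the coupled pair of equations $\Omega(\tau\otimes\tau)=\overline{\Omega}(\tau\otimes\tau)=\tau\otimes\tau$ (with the two distinct species $\varphi,\overline{\varphi}$) makes this noticeably heavier than the single-fermion BKP case.
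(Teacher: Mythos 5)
Your treatment of parts 1)--3) matches the paper's proof essentially step for step: the paper proves 1) by writing $T$ as $Pf[F]\prod_i T^{(i)}(\cdot)$, expanding the Pfaffian as a signed sum over matchings and absorbing the $T^{(i)}$ factors into the pairs --- which is exactly your diagonal-scaling identity $\mathrm{Pf}(DMD)=(\det D)\,\mathrm{Pf}(M)$ written out longhand; it proves 2) by the same coefficient extraction over matchings; and it proves 3) by precisely your reduction of $T_{\boldsymbol\gamma}$ to $X_1\cdots X_r Y_{-1}\cdots Y_{-s}(1)$ via (\ref{BUC feimi generating}), Remark \ref{BUC 1 solution} and Lemma \ref{BUC solution jiaohuan} (one small point: the finiteness of the sums in (\ref{BUC dengjia 1}) comes solely from $A_j,B_i$ being Laurent polynomials; the vanishing $\varphi_k(1)=\overline{\varphi}_k(1)=0$ for $k>0$ is what makes $1$ a tau-function, not what truncates the sums).

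On part 4), the obstacle you single out --- proving that \emph{every} polynomial tau-function of (\ref{BUC bilinear}) lies in the orbit of $1$ under finite words in the $\varphi_n$ and $\overline{\varphi}_m$ --- is not actually resolved in the paper either. The paper's proof of 4) simply asserts in (\ref{given BUC tau})--(\ref{BUC dengjia 2}) that a polynomial tau-function ``has the form'' $X_1\cdots X_rY_{-1}\cdots Y_{-s}(1)$ indexed by a pair of partitions, and then carries out the same reverse-engineering of the Laurent polynomials $A_i(u)$, $B_i(u^{-1})$ that you describe. So your proposal is not missing anything the paper supplies; it is simply more candid about where the real mathematical content of the converse lies. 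Closing that gap would indeed require the Wick-type orbit/transitivity argument of \cite{Kac2021} adapted to the two commuting families of neutral fermions, which neither you nor the authors carry out.
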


\begin{proof}
\begin{itemize}
  \item [1)]
  A direct calculation gives rise to
  \begin{eqnarray}\
  &&T(u_{1},u_{2},\ldots,u_{2r},u_{-1}^{-1},u_{-2}^{-1},\ldots,u_{-2s}^{-1})\notag\\
  &=&\prod\limits_{j=1}^{2r}A_{j}(u_{j})\prod\limits_{i=-2s}^{-1}B_{i}(u_{i}^{-1})\prod\limits_{{-2s\leq i<j\leq 2r}\atop{i,j\neq0}}
  f(u_{i},u_{j})\prod\limits_{i=-2s}^{-1}Q^{\prime}(u_{i}^{-1})\prod\limits_{j=1}^{2r}Q(u_{j})\notag\\
  &=&Pf[F]\prod\limits_{i=-2s}^{-1}T^{(i)}(u_{i}^{-1})\prod\limits_{j=1}^{2r}T^{(j)}(u_{j})\notag\\
  &=&\sum\limits_{\sigma\in S_{2s+2r}}sgn(\sigma)f_{\sigma(-2s)\sigma(-2s+1)}\cdots f_{\sigma(-2)\sigma(-1)}f_{\sigma(1)\sigma(2)}\cdots f_{\sigma(2r-1)\sigma(2r)}\cdot\notag\\
  &&\prod\limits_{i=-2s}^{-1}T^{(i)}(u_{i}^{-1})\prod\limits_{j=1}^{2r}T^{(j)}(u_{j})\notag\\
  &=&\sum\limits_{\sigma\in S_{2s+2r}}sgn(\sigma)f_{\sigma(-2s)\sigma(-2s+1)}T^{(\sigma(-2s))}T^{(\sigma(-2s+1))}\cdots f_{\sigma(-2)\sigma(-1)}T^{(\sigma(-2))}T^{(\sigma(-1))}\notag\\
  &&f_{\sigma(1)\sigma(2)}T^{(\sigma(1))}T^{(\sigma(2))}\cdots f_{\sigma(2r-1)\sigma(2r)}T^{(\sigma(2r-1))}T^{(\sigma(2r))}\notag\\
  &=&Pf[f_{ij}T^{(i)}T^{(j)}]_{{-2s\leq i,j\leq2r}\atop{i,j\neq0}}=Pf[T^{(i,j)}]_{{-2s\leq i,j\leq2r}\atop{i,j\neq0}}.
  \end{eqnarray}
  \item [2)]
  By the definition of $T^{(i,j)}$, after a straightforward calculation, we obtain
  \begin{eqnarray}\
  &&Pf[T^{(i,j)}]_{{-2s\leq i,j\leq2r}\atop{i,j\neq0}}=Pf\left[\sum\limits_{\gamma_{i},\gamma_{j}}T^{(i,j)}_{\gamma_{i},\gamma_{j}}u_{i}^{\gamma_{i}}
  u_{j}^{\gamma_{j}}\right]_{{-2s\leq i,j\leq2r}\atop{i,j\neq0}}\notag\\
  &=&\sum\limits_{\sigma\in S_{2s+2r}}sgn(\sigma)\sum\limits_{\gamma_{i},\gamma_{j}}T^{(\sigma(-2s),\sigma(-2s+1))}_{\gamma_{\sigma(-2s)},\gamma_{\sigma(-2s+1)}}
  u_{\sigma(-2s)}^{\gamma_{\sigma(-2s)}}u_{\sigma(-2s+1)}^{\gamma_{\sigma(-2s+1)}}\cdots T^{(\sigma(-2),\sigma(-1))}_{\gamma_{\sigma(-2)},\gamma_{\sigma(-1)}}
  u_{\sigma(-2)}^{\gamma_{\sigma(-2)}}u_{\sigma(-1)}^{\gamma_{\sigma(-1)}}\notag\\
  &&T^{(\sigma(1),\sigma(2))}_{\gamma_{\sigma(1)},\gamma_{\sigma(2)}}
  u_{\sigma(1)}^{\gamma_{\sigma(1)}}u_{\sigma(2)}^{\gamma_{\sigma(2)}}\cdots T^{(\sigma(2r-1),\sigma(2r))}_{\gamma_{\sigma(2r-1)},\gamma_{\sigma(2r)}}
  u_{\sigma(2r-1)}^{\gamma_{\sigma(2r-1)}}u_{\sigma(2r)}^{\gamma_{\sigma(2r)}}\notag\\
  &=&\sum\limits_{\boldsymbol\gamma}Pf[T^{(i,j)}_{\gamma_{i},\gamma_{j}}]_{{-2s\leq i,j\leq2r}\atop{i,j\neq0}}u_{1}^{\gamma_{1}}\cdots u_{2r}^{\gamma_{2r}}u_{-1}^{\gamma_{-1}}\cdots u_{-2s}^{\gamma_{-2s}}.
  \end{eqnarray}
  Clearly, the coefficient of $u_{1}^{\gamma_{1}}\cdots u_{2r}^{\gamma_{2r}}u_{-1}^{\gamma_{-1}}\cdots u_{-2s}^{\gamma_{-2s}}$ in Eq.(\ref{BUC T even}) is $Pf[T^{(i,j)}_{\gamma_{i},\gamma_{j}}]_{{-2s\leq i,j\leq2r}\atop{i,j\neq0}}$.
  \item [3)]
  Let $A_{j}(u)=\sum\limits_{M_{j}\leq k\leq N_{j}}A_{j,k}u^{j}$ and $B_{i}(u^{-1})=\sum\limits_{U_{i}\leq m \leq V_{i}}B_{i,m}u^{-m}$ be power series expansions about variable $u$, where $j=1,\ldots,l,i=-s,\ldots,-1$ and $M_{j},N_{j},U_{i},V_{i}\in\mathbb{Z}$. From (\ref{BUC feimi generating}), we can get
   \begin{align}\
   &T(\check{\mathbf{u}},\check{\mathbf{v}})=\prod\limits_{j=1}^{r}A_{j}(u_{j})\prod\limits_{i=-s}^{-1}B_{i}(u_{i}^{-1})\varphi(u_{1})\cdots\varphi(u_{r})
   \overline{\varphi}(u_{-1}^{-1})\cdots \overline{\varphi}(u_{-s}^{-1})(1)\notag\\
   &=\sum\limits_{M_{1}\leq k_{1}\leq N_{1}}\sum\limits_{l_{1}\in\mathbb{Z}}A_{1,k_{1}}\varphi_{l_{1}}u_{1}^{k_{1}-l_{1}}\cdots
   \sum\limits_{M_{r}\leq k_{r}\leq N_{r}}\sum\limits_{l_{r}\in\mathbb{Z}}A_{r,k_{r}}\varphi_{l_{r}}u_{r}^{k_{r}-l_{r}}
   \sum\limits_{U_{-1}\leq m_{-1}\leq V_{-1}}\sum\limits_{n_{-1}\in\mathbb{Z}}B_{-1,m_{-1}}\notag\\
   &\overline{\varphi}_{n_{-1}}u_{-1}^{n_{-1}-m_{-1}}
   \cdots\sum\limits_{U_{-s}\leq m_{-s}\leq V_{-s}}\sum\limits_{n_{-s}\in\mathbb{Z}}B_{-s,m_{-s}}\overline{\varphi}_{n_{-s}}u_{-s}^{n_{-s}-m_{-s}}(1)\notag\\
   &=\sum\limits_{\gamma\in\mathbb{Z}^{r+s}}\sum\limits_{M_{1}-\gamma_{1}\leq l_{1}\leq N_{1}-\gamma_{1}}A_{1,\gamma_{1}+l_{1}}
   \varphi_{l_{1}}u_{1}^{\gamma_{1}}\cdots\sum\limits_{M_{r}-\gamma_{r}\leq l_{r}\leq N_{r}-\gamma_{r}}A_{r,\gamma_{r}+l_{r}}
   \varphi_{l_{r}}u_{r}^{\gamma_{r}}\notag\\
   &\sum\limits_{U_{-1}+\gamma_{-1}\leq n_{-1}\leq V_{-1}+\gamma_{-1}}B_{-1,n_{-1}-\gamma_{-1}}\overline{\varphi}_{n_{-1}}u_{-1}^{\gamma_{-1}}\cdots\sum\limits_{U_{-s}+\gamma_{-s}\leq n_{-s}\leq V_{-s}+\gamma_{-s}}B_{-s,n_{-s}-\gamma_{-s}}\overline{\varphi}_{n_{-s}}u_{-s}^{\gamma_{-s}}(1).\notag\\
   \end{align}
Thus the coefficient $T_{\boldsymbol\gamma}$ of $u_{1}^{\gamma_{1}}\cdots u_{r}^{\gamma_{r}}u_{-1}^{\gamma_{-1}}\cdots u_{-s}^{\gamma_{-s}}$ can be written as $X_{1}\cdots X_{r}Y_{-1}\cdots Y_{-s}(1)$, where
  \begin{eqnarray}\ \label{BUC dengjia 1}
  &&X_{i}=\sum\limits_{M_{i}-\gamma_{i}\leq l_{i}\leq N_{i}-\gamma_{i}}A_{i,\gamma_{i}+l_{i}}\varphi_{l_{i}},\quad i=1,\ldots,r, \notag\\
  &&Y_{i}=\sum\limits_{U_{i}+\gamma_{i}\leq n_{i}\leq V_{i}+\gamma_{i}}B_{i,n_{i}-\gamma_{i}}\overline{\varphi}_{n_{i}},\quad i=-s,\ldots,-1.
  \end{eqnarray}
  By Remark \ref{BUC 1 solution} and Lemma \ref{BUC solution jiaohuan}, we conclude that the coefficient $T_{\boldsymbol\gamma}$ is a tau-function of the BUC hierarchy. $T_{\boldsymbol\gamma}$ is a polynomial tau-function because it is a finite linear combination of  $\varphi_{l_{1}}\cdots\varphi_{l_{r}}\overline{\varphi}_{n_{-1}}\cdots\overline{\varphi}_{n_{-s}}(1)$.
  \item [4)]
  Polynomial tau-function of the BUC hierarchy has the form
  \begin{eqnarray} \label{given BUC tau}
  &&\tau=X_{1}\cdots X_{r}Y_{-1}\cdots Y_{-s}(1),
  \end{eqnarray}
  where $[\lambda,\mu]=[(\lambda_{1},\lambda_{2},\ldots,\lambda_{r}),(-\lambda_{-1},-\lambda_{-2},\ldots,-\lambda_{-s})]$ is a pair of partitions, and
  \begin{eqnarray} \label{BUC dengjia 2}
  &&X_{i}=\sum\limits_{-\lambda_{i}\leq m\leq N_{i}}d_{m,i}\varphi_{m},\quad d_{m,i}\in\mathbb{C},b_{-\lambda_{i},i}\neq0,N_{i}\in\mathbb{Z},i=1,\ldots,r,\notag\\
  &&Y_{i}=\sum\limits_{-\lambda_{i}\leq h\leq V_{i}}e_{h,i}\overline{\varphi}_{h},\quad e_{h,i}\in\mathbb{C},e_{-\lambda_{i},i}\neq0,V_{i}\in\mathbb{Z},i=-s,\ldots,-1.
  \end{eqnarray}
  For a  vector $\boldsymbol\gamma=(\gamma_{1},\ldots,\gamma_{r},\gamma_{-1},\ldots,\gamma_{-s})\in\mathbb{Z}^{r+s}$, we define $A_{i}(u)$ and $B_{i}(u^{-1})$ in the Eq.(\ref{T A B}) to be the Laurent polynomial with the following form
  \begin{eqnarray}
  &&A_{i}(u)=\sum\limits_{\gamma_{i}-\lambda_{i}\leq t\leq N_{i}+\gamma_{i}}d_{t-\gamma_{i},i}u^{t},\quad \quad \quad \quad i=1,\ldots,r,\notag\\
  &&B_{i}(u^{-1})=\sum\limits_{-\gamma_{i}-\lambda_{i}\leq w\leq V_{i}-\gamma_{i}}e_{w+\gamma_{i},i}u^{-w},\quad i=-s,\ldots,-1.
  \end{eqnarray}
  By using $A_{1}(u),\ldots,A_{r}(u)$ and $B_{-s}(u^{-1}),\ldots,B_{-1}(u^{-1})$, it is easy to verify that Eq. (\ref{BUC dengjia 1}) leads to (\ref{BUC dengjia 2}). The coefficient $T_{\boldsymbol\gamma}$ corresponds to the polynomial tau-function (\ref{given BUC tau}). It is showed that $\tau$ is the zero-mode of the series expansion of $T(\check{\mathbf{u}},\check{\mathbf{v}})$ with $\gamma_{1}=\cdots=\gamma_{r}=\gamma_{-1}=\cdots=\gamma_{-s}=0$ .
\end{itemize}
\end{proof}

\begin{corollary}
\begin{itemize}
  \item [1)]We have proved that the polynomial tau-functions of the BUC hierarchy are zero-mode of certain generating functions $T(\check{\mathbf{u}},\check{\mathbf{v}})$. By replacing  $A_{j}(u)$ with $u^{\gamma_{j}}A_{j}(u)\ (j=1,\ldots,r)$ and $B_{i}(u^{-1})$ with $u^{-\gamma_{i}}B_{i}(u^{-1})\ (i=-s,\ldots,-1)$, we derive the any polynomial tau-function as a coefficient of a given monomial $u_{1}^{\gamma_{1}}\cdots u_{r}^{\gamma_{r}}u_{-1}^{\gamma_{-1}}\cdots u_{-s}^{\gamma_{-s}}$.
  \item [2)]Introducing
  \begin{eqnarray}\
  &&A_{j}(u)=h_{j}\sum\limits_{i=0}^{M_{j}}a_{j,i}u^{i},\quad \quad \quad \ M_{j}\in\mathbb{Z},h_{j},a_{j,i}\in\mathbb{C},a_{j,0}=1,j=1,\ldots,r,\notag\\
   &&B_{i}(u^{-1})=g_{i}\sum\limits_{m=0}^{M_{i}}b_{i,m}u^{-m},\quad M_{i}\in\mathbb{Z},g_{i},b_{i,m}\in\mathbb{C},b_{i,0}=1,i=-s,\ldots,-1,
   \end{eqnarray}
   where $A_{1}(u),\ldots A_{r}(u),B_{-1}(u_{-1}),\ldots,B_{-s}(u_{-s})$ are non-zero Laurent series defined in the $T(\check{\mathbf{u}},\check{\mathbf{v}})$. By means of (\ref{Schur Taylor}), we have
   \begin{eqnarray}\
   &&\sum\limits_{i=0}^{M_{j}}a_{j,i}u^{i}=\exp\left(\sum\limits_{s=1}^{\infty}c_{j,s}u^{s}\right),\quad \quad
   \sum\limits_{m=0}^{M_{i}}b_{i,m}u^{-m}=\exp\left(\sum\limits_{l=1}^{\infty}c^{\prime}_{i,l}u^{-l}\right),
  \end{eqnarray}
  and
  \begin{eqnarray}\
  &&a_{j,i}=S_{i}(c_{j,1},c_{j,2},\ldots),\quad \quad b_{i,m}=S_{m}(c^{\prime}_{i,1},c^{\prime}_{i,2},\ldots),
  \end{eqnarray}
  where $c_{j,s}$ and $c^{\prime}_{i,l}$ are constants in $\mathbb{C}$.

  Setting $(\widetilde{x}_{1},\widetilde{x}_{2},\widetilde{x}_{3},\ldots)=(2p_{1},0,\frac{2}{3}p_{3},0,\ldots)$ and $(\widetilde{x}^{\prime}_{1},\widetilde{x}^{\prime}_{2},\widetilde{x}^{\prime}_{3},\ldots)=(2p^{\prime}_{1},0,\frac{2}{3}p^{\prime}_{3},0,\ldots)$, we get
  \begin{eqnarray}\
  T^{(j)}(u_{j})&=&A_{j}(u_{j})Q(u_{j})=h_{j}\exp\left(\sum\limits_{k\geq1}c_{j,k}u_{j}^{k}\right)\exp\left(\sum\limits_{k\geq1}
  \widetilde{x}_{k}u_{j}^{k}\right)\notag\\
  &=&h_{j}\sum\limits_{k=0}^{\infty}
  S_{k}(\widetilde{x}_{1}+c_{j,1},\widetilde{x}_{2}+c_{j,2},\ldots)u_{j}^{k},\quad j=1,\ldots,r,\notag\\
  T^{(i)}(u_{i}^{-1})&=&B_{i}(u_{i}^{-1})Q^{\prime}(u_{i}^{-1})=g_{i}\exp\left(\sum\limits_{l\geq1}c^{\prime}_{i,l}u_{i}^{-l}\right)
  \exp\left(\sum\limits_{l\geq1}\widetilde{x}^{\prime}_{l}u_{i}^{-l}\right)\notag\\
  &=&g_{i}\sum\limits_{l=0}^{\infty}S_{l}(\widetilde{x}^{\prime}_{1}+c^{\prime}_{i,1},\widetilde{x}^{\prime}_{2}+c^{\prime}_{i,2},
  \ldots)u_{i}^{-l},\quad i=-s,\ldots,-1.
  \end{eqnarray}
  Hence, $T^{(i,j)}$ can be written as
  \begin{align}\
  T^{(i,j)}=\left\{
  \begin{aligned}
  &h_{i}h_{j}\left(1+2\sum\limits_{k\geq1}(-1)^{k}\frac{u_{j}^{k}}{u_{i}^{k}}\right)\sum\limits_{m,n\in\mathbb{Z}}S_{m}
  \left(\widetilde{x}+c_{i}\right)S_{n}\left(\widetilde{x}+c_{j}\right)u_{i}^{m}u_{j}^{n}, \quad\quad \ 0<i<j\leq 2r,\\
  &g_{i}g_{j}\left(1+2\sum\limits_{k\geq1}(-1)^{k}\frac{u_{j}^{k}}{u_{i}^{k}}\right)\sum\limits_{m,n\in\mathbb{Z}}S_{m}
  \left(\widetilde{x}^{\prime}+c^{\prime}_{i}\right)S_{n}\left(\widetilde{x}^{\prime}+c^{\prime}_{j}\right)u_{i}^{-m}u_{j}^{-n},\quad
  -2s\leq i<j<0,\\
  &h_{i}g_{j}\left(1+2\sum\limits_{k\geq1}(-1)^{k}\frac{u_{j}^{k}}{u_{i}^{k}}\right)\sum\limits_{m,n\in\mathbb{Z}}S_{m}
  \left(\widetilde{x}^{\prime}+c^{\prime}_{i}\right)S_{n}\left(\widetilde{x}+c_{j}\right)u_{i}^{-m}u_{j}^{n},-2s\leq i<0<j\leq2r.
  \end{aligned}
  \right.
  \end{align}
  Expanding $T^{(i,j)}$, we can obtain the expression of the $T^{(i,j)}_{m,n}$
  \begin{align}
  T^{(i,j)}_{m,n}=\left\{
  \begin{aligned}
  &2h_{i}h_{j}\mathcal{X}^{(1)}_{m,n}\left(\widetilde{x}+c_{i},\widetilde{x}+c_{j}\right),\quad\quad \quad 0<i<j\leq 2r,\\
  &2g_{i}g_{j}\mathcal{X}^{(2)}_{m,n}\left(\widetilde{x}^{\prime}+c^{\prime}_{i},\widetilde{x}^{\prime}+c^{\prime}_{j}\right),\quad
  \quad -2s\leq i<j<0,\\
  &2h_{i}g_{j}\mathcal{X}^{(3)}_{m,n}\left(\widetilde{x}^{\prime}+c^{\prime}_{i},\widetilde{x}+c_{j}\right),\quad -2s\leq i<0<j\leq2r,
  \end{aligned}
  \right.
  \end{align}
  where  $T^{(i,j)}_{m,n}=-T^{(j,i)}_{m,n}$ for $i>j$, $T^{(i,i)}_{m,n}=0$, and
  \begin{align}
  &\mathcal{X}^{(1)}_{m,n}\left(\widetilde{x}+c_{i},\widetilde{x}+c_{j}\right)=\frac{1}{2}S_{m}(\widetilde{x}+c_{i})S_{n}(\widetilde{x}+c_{j})
  +\sum\limits_{k\geq1}(-1)^{k}S_{m+k}(\widetilde{x}+c_{i})S_{n-k}(\widetilde{x}+c_{j}),\notag\\
  &\mathcal{X}^{(2)}_{m,n}\left(\widetilde{x}^{\prime}+c^{\prime}_{i},\widetilde{x}^{\prime}+c^{\prime}_{j}\right)=\frac{1}{2}S_{m}
  (\widetilde{x}^{\prime}+c^{\prime}_{i})S_{n}(\widetilde{x}^{\prime}+c^{\prime}_{j})+\sum\limits_{k\geq1}(-1)^{k}S_{m-k}
  (\widetilde{x}^{\prime}+c^{\prime}_{i})S_{n+k}(\widetilde{x}^{\prime}+c^{\prime}_{j}),\notag\\
  &\mathcal{X}^{(3)}_{m,n}\left(\widetilde{x}^{\prime}+c^{\prime}_{i},\widetilde{x}+c_{j}\right)=\frac{1}{2}S_{m}
  (\widetilde{x}^{\prime}+c^{\prime}_{i})S_{n}(\widetilde{x}+c_{j})++\sum\limits_{k\geq1}(-1)^{k}S_{m-k}
  (\widetilde{x}^{\prime}+c^{\prime}_{i})S_{n-k}(\widetilde{x}+c_{j}).
  \end{align}
  In order to facilitate expression, some new symbols will be  introduced. Define the skew-symmetric matrix $M=(m_{i,j})_{0<i,j\leq 2r}$ by putting each $(i,j)$-th as $m_{i,j}=2h_{i}h_{j}\mathcal{X}^{(1)}_{m,n}\left(\widetilde{x}+c_{i},\widetilde{x}+c_{j}\right)$ for $i<j$ and $m_{i,j}=-m_{j,i}$ for $i>j$, $m_{i,i}=0$. Similarly, define another skew-symmetric matrix $\overline{M}=(\overline{m}_{i,j})_{-2s\leq i,j<0}$, where $\overline{m}_{i,j}=2g_{i}g_{j}\mathcal{X}^{(2)}_{m,n}\left(\widetilde{x}^{\prime}+c^{\prime}_{i},\widetilde{x}^{\prime}+c^{\prime}_{j}\right)$
  for $i<j$. Introduce the third matrix $N=(n_{i,j})_{-2s\leq i<0,0<j\leq2r}$, where $n_{i,j}=2h_{i}g_{j}\mathcal{X}^{(3)}_{m,n}\left(\widetilde{x}^{\prime}+c^{\prime}_{i},\widetilde{x}+c_{j}\right)$.\\
  From Theorem \ref{BUC theorem}, polynomial tau-functions of the BUC hierarchy have the form
  \begin{eqnarray}\
  &&T_{\boldsymbol\gamma}=Pf
  \begin{bmatrix}
  \overline{M} &  N \\
  -N^{T} & M
  \end{bmatrix}_{-2s\leq i,j\leq2r.}
  \end{eqnarray}
\end{itemize}
\end{corollary}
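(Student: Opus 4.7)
The plan divides the corollary into its two claims. For claim (1), I appeal to Theorem~\ref{BUC theorem}(4), which already realizes every polynomial tau-function as the zero-mode of a suitable $T(\check{\mathbf{u}},\check{\mathbf{v}})$. The additional content of (1) is merely that the zero-mode can be replaced by any prescribed $\boldsymbol\gamma$-mode: multiplying $A_j(u)$ by $u^{\gamma_j}$ and $B_i(u^{-1})$ by $u^{-\gamma_i}$ shifts the Laurent expansion indices of those factors by $\gamma_j$ and $\gamma_i$ respectively while leaving the fermionic operator structure of $\varphi(u_j)$ and $\overline{\varphi}(u_i^{-1})$ untouched. Hence the coefficient of $u_1^{\gamma_1}\cdots u_r^{\gamma_r}u_{-1}^{\gamma_{-1}}\cdots u_{-s}^{\gamma_{-s}}$ in the new series equals the zero-mode of the old one, and claim (1) reduces to Theorem~\ref{BUC theorem}(4) after relabelling.

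For claim (2), the plan is computational. First I substitute the prescribed exponential forms for $A_j(u)$ and $B_i(u^{-1})$ together with $Q(u)=\exp\bigl(\sum_{n\in N_{odd}} 2p_n u^n/n\bigr)$, which follows from (\ref{BUC Q S}), into $T^{(j)}(u_j)=A_j(u_j)Q(u_j)$. Using the identification $\widetilde{x}_k=2p_k/k$ for odd $k$ and $\widetilde{x}_k=0$ for even $k$, the generating-function identity (\ref{Schur Taylor}) gives $T^{(j)}(u_j)=h_j\sum_{k\geq 0} S_k(\widetilde{x}+c_j)u_j^k$, and similarly for $T^{(i)}(u_i^{-1})$ with primed data. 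This step recovers the stated Schur-polynomial expressions directly from the elementary power-series identity for the $S_k$.

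Next I expand $f(u_i,u_j)=1+2\sum_{k\geq 1}(-1)^k u_j^k/u_i^k$ and multiply through by $T^{(i)}T^{(j)}$ in each of the three sign patterns. For $0<i<j\le 2r$ a straightforward product of power series in $u_i$ and $u_j$ yields $T^{(i,j)}_{m,n}=2h_ih_j\,\mathcal{X}^{(1)}_{m,n}(\widetilde{x}+c_i,\widetilde{x}+c_j)$ after extracting the coefficient of $u_i^m u_j^n$. For $-2s\le i<j<0$ the same manipulation with $u_i^{-1},u_j^{-1}$ in place of $u_i,u_j$ reverses the direction of the Schur-index shifts and produces $\mathcal{X}^{(2)}$; the mixed case $-2s\le i<0<j\le 2r$ combines primed and unprimed variables and yields $\mathcal{X}^{(3)}$. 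The skew symmetry $T^{(i,j)}_{m,n}=-T^{(j,i)}_{m,n}$ and $T^{(i,i)}_{m,n}=0$ are inherited from the skew symmetry of the matrix $F$ and the vanishing $f_{ii}=0$. Finally, the Pfaffian formula in the statement is obtained by invoking Theorem~\ref{BUC theorem}(2), $T_{\check{\boldsymbol\gamma}}=\mathrm{Pf}[T^{(i,j)}_{\gamma_i,\gamma_j}]_{-2s\le i,j\le 2r,\,i,j\neq 0}$, and partitioning the index set into its negative and positive halves to read off the announced block structure with $\overline{M}$, $M$, $N$ and $-N^T$.

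The main obstacle is the careful bookkeeping in the mixed-index case. When $i<0<j$, the variable $u_i$ enters through $u_i^{-1}$ while $u_j$ enters directly, so each factor $u_j^k/u_i^k$ in the expansion of $f(u_i,u_j)$ must be rewritten compatibly with the monomial $u_i^{-m}u_j^n$ used to index $T^{(i,j)}_{m,n}$. This forces the index shift to become $m-k$ in both Schur arguments of $\mathcal{X}^{(3)}$, rather than the asymmetric $m+k,\,n-k$ pattern appearing in $\mathcal{X}^{(1)}$ and its opposite in $\mathcal{X}^{(2)}$. Once the three sign conventions are aligned consistently, the remainder of the verification is routine expansion.
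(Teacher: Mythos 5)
Your proposal is correct and follows essentially the same route as the paper: part 1) is the index-shift observation reducing to Theorem \ref{BUC theorem}(4), and part 2) is the direct computation — substituting the exponential forms of $A_j$, $B_i$ and $Q=S^2$ into $T^{(j)}$, $T^{(i)}$, invoking the generating identity (\ref{Schur Taylor}) to get Schur polynomials in the shifted arguments $\widetilde{x}+c_j$, expanding $f(u_i,u_j)$ to extract the coefficients $\mathcal{X}^{(1)},\mathcal{X}^{(2)},\mathcal{X}^{(3)}$ with exactly the index-shift patterns you describe, and assembling the block Pfaffian from Theorem \ref{BUC theorem}(2). The paper presents the corollary as this same computation rather than as a separate proof, so there is nothing substantively different to compare.
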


\begin{remark}
It is noted that the polynomial tau-functions of the BUC hierarchy reduce to the  solutions of the BKP hierarchy \cite{Kac2021} with the reduction $\mathbf{y}=0$.
\end{remark}

\section{Conclusions and discussions}

In this paper, we have discussed exact solutions of the SKP, OKP and BUC hierarchies including the polynomial-type and soliton-type solutions. It is showed that the generating functions  play a vital role in establishing the polynomial tau-functions of the integrable systems. Furthermore, we expressed the polynomial tau-functions of the SKP, OKP and BUC hierarchies as determinant and  Pfaffian forms, respectively. The results here are hoped to be helpful for better understanding the essential properties of the SKP, OKP and BUC hierarchies. It is known that symplectic universal character (SUC) and orthogonal universal character (OUC) hierarchies are the extensions of the SKP and OKP hierarchies. However, it should be pointed out that we have not expressed the polynomial tau-functions of the SUC and OUC hierarchies as a perfect determinant form due to the inappropriate quantum fields presentation of SUC and OUC. We will concentrate on studying this interesting question in the near future.

\section{Acknowledgements}
This work is partially supported by National Natural Science Foundation of China (Grant Nos. 11965014 and 12061051) and National Science Foundation of Qinghai Province (Grant No. 2021-ZJ-708).  The authors gratefully acknowledge the support of Professor Ke Wu and  Professor Weizhong Zhao (CNU, China).

\end{document}